\title{Program Specialization as a Tool\\for Solving Word Equations}
\author{Antonina Nepeivoda \\
\institute{Program Systems Institute of Russian Academy of Sciences\thanks{The reported study was partially supported by Russian Academy of Sciences, research project No.~AAAA-A19-119020690043-9.}
\\
  Pereslavl-Zalessky, Russia\\}\email{a\_nevod@mail.ru}
}
\newcommand{\skipper}{\hspace{-7pt}\textcolor{white}{!_{a_a}}}
\newcommand{\upskipper}{\hspace{-1pt}\textcolor{white}{!^!}}
\newcommand{\ie}{{\it i.e.}~}
\newcommand{\eg}{{\it e.g.}~}
\newcommand{\etc}{{\it etc.} }
\newcommand{\st}{{\it s.t.}~}
\newcommand{\wrt}{{\it w.r.t.}~}
\newcommand{\sectionname}{Sec.}
\newcommand{\Prog}{\mathsf{P}}
\newcommand{\SpecTask}{\mathfrak{M}}
\newcommand{\Node}{\mathsf{N}}
\newcommand{\FunGo}{\mathtt{Go}}
\def\LogLang{\rotatebox[origin=c]{5}{${\mathcal{L}}$}\hspace{-1.5pt}}
\def\IntLang{\rotatebox[origin=c]{5}{${\mathcal{F}}$}\hspace{-1pt}}
\newcommand{\FunG}{\mathtt{G}}
\newcommand{\FunH}{\mathtt{H}}
\newcommand{\FunEq}{\mathtt{Main}}
\newcommand{\FunSubst}{\mathtt{Subst}}
\newcommand{\FunSim}{\mathtt{Simplify}}
\newcommand{\FunSort}{\mathtt{CountEq}}
\newcommand{\FunSimPrim}{\mathtt{Reduce}}
\newcommand{\FunSplit}{\mathtt{Split}}
\newcommand{\FunCountMS}{\mathtt{CountMS}}
\newcommand{\FunAreEqual}{\mathtt{AreEqual}}
\newcommand{\FunElMinus}{\mathtt{CheckElement}}
\newcommand{\FunInclude}{\mathtt{Include}}
\newcommand{\FunSubstAll}{\mathtt{SubstAll}}
\newcommand{\FunAddElstoMS}{\mathtt{AddExprMS}}
\newcommand{\FunYieldCheck}{\mathtt{YieldCheck}}
\newcommand{\FunYieldCheckAux}{\mathtt{YieldCheckAux}}
\newcommand{\FunSubjEq}{\mathtt{CheckLengths}}
\newcommand{\FunSubtractEl}{\mathtt{CheckInclusion}}
\newcommand{\FunCheckInfo}{\mathtt{CheckInfo}}
\newcommand{\FunSplitRight}{\mathtt{SplitR}}
\newcommand{\FunCountMinus}{\mathtt{CmpNumbers}}
\newcommand{\Constdef}{\mathbf{CONST}\;}
\newcommand{\Fun}{\mathrm{F}}
\newcommand{\mscp}{\texttt{MSCP-A}}
\newcommand{\scp}{\mathtt{Spec}}
\newcommand{\Code}{\underline}
\newcommand{\conc}{\,}
\newcommand{\longconc}{\hspace{0.2ex}\raisebox{0.2ex}{{{+\hspace*{-0.3ex}+}}}\hspace{0.2ex}}
\newcommand{\MainStep}{\mathtt{Main}}
\newcommand{\lencode}{\llcorner}
\newcommand{\rencode}{\lrcorner}
\newcommand{\wholeencode}[1]{\raisebox{-0.4ex}{$\lencode\hspace{-1ex}\hspace{-0.2pt}\lencode$}\hspace{-0.8ex}#1\hspace{-0.8ex}\raisebox{-0.4ex}{$\rencode\hspace{-1ex}\hspace{-0.2pt}\rencode$}}
\newcommand{\wholeencodespace}[1]{\raisebox{-0.4ex}{$\lencode\hspace{-1ex}\hspace{-0.2pt}\lencode$}\hspace{-0.4ex}#1\hspace{-0.4ex}\raisebox{-0.4ex}{$\rencode\hspace{-1ex}\hspace{-0.2pt}\rencode$}}
\newcommand{\cA}{\mathbf{A}}
\newcommand{\cB}{\mathbf{B}}
\newcommand{\cX}{\mathbf{X}}
\newcommand{\cY}{\mathbf{Y}}
\newcommand{\pu}{\mathit{u}}
\newcommand{\pv}{\mathit{v}}
\newcommand{\svar}{\mathtt{s}}
\newcommand{\codex}{\mathbf{V}}
\newcommand{\vpar}{\mathit{v}}
\newcommand{\varx}{\mathtt{x}}
\newcommand{\vareq}{\mathtt{x}}
\newcommand{\vary}{\mathtt{y}}
\newcommand{\vvarv}{\mathtt{v}}
\newcommand{\varw}{\mathtt{w}}
\newcommand{\varz}{\mathtt{z}}
\newcommand{\vars}{\mathtt{s_{sym}}}
\newcommand{\vareqlist}{\mathtt{x_{val}}}
\newcommand{\varv}{\mathtt{x_{value}}}
\def\kavych{\hspace{0.2ex}\raisebox{-0.2ex}{\rotatebox[origin=c]{13}{${}^{\bm{\prime}}$}}}
\def\smkavych{\hspace{0.1ex}\raisebox{-0.4ex}{\rotatebox[origin=c]{13}{${}^{{\prime}}$}}}
\newcommand{\varres}{\mathtt{x_{result}}}
\newcommand{\varexpr}{\mathtt{x_{expr}}}
\newcommand{\varprev}{\mathtt{x_{prev}}}
\newcommand{\varpref}{\mathtt{xpref}}
\newcommand{\nnumber}{\mathtt{x_{\rm{num}}}}
\newcommand{\newnumber}{\mathtt{x_{\rm{freshnum}}}}
\newcommand{\varsuff}{\mathtt{xsuff}}
\newcommand{\varms}{\mathtt{xms}}
\newcommand{\varcount}{\mathtt{xnum}}
\newcommand{\othereq}{\mathtt{x_{eqs}}}
\newcommand{\varterm}{t}
\newcommand{\aarg}{\rm{Expr}}
\newcommand{\Pat}{\mathrm{P}}
\newcommand{\varr}{\mathtt{x_{rul}}}
\newcommand{\pref}{\mathrm{Pr}}
\newcommand{\suff}{\mathrm{S}}
\def\WeqInt{\mathrm{WI}_{\hspace{-0.2ex}\rotatebox[origin=c]{3}{${\mathcal{L}}$}}}
\def\WeqIntBase{\mathrm{WIBase}_{\hspace{-0.2ex}\rotatebox[origin=c]{3}{${\mathcal{L}}$}}}
\def\WeqIntSplit{\mathrm{WISplit}_{\hspace{-0.2ex}\rotatebox[origin=c]{3}{${\mathcal{L}}$}}}
\def\WeqIntSym{\mathrm{WICount}_{\hspace{-0.2ex}\rotatebox[origin=c]{3}{${\mathcal{L}}$}}}
\newcommand{\SolBase}{\mathrm{AlgWE_{Base}}}
\newcommand{\SolSplit}{\mathrm{AlgWE_{Split}}}
\newcommand{\SolCount}{\mathrm{AlgWE_{Count}}}
\newcommand{\vx}{\mathtt{x}}
\newcommand{\vy}{\mathtt{y}}
\newcommand{\vz}{\mathtt{z}}
\newcommand{\term}{t}
\newcommand{\VarSet}{\mathit{\mathcal{V}\hspace{-2.6pt}a\hspace{-0.5pt}r}}
\newcommand{\ParSet}{\mathit{\mathcal{P}\hspace{-2.4pt}a\hspace{-0.5pt}r}}
\newcommand{\Ground}{\mathit{\mathcal{G}\hspace{-2.3pt}r\hspace{-0.5pt}d}}
\newcommand{\eq}{E}
\newcommand{\Eq}{E}
\newcommand{\seq}{\;{=}\;}
\newcommand{\Eqs}{\mathcal{E}\hspace{-2pt}\mathit{qs}}
\newcommand{\ProcTree}{\rotatebox[origin=c]{10}{$\mathcal{T}$}}
\newcommand{\Alp}{\Sigma}
\newcommand{\CAlph}{\mathcal{A}}
\newcommand{\VAlph}{\mathcal{V}}
\newcommand{\empt}{\varepsilon}
\newcommand{\bottom}{\bot}
\newcommand{\OutTrue}{\mathbf{T}}
\newcommand{\OutFalse}{\mathbf{F}}
\newcommand{\OutNone}{\mathbf{N}}
\newcommand{\OutGreater}{\mathbf{G}}
\newcommand{\OutLesser}{\mathbf{L}}
\newcommand{\rar}{\mapsto}
\newcommand{\mindia}{\rule{0mm}{1.63ex}}    
\newcommand{\ov}[1]{*+[F-:<44pt>]{#1\mindia}}
\newtheorem{Prop}{Proposition}
\newtheorem{Property}{Property}
\newtheorem{Coroll}{Corollary}
\newtheorem{Definition}{Definition}
\newtheorem{Lemma}{Lemma}
\newtheorem{Example}{Example}
\def\logor{\mathrel{\vee}}
\def\logimpl{\mathrel{\Rightarrow}}
\def\logand{\mathrel{\&}}
\newcommand\makebig[2]{%
  \@xp\newcommand\@xp*\csname#1\endcsname{\bBigg@{#2}}%
  \@xp\newcommand\@xp*\csname#1l\endcsname{\@xp\mathopen\csname#1\endcsname}%
  \@xp\newcommand\@xp*\csname#1r\endcsname{\@xp\mathclose\csname#1\endcsname}%
}
\begin{document}
\maketitle

\begin{abstract}
The paper focuses on the automatic generating of the witnesses for the word equation satisfiability problem by means of specializing an~interpreter $\WeqInt\hspace{-0.2ex}\left(\langle\sigma_i\rangle, \Eqs\right)$, which tests whether a composition of variable substitutions $\sigma_i$ of a given word equation system $\Eqs$ produces its solution. We specialize such an interpreter \wrt $\Eqs$, while $\sigma_i$ are unknown. We show that several variants of such interpreters, when specialized using the basic unfold/fold methods, are able to construct the whole solution sets for some classes of the word equations whose left- and right-hand sides share variables. We prove that the~specialization process \wrt the~constructed interpreters gives a simple syntactic criterion of the satisfiability of the equations considered, and show that the suggested approach can solve some equations not solvable by \texttt{Z3str3} and \texttt{CVC4}, the widely-used SMT-solvers.
\end{abstract}

\def\keywords{unfold/fold transformation, specialization, word equations, Nielsen transformation, supercompilation, Jones optimality} 

\section{Introduction}\label{Sect::Intro}

In recent decades, program transformation techniques were applied to verification and analysis of several computational models, including cache-coherence and cryptographic protocols, constrained Horn clauses, Petri nets, deductive databases, control-flow analysis, \etc\hspace{-0.4ex}\cite{Leuschel2008,Pettorossi2019,Gallagher2019,Hamilton15,NemReachability,NepeivodaArt,Vidal12}. On the\hspace{-0.2ex} other hand, the\hspace{-0.2ex} number of works on analysis of string manipulating programs and string constraint solvers is rapidly growing during last years\hspace{-0.2ex}~\cite{Abdulla,Bjorner,Ostrich,Day2019Woorjpe,Le2018,CVC4,Lin2018,Z3,Saxena,Trinh,Yu}. As far as we know, there are few interactions between the two research areas, although some of their methods exploit similar concepts.

One approach to the verification is to apply an~unfold/fold algorithm~\cite{Burst} to  model a~nondeterministic system behaviour by a~deterministic program via introducing an additional path parameter \cite{NemReachability,NemytykhCSR}. That is, given a~specialization algorithm $\scp$ and a non-deterministic program $f(\varx)$, the algorithm $\scp$ solves the specialization task $f'(\vpar,\varx)$ satisfying the condition\footnote{We use the assumption that only the elements $c$ belonging to the function domain are considered, which is expressed by the premise $\exists f(c)$.} $\forall c\left(\exists f(c)\logimpl \exists p\left(f'(p,c)\seq f(c)\right)\hspace{-0.3ex}\right)$. Here the parameter $\vpar$ ranges over the paths determining the ways to compute $f(\varx)$.   

This idea has many applications in computer science. In particular, methods to solve word equations, starting from Matyiasevich's~\cite{Matiyas}, Hmelevskij's~\cite{Hmelevsky}, Makanin's~\cite{Makanin77} algorithms in 1970s, and including Plandowski's~\cite{Plandowski} and Jez's algorithms~\cite{Jez} designed in the recent two decades, all use the~non-deterministic search. A word equation is an equation $\Phi\seq\Psi$, where $\Phi$ and $\Psi$ are finite words in the joint alphabet $\CAlph\cup\VAlph$ of letters and variables, its solution is a substitution $\sigma: \VAlph\rightarrow\CAlph^*$ \st $\Phi\sigma$ is textually equal to $\Psi\sigma$. All the algorithms provide transformation steps, which, applied iteratively to a given equation, generate its solution set. Thus, a (partial) solution tree of the given equation is produced. Some paths of the solution tree may be infinite, and are to be either pruned or represented as loops. For example, given a path in the solution tree and a node along this path, Matyiasevich's algorithm constructs an arc leading to this node from its descendant when the two equations labelling the nodes coincide. \figurename~\ref{fig:twodiags} shows a solution graph for a simple word equation generated by Matyiasevich's algorithm and a graph of states and the state transitions of a~functional program constructed by a basic unfold/fold algorithm. Henceforth, we refer to such graphs as (partial) process trees~\cite{Pettorossi96,Sorensen95}. One may observe that the two graphs coincide modulo the node and arc labels, sharing the general structure, although they are constructed for the different purposes. This paper focuses on the similarity of the methods and aims at adjusting the unfold/fold program transformation algorithm to solving the~word equations.

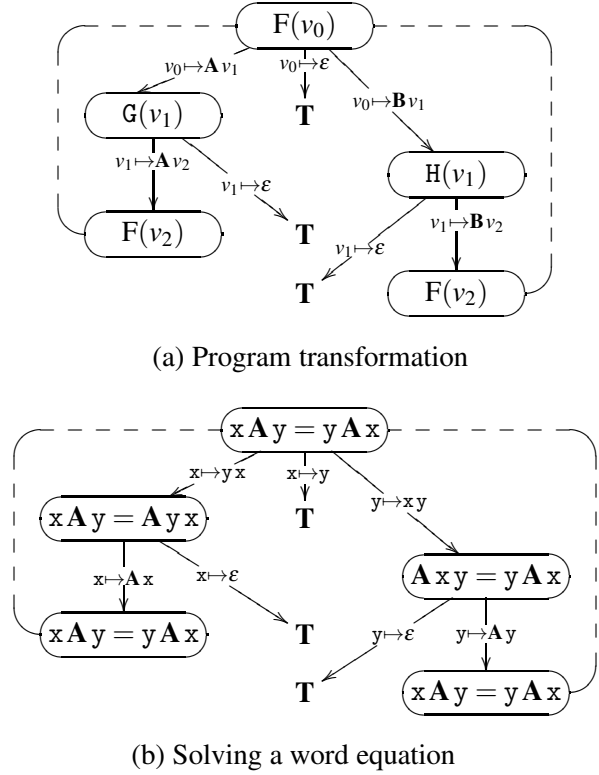
\begin{figure}[t]\centering
\begin{minipage}{0.7\textwidth}
$$
{
\xymatrix @-6.5mm {
&\ov{\quad\Fun(\vpar_0)\quad}\ar[ldd]!<-20pt,3pt>|<(.3){{\skipper\vpar_0\rar \cA\conc\vpar_1\upskipper}}\ar[rddd]|{\upskipper\skipper\vpar_0\rar\cB\conc\vpar_1}\ar[dd]|>(.55){\vpar_0\rar \empt}&&\\
& \\
\ov{\quad\FunG(\vpar_1)\quad}\ar[dd]|>(.48){\vpar_1\rar \cA\conc\vpar_2}\ar[rdd]|<(.48){\upskipper\vpar_1\rar\empt} &*+{\OutTrue}&&\\
&&\ov{\quad\FunH(\vpar_1)\quad}\ar[dd]|>(.48){\upskipper\upskipper\vpar_1\rar \cB\conc\vpar_2}\ar[ddl]|<(.55){\vpar_1\rar\empt}&\\
\ov{\quad\Fun(\vpar_2)\quad}\ar@{--}`l[uuuu]`[uuuurr][uuuur]&*+{\OutTrue}&\\
&*+{\OutTrue}&\ov{\quad\Fun(\vpar_2)\quad}\ar@{--}`r[uuuuu]`[uuuuull][uuuuul]
}}$$

\vspace{4pt}
\centering\;(a) Program transformation
\end{minipage}

\begin{minipage}{0.75\textwidth}
\vspace{10pt}
$$
{
\xymatrix @-6.5mm {
&\ov{\varx\conc\cA\conc\vary\seq\vy\conc\cA\conc\vx}\ar[ldd]|{{\skipper\vx\rar \vy\conc\vx}}\ar[rddd]|{\skipper\vy\rar \vx\conc\vy}\ar[dd]|<(.35){\skipper\vx\rar \vy}&&\\
& \\
\ov{\vx\conc\cA\conc\vy\seq\cA\conc\vy\conc\vx}\ar[dd]|{\upskipper\vx\rar \cA\conc\vx\upskipper}\ar[rdd]|{\skipper\vx\rar\empt} &*+{\OutTrue}&\\
&&\ov{\cA\conc\vx\conc\vy\seq\vy\conc\cA\conc\vx}\ar[dd]|{\vy\rar \cA\conc\vy}\ar[ddl]|{\skipper\vy\rar\empt\upskipper}&\\
\ov{\vx\conc\cA\conc\vy\seq\vy\conc\cA\conc\vx}\ar@{--}`l[uuuu]`[uuuurr][uuuur]&*+{\OutTrue}&\\
&*+{\OutTrue}&\ov{\vx\conc\cA\conc\vy\seq \vy\conc\cA\conc\vx}\ar@{--}`r[uuuuu]`[uuuuull][uuuuul]&
}}$$

\vspace{4pt}
\qquad\qquad\qquad\qquad\quad\; (b) Solving a word equation
\end{minipage}
\caption{Unfold/fold method for solving the word equations and program transformation.}
\label{fig:twodiags}
\end{figure} 

Our contributions are the following. 
\begin{enumerate}
\item We study three interpreters, specialization of which using a~general-purpose tool \wrt a given word equation constructs a~residual program presenting the complete solutions set of this word equation.

\item We prove that an~analogue of Jones-optimality holds for the interpreters considered~\cite{BenAbram,JonesBook}, which guarantees that if the specialization terminates, then the residual programs represent complete solution sets of the given equation systems. Surprisingly, the naive unfolding plus some basic optimisations generate a solution algorithm, for example, for the set of the one-variable word equations, and the algorithm differs from the well-known one given by Hmelevskij~\cite{Hmelevsky}. 

\item We also reveal several other classes of equations sharing variables in right- and left-hand sides, for which the specialization is proved to always terminate. To the best of our knowledge, these classes of equations were not covered by the published works on the string constraint solvers applied to the unbounded-length case. 

\item Finally, we show the application results of the presented approach to the benchmark equation sets developed for the solver \texttt{Woorpje}~\cite{Day2019Woorjpe}, and to a new benchmark of 50 equation systems, and compare the results with the results of the application of the SMT-solvers \texttt{Z3str3} and \texttt{CVC4} to these benchmarks. The systems were generated by the authors of the benchmarks randomly, and the complete solution sets are constructed by our algorithm for the most of the tests considered. Our algorithm is slow, as compared to the algorithms implemented in the SMT-solvers, when used on satisfiable equations because the algorithm finds all the solutions instead of at least one, however for the equations having no solution, our algorithm shows better success rate. 
\end{enumerate}

The remainder of this paper is structured as follows. In \sectionname~\ref{Sect:Language}, we introduce the presentation syntax. We describe the interpreters used in \sectionname~\ref{Sect:WeqInt}, and the verification task in \sectionname~\ref{Sect:VerTask}. The general unfold/fold scheme\footnote{We do not consider the generation of a residual program, since the correctness of the residual programs is provided by the properties of the process graph.} is given in \sectionname~\ref{Sect:SpecScheme}. In \sectionname~\ref{Sect:VerRes}, we discuss the optimality of the specialization and present results of the verification, in particular, for a number of sets of the word equations we show that every equation in any of the sets can be solved via the verification method. \sectionname~\ref{Sect:Discuss} considers related work, and \sectionname~\ref{Sect::Conclusion} concludes the paper. The proofs of the main properties of the presented algorithms are given in Appendix, as well as the source code of the interpreter models used in the specialization.

We assume that the reader is familiar with the basic notions of the program specialization, \st partial evaluation, partial deduction, supercompilation, \etc

\section{Preliminaries}\label{Sect:Language}

We denote the set of the string variables with $\VAlph$, the finite set of the letters with $\CAlph$. We assume that the bold capitals $\cA$, $\cB$ denote the letters in $\CAlph$; while the typographic small letters $\vx$, $\vy$, $\vz$ denote the variables in $\VAlph$. A term is an element of $\CAlph\cup\VAlph$. A word equation is an equation $\Phi\seq\Psi$, where $\Phi,\Psi\in\{\CAlph\cup\VAlph\}^*$. A word equation is said to be \emph{reduced} iff its sides neither start nor end with the same terms~\cite{KarhumCombWo}. 

We write an application of substitution $\xi\colon\VAlph\rightarrow\{\CAlph\cup\VAlph\}^*$ to a word $\Phi$ as $\Phi\xi$. A solution of the~equation $\Phi\seq\Psi$ is a substitution $\xi\colon\VAlph\rightarrow\CAlph^*$ \st $\Phi\xi$ and $\Psi\xi$ coincide textually. Given an equation $\Eq\colon\Phi\seq\Psi$ and substitution $\xi$, $\Eq\xi$ is $\Phi\xi\seq\Psi\xi$, by default, in the reduced form. We denote the number of occurrences of the term $\term$ in $\Phi$ with $|\Phi|_{\term}$. The equation length is $|\Phi|+|\Psi|$, $\empt$ denotes the empty word.

\subsection{Simple Logic Language}
\label{Subsect::LogLang}

Our method is based on an analysis of programs written in a simple logic language $\LogLang$ over the dataset consisting of the word equations. An $\LogLang$ program is a list of narrowings representing the variable substitutions that are to be applied to the word equations. We distinguish between the narrowings in the language $\LogLang$ and the narrowings occurring in the specialization process (\sectionname~\ref{Sect:SpecScheme}). Hence, we call the former $\LogLang$-narrowings, and the latter --- parameter narrowings (or par-narrowings). 

A word equation $\Phi\seq\Psi$ is encoded with a\hspace{-0.2ex} pair $(\Phi,\Psi)$. An\hspace{-0.3ex} $\LogLang$-narrowing is encoded with a\hspace{-0.2ex} string using the\hspace{-0.2ex} sign $\rar$. There are three possible forms of the\hspace{-0.2ex} elementary\hspace{-0.3ex} $\LogLang$-narrowings, corresponding to those used in the\hspace{-0.2ex} Matiyasevich algorithm. We consider these\hspace{-0.3ex} $\LogLang$-narrowing sequences as the\hspace{-0.2ex} programs in a\hspace{-0.2ex} simple acyclic logic programming language\hspace{-0.3ex} $\LogLang$. Thus, every statement in an\hspace{-0.3ex} $\LogLang$-program is an encoded $\LogLang$-narrowing. The syntax of the encoded equations \texttt{Eqs} and of the\hspace{-0.3ex} $\LogLang$-programs \texttt{Narrs} is given in~\figurename~\ref{tab:datasyntax}. 
\begin{figure}[ht]
\centering
\addtolength{\tabcolsep}{-4pt}
{\begin{tabular}{c|c}
Data & Programs \\\hline\small
\begin{tabular}{rllllll}
\textit{Eqs} & ::= & \textit{Eq\,\,Eqs} & $\vert$ & $\skipper\empt$ \\
\textit{Eq} & ::= & \multicolumn{2}{l}{\texttt{(}\textit{Side}\texttt{,} \textit{Side}\texttt{)}} \\
\textit{Side} & ::= & \textit{Char\,\,Side} & $\vert$ & \textit{Var\,Side} & $\vert$ & $\empt$
\end{tabular}
&\small
\begin{tabular}{rllllll}
\textit{Narrs} & ::= & \texttt{(}\textit{Narr}\texttt{)}\,\textit{Narrs} & $\vert$ & $\empt$ \\
\textit{Narr} & ::= & \kavych \textit{Var}\,$\rar$ \textit{Char\,\,Var}\kavych &$\vert$ & \kavych \textit{Var}\,$\rar$\textit{Var}$_1$ \textit{Var}\kavych &$\vert$&\kavych \textit{Var}\,$\rar\empt$\kavych 
\end{tabular}
\end{tabular}
}
\addtolength{\tabcolsep}{4pt}
\vspace{-1ex}
\caption{The syntax of the data and the programs.}
\label{tab:datasyntax}
\end{figure}

\vspace*{-2.5ex}
\noindent Here \textit{Var}, \textit{Var}$_1$\,$\in\VAlph$, \textit{Var}$\,\neq\,$\textit{Var}$_1$, \textit{Char}$\,\in\CAlph$. Let $\langle\Phi_i\seq\Psi_i\rangle_{i=1}^n$ be syntactic sugar for ($\Phi_1$, $\Psi_1$)\dots($\Phi_n$, $\Psi_n$). The $\LogLang$-narrowings sequence $(\sigma_1)\dots(\sigma_m)$ is also written as $\langle\sigma_i\rangle_{i=1}^m$.

\subsection{Interpreters' Source Pseudocode Language}
\label{Subsect::IntLang}

The interpreters considered are written in the following pseudocode for functional programs manipulating the strings and based on the pattern matching. The $\IntLang$ programs are lists of term rewriting rules. The rules in the definitions are applied using the top-down matching order. The syntax of the language $\IntLang$ is given in~\figurename~\ref{tab:syntax}. Here $\empt$ is the empty word, $\longconc$ stands for the associative concatenation constructor (both may be omitted). The~set of the~constants used as the letters\footnote{This set is wider than the set $\CAlph$, \sectionname~\ref{Sect:Language}, because it contains also the letters used in the inner encoding of the equations.} is $\Alp$, elements of which are given in bold, $\CAlph\subset \Alp$. The variables in the $\IntLang$-program rules range either over expressions or over letters. Henceforth we call these variables $\IntLang$-variables or pattern variables, in order to distinguish them from the variables occurring in the word equations.

\begin{figure}[h]
 \small\centering
\addtolength{\tabcolsep}{-3.5pt}
{\begin{tabular}{rlclclclclclc}
\textit{Rule} &::=$\upskipper^{\upskipper}$ & \multicolumn{6}{c}{\textit{FName}\hspace{0.1pt}\texttt{(}\textit{Pattern, \dots, Pattern}\texttt{)}\, =\, \textit{Exp}} \\
\textit{Pattern} &::=$\upskipper^{\upskipper}$ &\hspace*{-20pt}$\empt$ &\hspace*{-2pt}$\vert$&\textit{Variable}&$\vert$&\textit{Letter}&$\vert$&\hspace*{-30pt}\texttt{(}\textit{Pattern}\texttt{)}&
$\vert$&\textit{Pattern} $\longconc$ \textit{Pattern}\\
\textit{Exp} &::=$\upskipper^{\upskipper}$ &\hspace*{-20pt}$\empt$ &\hspace*{-2pt}$\vert$&\textit{Variable}&$\vert$&\textit{Letter}&$\vert$&\hspace*{-30pt}\texttt{(}\textit{Exp}\texttt{)}&$\vert$& \textit{FName}\hspace{0.1pt}\texttt{(}\textit{Exp, \dots, Exp}\texttt{)} &$\vert$&\textit{Exp} $\longconc$ \textit{Exp}\\
\textit{Variable} &::=$\upskipper^{\upskipper}$ &\hspace*{-10pt}$\vx$\textit{Name}&\hspace*{-2pt}$\vert$&\texttt{s}\textit{Name}
\end{tabular}}
\addtolength{\tabcolsep}{3.5pt}
\vspace{-1ex}
\caption{The syntax of the program pseudocode.}
\label{tab:syntax}
\end{figure}

\vspace*{-0.5ex}
\noindent An \emph{object} expression is either a string in $\Alp^*$, concatenation of two object expressions or (\textit{Exp}), where \textit{Exp} is an object expression. The $\IntLang$-variables with the first letter $\vx$ range over the object expressions; the~$\IntLang$-variables with the first letter $\texttt{s}$ range over the symbols in $\Alp$. We denote the set of the $\IntLang$-variables occurring in the expression \textit{Exp} with $\VarSet($\textit{Exp}$)$. Given an $\IntLang$-program, the function $\FunGo$ serves as its entry function. The delimiters \texttt{/\hspace{-0.2ex}*}\dots\texttt{*\hspace{-0.2ex}/} stand for the comments. The~semantics of the~programming language $\IntLang$ is based on the call-by-value evaluation strategy.

\section{Word Equations' Interpreters}\label{Sect:WeqInt}

In this section we introduce informally a class of simple interpreters taking $\LogLang$-programs and applying them to the lists of word equations. Given such an interpreter $\WeqInt$, a program ${\langle\sigma_i\rangle}_{i=1}^m$, and a sequence of equations $\langle\Phi_i\seq\Psi_i\rangle_{i=1}^n$, the call $\WeqInt\hspace{-0.2ex}\left(\langle\sigma_i\rangle_{i=1}^m,\,\langle\Phi_i\seq\Psi_i\rangle^n_{i=1}\right)$ returns $\OutTrue$ iff the following two conditions hold. The notions of the compatibility of an $\LogLang$-narrowing with an equation list and of the operation $\FunSim$ are given further.
\begin{itemize}
\item Every $\LogLang$-narrowing $\sigma_i$, where $1\leq i\leq m$, is compatible with the equation list resulting from the call $\FunSim$($\langle\Phi_i\sigma_1\dots\sigma_{i-1}\seq\Psi_i\sigma_1\dots\sigma_{i-1}\rangle^n_{i=1}$).
\item And for every $i$, $1\leq i\leq n$, $\Phi_i\sigma_1\dots\sigma_m$ textually coincides with $\Psi_i\sigma_1\dots\sigma_m$. 
\end{itemize}

The call $\WeqInt\hspace{-0.2ex}(\empt,\empt)$ returns $\OutTrue$. Otherwise, the call $\WeqInt\hspace{-0.2ex}\left(\langle\sigma_i\rangle_{i=1}^m,\,\langle\Phi_i\seq\Psi_i\rangle_{i=1}^n\right)$ returns $\OutFalse$.   

All the $\LogLang$-interpreters share the same structure: they take the first program statement, apply it to the equation list, and then call a simplification function $\FunSim$ that transforms the current resulting equation list to an equation list with the same solution set, and having a simpler form (see \figurename~\ref{tab:interprgenstr}). The~function $\FunSim$ is the only source part that depends on the concrete interpreter considered.

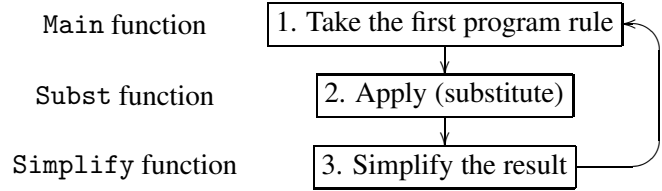
\begin{figure}[ht]
$$
\xymatrix@-4.5mm {
*{\raisebox{-0.7ex}{$\FunEq$}\txt{ function}}&*+[F-]{\txt{1. Take the first program rule}}\ar@{->}[d]& \\
*{\raisebox{-0.7ex}{$\FunSubst$}\txt{ function}}&*+[F-]{\txt{2. Apply (substitute)}}\ar@{->}[d]&\\
*{\raisebox{-0.7ex}{$\FunSim$}\txt{ function}}&*+[F-]{\txt{3. Simplify the result}}\ar@{->}`r[ur]`[uu][uu]
}
\vspace{-1ex}
$$
\caption{General structure of the interpreters.}
\label{tab:interprgenstr}
\end{figure}

We can also say that such an interpreter applies a list of the $\LogLang$-narrowings to an equation, choosing for this purpose a path to one its solution. The solution tree is a tree describing the set of all the solution paths, the tree nodes are labelled with equation lists\footnote{In order to emphasize that the the graph depends on the order of the equations in the equation system, we use ``the list of the equations'' instead of ``the system \dots''.}. Such a tree can be constructed as a (possibly infinite) directed graph representing a non-deterministic unfolding process using the substitutions listed in \figurename~\ref{tab:classicsubstrules}~(a). We see the $\LogLang$-narrowings in the first column. Given an $\LogLang$-narrowing, the second column provides the~constraint imposed on the first equation in the list, required for applying this narrowing. Thus, an $\LogLang$-narrowing $\sigma$ is compatible with the list $\langle\Phi_i\seq\Psi_i\rangle_{i=1}^n$ iff $\sigma$ is generated by the constraint imposed on the first equation in the list $\Phi_1\seq\Psi_1$ given in~\figurename~\ref{tab:classicsubstrules}~(a). Following the classical approach~\cite{KarhumCombWo,Hmelevsky,Matiyas}, no fresh variables and constants are introduced in the~$\LogLang$-narrowing rules.

\begin{figure}[ht]
\begin{minipage}{0.46\textwidth}
\small
\begin{tabular}{llll}
\hline
&\multirow{2}{*}{$\varx\rar\empt$} & \qquad$\varx\conc\Phi\seq\Psi^{\upskipper}$& 
\\ 
&&\qquad or \;$\Phi\seq\varx\conc\Psi$& \vspace*{2pt}\\\hline
&$\varx\rar t\conc\varx$& \qquad$\varx\conc\Phi\seq t\conc\Psi^{\upskipper}$& \\
&($t\in\CAlph$)&\qquad or \;$t\conc\Phi\seq\varx\conc\Psi$& \vspace*{2pt}\\\hline
&\multirow{2}{*}{$\varx\rar\vary\conc\varx$}& \qquad $\varx\conc\Phi\seq\vary\conc\Psi^{\upskipper}$& \\
&&\qquad or \;$\vary\conc\Phi\seq\varx\conc\Psi$& \vspace*{2pt}\\\hline
\end{tabular}

\vspace*{12pt}
\begin{flushleft}
\normalsize(a) The $\LogLang$-narrowings and equations compatible with them.
\end{flushleft}
\end{minipage}
\begin{minipage}{0.53\textwidth}
\vspace*{-25pt}
$${
\xymatrix@-4.5mm {&&&&&\\
&& *++[F-:<44pt>]{\cA\conc\vx\conc\vy\seq \vx\conc\vy\conc\cA}\ar[dl]|{\skipper\vx\rar\empt}\ar@{->}`d[dr]`[ur]|{\upskipper\vx\rar\cA\conc\vx\upskipper}`[l][]&\\
&*++[F-:<44pt>]{\cA\conc\vy\seq\vy\conc\cA}\ar[dl]|<(.35){\skipper\vy\rar\empt}\ar@{->}`d[dr]`[ur]!<-25pt,0pt>|{\upskipper\vy\rar\cA\conc\vy\upskipper}`[l][]& &\\
*+{\OutTrue}&&&
}}
$$

\vspace*{8pt}
\qquad(b) The solution graph of $\cA\conc\vx\conc\vy\seq\vx\conc\vy\conc\cA$.
\end{minipage}
\caption{The narrowings cases and an example of a solution graph.}
\label{tab:classicsubstrules}

\end{figure}
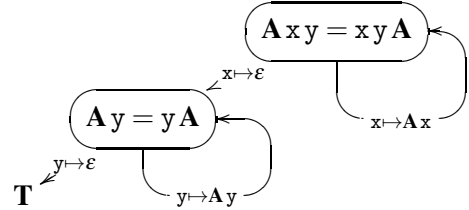

We use a slightly modified version of Matiyasevich's algorithm. The~Nielsen transformation~\cite{Day2017RegOrd}, which is the base of the algorithm, states that given $\varx\conc\Phi\seq\vary\conc\Psi$, we can replace either $\varx$ with $\vary\conc\varx$ if the length of the value of the variable $\varx$ is greater than the length of the value of $\vary$, or vice versa if the length of $\vary$ value is greater, or $\varx$ with $\vary$ if their value lengths are equal. \figurename~\ref{tab:classicsubstrules}~(a) does not show the last case: it is a composition of the substitutions $\varx\rar\vary\conc\varx$ and $\varx\rar\empt$, since we allow the substitution $\varx\rar\empt$ to be compatible with any equation whose left- or right-hand side starts with $\varx$. This modification guarantees that any solution to the equation $\Phi\seq\Psi$ can be generated\footnote{The idea behind the algorithm originated by Matiyasevich is aimed at deciding the solvability of an equation, rather than at constructing the whole set of the solutions.} as a composition of the elementary $\LogLang$-narrowings given in  \figurename~\ref{tab:classicsubstrules}~(a). The following example shows that for the classical version of the algorithm, this statement does not hold.

\begin{Example}\label{Example::Nielsen}
Let $\varx\sigma\seq\cA$, $\vary\sigma\seq\empt$, and the equation $\varx\conc\vary\seq\vary\conc\varx$ be considered. If we use the narrowings $\varx\rar\vary\conc\varx\logor \vary\rar\varx\conc\vary\logor\varx\rar\vary$ provided by the classic form of the Nielsen transformation then the solution generated by $\sigma$ cannot be obtained by any finite number of such narrowings.  
\end{Example}

We assume that an infinite path in a solution tree is to be folded iff it contains nodes $\Node_1$ and $\Node_2$ \st $\Node_1$ is an ancestor of $\Node_2$ and their labels textually coincide. Then the subtree with the~root $\Node_2$ repeats the subtree with the root $\Node_1$, and the infinite path can be represented with a cycle, thus the tree is represented with the graph. Henceforth we use almost interchangeably these two notions. An~example of a graph representing all solutions of the equation $\cA\conc\vx\conc\vy\seq\vx\conc\vy\conc\cA$ is given in  \figurename~\ref{tab:classicsubstrules}~(b). The cyclic arcs in the graph show the folding operation. For the sake of brevity, the nodes along the~folded paths are not shown. 

An~interpreter $\WeqInt\hspace{-0.2ex}\left(\Prog,\langle\eq_1\dots,\eq_n\rangle\right)$ takes a list of the equations $\eq_i$ and a logic program $\Prog$ being a list of the $\LogLang$-narrowings that have to be successively applied to the equations. If the composition of the substitutions given in $\Prog$ transforms all the equations in the list to the tautologies, then $\WeqInt\hspace{-0.2ex}\left(\Prog,\langle\eq_1\dots,\eq_n\rangle\right)$ returns the~value $\OutTrue$. If a substitution is not compatible with the current equations list or the list of substitutions is empty whereas the equations are not tautologies then $\WeqInt\hspace{-0.2ex}\left(\Prog,\langle\eq_1\dots,\eq_n\rangle\right)$ results in the value $\OutFalse$. Given any input equation list, the interpreter does at most $m$ steps shown in~\figurename~\ref{tab:interprgenstr}, where $m$ is the number of the $\LogLang$-narrowings in the list $\Prog$, and always returns either value $\OutTrue$ or value $\OutFalse$. 

The general structure of the simplification functions used in the three interpreters that we consider in this paper is given in~\figurename~\ref{tab:intscheme}. The transformations shown in this figure are also used for constructing the corresponding solution graphs. According to  \figurename~\ref{tab:intscheme} we say that the interpreter $\WeqIntBase$ models paths in a solution tree based on the scheme $\SolBase$; and so on. See  \figurename~\ref{fig:solscheme} for examples of the corresponding solution graphs.

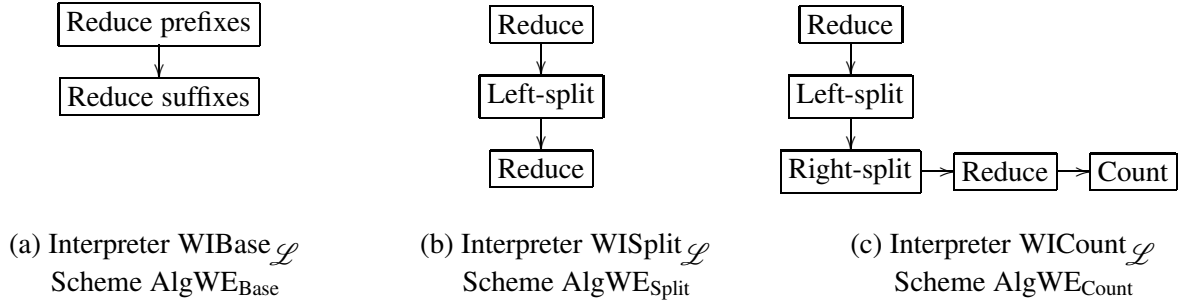
\begin{figure}[t]
\noindent\begin{tabular}{ccc}
$$
\xymatrix@-4mm {
*+[F-]{\txt{Reduce prefixes}}\ar@{->}[d] \\ *+[F-]{\txt{Reduce suffixes}}}
$$
&\quad
$$
\xymatrix@-4mm {
*+[F-]{\txt{Reduce}}\ar@{->}[d] \\ *+[F-]{\txt{Left-split}}\ar[d] \\ *+[F-]{\txt{Reduce}}}
$$\quad
&\quad
$$
\xymatrix@-4mm {
*+[F-]{\txt{Reduce}}\ar@{->}[d] \\ *+[F-]{\txt{Left-split}}\ar[d] \\ *+[F-]{\txt{Right-split}}\ar[r] & *+[F-]{\txt{Reduce}}\ar@{->}[r] & *+[F-]{\txt{Count}}}
$$\quad\\ \\ 
(a) Interpreter $\WeqIntBase$ & \qquad\quad (b) Interpreter $\WeqIntSplit$ & \qquad\quad (c) Interpreter $\WeqIntSym$ \\
\;\;\,Scheme $\SolBase$ & \qquad\quad \;\;\,Scheme $\SolSplit$ & \qquad\quad \;\;\,Scheme $\SolCount$
\end{tabular}
\caption{Schemes for simplifying the word equations.}
\label{tab:intscheme}
\end{figure}

\subsection{Basic Interpreter}\label{subsect:baseint} 
The basic interpreter $\WeqIntBase(\Prog,\Phi\seq\Psi)$ manipulates a single equation and accepts as the first input a list $\Prog$ of the elementary $\LogLang$-narrowings. When an $\LogLang$-narrowing $\sigma$ from the list $\Prog$ is applied to $\Phi\seq\Psi$, the simplification function immediately removes equal prefixes and suffixes of $\Phi\sigma$ and $\Psi\sigma$. Thus, the function $\FunSim$ constructs the reduced form of the equation $(\Phi\seq\Psi)\sigma$. Actually, this basic interpreter models the classic algorithm for solving the word equations suggested by Matiyasevich. We treat the interpreter $\WeqIntBase$ as a base for developing more complex ones.

\subsection{Splitting Interpreter}\label{subsect::splitint}

The interpreter $\WeqIntSplit\hspace{-0.2ex}\left(\Prog,\langle\eq_1,\dots,\eq_n\rangle\right)$ manipulates the lists of equations (representing equation systems) rather than a single equation. Thus, the substitution and simplification functions are applied to every equation in the list. If construction of the~reduced form of some equation in the list results in an~equation $\term_1\conc\Phi_i\seq\term_2\conc\Psi_i$, where $\term_1\in\CAlph$, $\term_2\in\CAlph$, $\term_1\neq\term_2$, the equation is immediately replaced with the trivial contradictory equation $\term_1\seq\term_2$, and all the other equations in the list are removed. The operation looking for trivial contradictions is also a part of the algorithm deriving the reduced form. 

A natural way to manually simplify an equation is to split it using the length argument~\cite{KarhumCombWo}. \textit{E.g}.~given equation $\varx\conc\cA\conc\vary\conc\cB\seq\cA\conc\varx\conc\varx\conc\varx$ we can split it into the list of $\Eq_1: \varx\conc\cA\seq\cA\conc\varx$ and $\Eq_2: \vary\seq\varx\conc\varx$, and the system represented by the list of these two equations has the same solution set as the initial equation.

Let us describe more formally the mentioned method. Given $\Phi$ and $\Psi$ in $\{\CAlph\cup\VAlph\}^+$, \st $|\Phi|\seq |\Psi|$ and for every $\varx\in\VAlph$ the equality $|\Phi|_{\varx}\seq |\Psi|_{\varx}$ holds, we say that the words $\Phi$ and $\Psi$ are \emph{variable-permutated} (briefly, var-permutated). The following proposition is trivial.

\begin{Prop}\label{Prop:split1}

Let an\hspace{-0.2ex} equation be of the\hspace{-0.2ex} form $\pref_1\conc \suff_1 \seq \pref_2\conc \suff_2$, where $\pref_i,\suff_i\in\{\CAlph\cup\VAlph\}^*$. Then the\hspace{-0.2ex} equation is equivalent to the\hspace{-0.2ex} system $\pref_1\seq \pref_2\logand \suff_1\seq \suff_2$ if at least one of these two statements holds.
\begin{enumerate}
\item the prefixes $\pref_1$ and $\pref_2$ are non-empty and var-permutated;
\item the suffixes $\suff_1$ and $\suff_2$ are non-empty and var-permutated.
\end{enumerate}
\end{Prop}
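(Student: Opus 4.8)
The plan is to prove the two implications separately, noting that both are symmetric, so it suffices to handle case~1 (prefixes non-empty and var-permutated); case~2 follows by reading the words right-to-left. First I would observe that the direction ``system implies equation'' is immediate: if $\pref_1\xi$ and $\pref_2\xi$ coincide textually and $\suff_1\xi$ and $\suff_2\xi$ coincide textually, then concatenating gives $\pref_1\conc\suff_1$ and $\pref_2\conc\suff_2$ coinciding under $\xi$, with no hypothesis on var-permutation needed. So the content is entirely in the forward direction: a solution $\xi$ of $\pref_1\conc\suff_1\seq\pref_2\conc\suff_2$ must already split, i.e.\ satisfy $\pref_1\xi=\pref_2\xi$ and $\suff_1\xi=\suff_2\xi$.

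For that direction, the key step is a length argument. Let $\xi\colon\VAlph\rightarrow\CAlph^*$ be any substitution (not necessarily a solution). I claim that $|\pref_1\xi|=|\pref_2\xi|$. Indeed, $|\pref_j\xi| = \sum_{\vs\in\CAlph}|\pref_j|_{\vs} + \sum_{\varx\in\VAlph}|\pref_j|_{\varx}\cdot|\varx\xi|$; since $\pref_1$ and $\pref_2$ are var-permutated, $|\pref_1|=|\pref_2|$ forces the total count of letters-plus-variable-slots to agree, and $|\pref_1|_{\varx}=|\pref_2|_{\varx}$ for every $\varx$ makes the variable contribution agree term by term, hence the constant-letter counts agree too, and therefore $|\pref_1\xi|=|\pref_2\xi|$ for every $\xi$. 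Now suppose $\xi$ is a solution of the original equation, so the words $\pref_1\xi\conc\suff_1\xi$ and $\pref_2\xi\conc\suff_2\xi$ are textually equal as elements of $\CAlph^*$. Two equal words with equal-length prefixes must have those prefixes equal and the remaining suffixes equal: formally, reading off the first $|\pref_1\xi|=|\pref_2\xi|$ letters gives $\pref_1\xi=\pref_2\xi$, and cancelling this common prefix gives $\suff_1\xi=\suff_2\xi$. Hence $\xi$ solves the system.

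I would then remark that the non-emptiness hypothesis is what makes this a genuine simplification step (it prevents the trivial split $\pref_1=\empt$, $\suff_1=\Phi$), but it plays no logical role in the equivalence itself; the proposition as stated is true even without it, which is presumably why the authors call it trivial. The one subtlety worth spelling out is that $|\pref_1|_{\varx}=|\pref_2|_{\varx}$ is needed for \emph{all} substitutions $\xi$ to give matching prefix lengths — a weaker hypothesis such as merely $|\pref_1|=|\pref_2|$ would not suffice, as the prefix lengths would then depend on the chosen values of the variables and could differ.

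The main obstacle, such as it is, is purely bookkeeping: making the length computation $|\pref_j\xi| = \sum_{\vs\in\CAlph}|\pref_j|_{\vs} + \sum_{\varx\in\VAlph}|\pref_j|_{\varx}\cdot|\varx\xi|$ precise and justifying the word-theoretic cancellation lemma (equal words with equal-length prefixes have equal prefixes). Neither is deep; the whole proof is a few lines. Given the authors' own assessment (``The following proposition is trivial''), I expect their proof to be a one- or two-sentence remark along exactly these lines, possibly omitting the symmetric case entirely.
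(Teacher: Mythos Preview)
Your proof is correct, and the length argument you give is exactly the natural one. The paper itself provides no proof whatsoever: it simply states ``The following proposition is trivial'' and moves on, so there is nothing to compare against beyond noting that your argument is the obvious intended justification.
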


\noindent If Proposition~\ref{Prop:split1} is applied to the prefixes of an equation we say that the equation is left-split; if it is applied to the suffixes we say the equation is right-split.

The simplification function of the interpreter $\WeqIntSplit$ uses Proposition~\ref{Prop:split1} \wrt the var-permutated prefixes. Given an equation list $\langle\eq_1,\dots,\eq_n\rangle$, let the substitution $\sigma$ be applied. Then the function $\FunSim$ first reduces all the equations in the list $\langle\eq_1\sigma,\dots,\eq_n\sigma\rangle$. For every resulting equation $\eq'_i$, the simplification algorithm tries then to find the shortest non-empty var-permutated prefixes $\pref_{1,i,1}$ and $\pref_{2,i,1}$ of its left- and right-hand sides. If that succeeds, the equation $\eq'_i$ is split into the two equations $\pref_{1,i,1}\seq \pref_{2,i,1}$ and $\suff_{1,i,1}\seq \suff_{2,i,1}$. Here the third index is the number of the splitting iterations. Then the simplification algorithm construct the reduced form of the equation $\suff_{1,i,1}\seq \suff_{2,i,1}$, and tries to left-split it, \etc until $\suff_{1,i,j}$ and $\suff_{2,i,j}$ have no non-empty var-permutated prefixes (see \figurename~\ref{tab:intscheme}~(b)). The initial equation $\eq'_i$ is replaced with the generated equations resulting from $k$ successful left-split operations. The equations in the updated list are ordered as follows: $\suff_{1,i,k}\seq\suff_{2,i,k},\pref_{1,i,1}\seq\pref_{2,i,1},\dots,\pref_{1,i,k}\seq\pref_{2,i,k}$. The simplification function used in $\WeqIntSplit$ does not change the~resulting equations $\pref_{1,i,k}\seq\pref_{2,i,k}$, since they are in the reduced form by the construction.
 
\begin{Example}

Given $\sigma:\varx\rar\empt$ and the list $\langle\Eq_1,\Eq_2\rangle\seq\langle\varx\conc\varz\conc\varx\conc\cB\conc\vary\seq\cA\conc\varz\conc\varz,\,\vary\conc\varx\conc\cB\conc\varz\conc\vary\seq\cA\conc\vary\conc\varz\conc\varz\conc\varz\rangle$, the equation $\Eq_1\sigma$ is $\varz\conc\cB\conc\vary\seq\cA\conc\varz\conc\varz$, thus it is split into the equations $\varz\conc\cB\seq\cA\conc\varz$ and $\vary\seq\varz$, which replace $\Eq_1\sigma$ in the order $\langle\vary\seq\varz,\,\varz\conc\cB\seq\cA\conc\varz\rangle$,  where the non-var-permutated suffixes are written first. The equation $\Eq_2\sigma: \vary\conc\cB\conc\varz\conc\vary\seq\cA\conc\vary\conc\varz\conc\varz\conc\varz$ is first split into the equations: $\vary\conc\cB\seq\cA\conc\vary$ and $\varz\conc\vary\seq\varz\conc\varz\conc\varz$. The second equation is then transformed to the reduced form $\vary\seq\varz\conc\varz$ and cannot be split any more. These two equations replace the equation $\Eq_2\sigma$ in the following order: $\langle\vary\seq\varz\conc\varz,\,\vary\conc\cB\seq\cA\conc\vary\rangle$. The resulted list of the simplified equations is $\langle\vary\seq\varz,\,\varz\conc\cB\seq\cA\conc\varz,\,\vary\seq\varz\conc\varz,\,\vary\conc\cB\seq\cA\conc\vary\rangle$.

\end{Example} 
 
\subsection{Counting Interpreter}

The third variant of our interpreter uses the following well-known simple observation.
\begin{Prop}\label{Prop:count1}
Given an equation $\Phi\seq\Psi$, let for every $\varx\in\VAlph$ $|\Phi|_{\varx}\geq |\Psi|_{\varx}$, and $\displaystyle\sum_{\term_i\in\CAlph} |\Phi|_{\term_i}>\displaystyle\sum_{\term_i\in\CAlph} |\Psi|_{\term_i}$. 

\vspace{-2ex}
\noindent Then the equation $\Phi\seq\Psi$ has no solution.
\end{Prop}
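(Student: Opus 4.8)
The plan is a direct counting argument on word lengths, which makes no use of the structure of the equation beyond the occurrence counts. Suppose, for contradiction, that $\xi\colon\VAlph\to\CAlph^*$ is a solution, so that $\Phi\xi$ and $\Psi\xi$ are textually equal, and in particular $|\Phi\xi| = |\Psi\xi|$.

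First I would record the length-decomposition identity: since applying $\xi$ replaces each occurrence of a variable $\varx$ by the word $\varx\xi$ and concatenation is length-additive, for any word $\Theta\in\{\CAlph\cup\VAlph\}^*$ one has
$$|\Theta\xi| \;=\; \sum_{\term\in\CAlph}|\Theta|_{\term} \;+\; \sum_{\varx\in\VAlph}|\Theta|_{\varx}\cdot|\varx\xi|,$$
which is immediate by induction on $|\Theta|$ (each letter of $\Theta$ contributes $1$ to the length of $\Theta\xi$, and each variable occurrence $\varx$ contributes $|\varx\xi|$). Instantiating this at $\Theta = \Phi$ and $\Theta = \Psi$ and using $|\Phi\xi| = |\Psi\xi|$ gives, after rearranging,
$$\sum_{\term\in\CAlph}|\Phi|_{\term} \;-\; \sum_{\term\in\CAlph}|\Psi|_{\term} \;=\; \sum_{\varx\in\VAlph}\bigl(|\Psi|_{\varx}-|\Phi|_{\varx}\bigr)\cdot|\varx\xi|.$$
By the first hypothesis each coefficient $|\Psi|_{\varx}-|\Phi|_{\varx}$ is $\leq 0$ and each $|\varx\xi|$ is $\geq 0$, so the right-hand side is $\leq 0$; by the second hypothesis the left-hand side is strictly positive. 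This contradiction finishes the argument.

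There is no genuine obstacle here: the proof is entirely routine, and the only step worth a sentence is the length-decomposition identity, which follows directly from the definition of substitution application. I would also remark that the hypothesis $\xi(\varx)\in\CAlph^*$ is not really needed for the counting itself — only the textual equality, hence equal length, of $\Phi\xi$ and $\Psi\xi$ is used — so the same argument in fact rules out any solution over $\{\CAlph\cup\VAlph\}^*$ as well.
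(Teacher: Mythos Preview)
Your argument is correct and complete. The paper does not actually give a proof of this proposition: it introduces it as ``the following well-known simple observation'' and states it without proof, so your length-counting argument is exactly the kind of routine justification the paper omits.
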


After constructing the reduced form of the equations, the simplification function of the interpreter $\WeqIntSym$ tries to construct their left-splits (as $\WeqIntSplit$ does), and then to construct the right-splits of their suffixes resulted from the left-splits. Finally, $\WeqIntSym$ checks the property stated in Proposition~\ref{Prop:count1} of the resulting equations with non-var-permutated sides. The interpreter $\WeqIntSym$ has been used in the most benchmark tests (Section~\ref{Sect:VerRes}). 

\begin{figure}[ht]
$$
{\xymatrix@-3mm {
&&\ov{\varx\conc\varx\conc\cA\conc\vary\conc\cB\conc\varz\seq\cA\conc\varx\conc\varx\conc\varz\conc\vary}\ar[dl]|{\skipper\varx\rar\empt}
\ar[dr]|{\skipper\varx\rar\cA\conc\varx}
\\
&\ov{\vary\conc\cB\conc\varz\seq\varz\conc\vary}\ar[d]|<(.3){\skipper\varz\rar\vary\conc\varz}
\ar@{->}`l[ul]+<10pt,-0pt>`[ur]+<10pt,-10pt>|{\upskipper\vy\rar\vz\conc\vy\upskipper}`[d][] &&\ov{\begin{array}{c}\varx\conc\cA\conc\varx\conc\cA\conc\vary\conc\cB\conc\varz \\ 
\seq\cA\conc\varx\conc\cA\conc\varx\conc\varz\conc\vary\end{array}}
\ar[dl]|{\skipper\vx\rar\empt}\ar[d]|{\skipper\vx\rar\cA\conc\vx}
\\
&\ov{\cB\conc\vary\conc\varz\seq\varz\conc\vary}\ar@{->}@/^30pt/[]!<5pt,-15pt>;[u]!<-5pt,15pt>|{\skipper\varz\rar\cB\conc\varz}&\ov{\vary\conc\cB\conc\varz\seq\varz\conc\vary}\ar@{..}@/^5pt/[ul]&\ov{\begin{array}{c}\varx\conc\cA\conc\cA\conc\varx\conc\cA\conc\vary\conc\cB\conc\varz \\ 
\seq\cA\conc\varx\conc\cA\conc\cA\conc\varx\conc\varz\conc\vary\end{array}}
\ar[dl]|<(.3){\skipper\vx\rar\empt}\ar[d]|<(.3){\skipper\vx\rar\cA\conc\vx}
\\
&&\ov{\vary\conc\cB\conc\varz\seq\varz\conc\vary}\ar@{..}@/^5pt/[uul]&*++{\dots}
\\
&&*{{\txt{(a) Algorithm\;AlgWE}_{\textrm{Base}}}}
}}
$$
\vspace{20pt}
$$
{\begin{array}{cc}
\xymatrix@-3mm {
&\ov{\varx\conc\varx\conc\cA\conc\vary\conc\cB\conc\varz\seq\cA\conc\varx\conc\varx\conc\varz\conc\vary}
\ar@{=}[d]&
\\
&\ov{\begin{array}{c}\vary\conc\cB\conc\varz\seq\varz\conc\vary
\\ \varx\conc\varx\conc\cA\seq\cA\conc\varx\conc\varx\end{array}}\ar[d]|{\skipper\varz\rar\vary\conc\varz}
\ar@{->}`l[ul]+<20pt,-10pt>`[ur]+<5pt,-12.5pt>|{\upskipper\vy\rar\vz\conc\vy\upskipper}`[d]+<-10pt,-20pt>[] 
\\
&\ov{\begin{array}{c}\cB\conc\vary\conc\varz\seq\varz\conc\vary\\
\varx\conc\varx\conc\cA\seq\cA\conc\varx\conc\varx\end{array}}\ar@/^45pt/[]!<5pt,-40pt>;[u]!<-5pt,35pt>|<(.25){\skipper\varz\rar\cB\conc\varz}\\
&*{\txt{(b) Algorithm AlgWE}_{\textrm{Split}}}
}
&
\xymatrix@-3mm {
\\
&\ov{\varx\conc\varx\conc\cA\conc\vary\conc\cB\conc\varz\seq\cA\conc\varx\conc\varx\conc\varz\conc\vary}
\ar@{=}[d]&
\\
&\ov{\begin{array}{c}\bottom
\\ \varx\conc\varx\conc\cA\seq\cA\conc\varx\conc\varx\end{array}}
\\
&\save[]+<0pt,-18pt>*{\txt{(c) Algorithm AlgWE}_{\textrm{Count}}}\restore
}
\end{array}}
\vspace{-2ex}
$$
\caption{Solution graphs for $\varx\conc\varx\conc\cA\conc\vy\conc\cB\conc\vz\seq\cA\conc\varx\conc\varx\conc\vz\conc\vy.$}
\label{fig:solscheme}
\end{figure}

 \figurename~\ref{fig:solscheme} demonstrates the difference between the simplification algorithms used in the presented interpreters. The~dotted edges in the graph constructed using the algorithm $\SolBase$ show the equality of the node labels; but the paths are not folded here, because the equal nodes are not along the same path. The edges given in the double lines show the splits. The sign $\bot$ denotes the contradiction. Every solution of the equation corresponds to a non-empty set\footnote{The set may be infinite, for example the solution $\varx\seq\empt\logand \vary\seq\empt$ of the equation $\varx\conc\vary\seq\vary\conc\varx$ may be a result of the composition of the elementary substitutions $\varx\rar\vary\conc\varx,\,\varx\rar\empt,\,\vary\rar\empt$; $\varx\rar\vary\conc\varx,\,\varx\rar\vary\conc\varx,\,\varx\rar\empt,\,\vary\rar\empt$; \etc In such a case, the solutions are always resulting from concatenations not increasing these solutions' lengths.} of paths rooted in the initial node of its solution tree and ending at a leaf labelled by $\OutTrue$. Thus, if the solution graph of the equation does not have $\OutTrue$-leaves, then the equation solution set is empty. The graph constructed using the algorithm $\SolBase$ is infinite, whereas the other two graphs show that the equation $\varx\conc\varx\conc\cA\conc\vy\conc\cB\conc\vz\seq\cA\conc\varx\conc\varx\conc\vz\conc\vy$ has no solution.
\newpage

\section{Verification Task}\label{Sect:VerTask}

We use the notion of a parameter (\ie a dynamic variable) for a datum which is already given, but it is unknown to us; while a variable value is undefined and is to be assigned. The parameter values are used in this paper in order to represent possible paths in the solution tree. Thus, if the $\LogLang$-program $\Prog$ in $\WeqInt\hspace{-0.2ex}\left(\Prog,\langle\eq_1\dots,\eq_n\rangle\right)$ is replaced with a~parameter, then the stepwise unfolding of this call generates all the possible programs (\ie the $\LogLang$-narrowings' lists) that are compatible with the equation list. The unfolding stops when either the equation list is empty or no $\LogLang$-narrowing compatible with the current equation list is found. Henceforth the letters $\pu$ and $\pv$, maybe subscripted, stand for the parameters. 

Below we use the underlining sign to show encoded structures of the program to be specialized. Given an $\IntLang$-program transformation tool $\scp$, a list of the word equations $\Eqs$ and encoded sources $\Code{\WeqInt}$ of an interpreter $\WeqInt$, we consider the following specialization task.

$$\SpecTask(\WeqInt,\Eqs)\triangleq \scp\hspace{-0.2ex}\left(\Code{\WeqInt}, \Code{\FunGo(}\vpar, \Code{\wholeencode{\hspace{0.5ex}\Eqs\hspace{0.5ex}})}\right),$$

\noindent where $\FunGo$ is the name of the entry function of $\WeqInt$. Here $\vpar$ ranges over the set of the encoded $\LogLang$-programs (\figurename~\ref{fig:encoding}) that can be interpreted by $\WeqInt$, namely all possible encoded sequences of the $\LogLang$-narrowings. The lengths of the $\LogLang$-programs are unbounded, so there is no bound on the solution lengths.

\begin{figure}[h]\centering
\begin{tabular}{rcl}\small
\wholeencode{$\kavych$LHS = RHS$\kavych$} &\quad = \qquad & (\,\wholeencode{\hspace{0.6ex}LHS\hspace{0.6ex}}\,,\,\;\wholeencode{\hspace{0.6ex}RHS\hspace{0.6ex}}\,) \\
\wholeencode{$\kavych\varx\rar$Narr$\kavych$} &\quad = \qquad& \wholeencode{\hspace{0.6ex}$\varx$\hspace{0.6ex}}\;$\rar$\;\wholeencode{\hspace{0.6ex}Narr\hspace{0.6ex}} \\
\wholeencode{\hspace{0.6ex}Term $\longconc$ Expr\hspace{0.6ex}} &\quad = \qquad&\wholeencode{\hspace{0.6ex}Term\hspace{0.6ex}}\,$\longconc$\,\wholeencode{\hspace{0.6ex}Expr\hspace{0.6ex}}\\
\wholeencode{\hspace{0.7ex}Letter\hspace{0.7ex}}&\quad = \qquad& Letter \\
\wholeencode{\hspace{0.7ex}Variable\hspace{0.7ex}}&\quad = \qquad& ($\codex$\,Variable)
\end{tabular}
\caption{The encoding used for the data in the interpreters.}\label{fig:encoding}
\end{figure}

The result of this specialization is a program with the input value to be assigned to $\vpar$. We impose the following minimal requirement on the specialization, which is strengthened in \sectionname~\ref{Sect:VerRes}: the specialization succeeds if the resulting process tree generated by $\scp$ contains a leaf labelled with the value $\OutTrue$ iff the equation system $\Eqs$ has a solution. In that case the specialization tool $\scp$ verifies the existence of a sequence of substitutions that generates a solution of the system $\Eqs$ given to the interpreter $\WeqInt$. However, we do not require the specialization task to terminate on every equation list. Thus, the power of the suggested verification scheme depends on the underlying interpreter $\WeqInt$.

\section{Unfold/Fold Program Transformation Method}\label{Sect:SpecScheme}

The specialization tool $\scp$ used in the verification scheme above is based on the elementary unfold/fold technique widely used, \eg, in deforestation, supercompilation, partial evaluation, partial deduction, and so on~\cite{Burst,JonesBook,Pettorossi96,SecherSorensen99}. The algorithm transforms the $\IntLang$-programs (\sectionname~\ref{Subsect::IntLang}). The technique exploits the sub-algorithms presented briefly in this Section and more formally in the paper~\cite{NemReachability}. The unfold/fold algorithm assumes that every node $\Node$ in the process tree $\ProcTree$ of the $\IntLang$-program is labelled with a configuration, which represents the current parameterized computation state. 

\begin{Definition}\label{Definition::Config}
A configuration $C$ is a parameterized expression in the language $\IntLang$. Namely, it is either a parameter, a string in $\Sigma^*$, a parameterized expression enclosed in the parentheses, a concatenation of parameterized expressions or a function call with the parameterized expressions as its arguments.

\noindent The active call of the configuration $C$ is the function call (if any) with the leftmost closing right bracket.  
\end{Definition}

\noindent We say an~$\IntLang$-expression is \emph{ground} if it does not contain function calls, while it may contain parameter occurrences. The~set of the~parameters is denoted with $\ParSet$, and $\Ground$ stands for the set of the ground expressions. We say that a call $\Fun(\aarg_1,\dots,\aarg_n)$ matches against $\Fun(\Pat_1,\dots,\Pat_n)$, where $\aarg_i$ are object expressions, and $\Pat_i$ are patterns, if there exists a substitution $\xi$ \st $\forall i,1\leq i\leq n(\Pat_i\xi\seq\aarg_i)$.

\begin{Definition}\label{Definition::NarrowSet}
Given a program rule $R: \Fun(\Pat_1,\dots,\Pat_n)\seq \mathrm{S}$, let $C$ be of the form $\Fun(\aarg_1,\dots,\aarg_n)$, where $\aarg_i$ are ground\footnote{This property is guaranteed by the call-by-value semantics.}. We say that the substitution $\xi:\ParSet\rightarrow\Ground$ is \emph{a parameter narrowing} unifying $R$ with $C$ iff $\exists\sigma:\VarSet\hspace{-0.25ex}\left(\langle\Pat_1,\dots,\Pat_n\rangle\right)\rightarrow\Ground$ \st $\forall i\,(\Pat_i\sigma\seq \aarg_i\xi)$. We say that the set of the par-narrowings $\{\xi_i\}$ is \emph{exhaustive} \wrt $R$ if for every substitution $\tau:\ParSet\rightarrow\Sigma^*$ \st the expression $C\tau$ matches against the left-hand side of the rule $R$, there exists a par-narrowing $\xi_i$ \st $\tau$ is an instance of $\xi_i$.
\end{Definition}

Now we are ready to describe the unfold/fold algorithm. Every node in the process tree $\ProcTree$ is marked either as open (by default), or as closed with some node $\Node'$. The three steps listed below are applied to the tree $\ProcTree$ until all the~nodes in $\ProcTree$ are closed.

\begin{itemize}
\item \textbf{Unfolding step}. Given an open node $\Node$ labelled with a parameterized configuration $C$, consider the active~call $\Fun(\aarg_1,\dots,\aarg_n)$ in $C$. For every rule $R_i: \Fun(\Pat_{i,1},\dots,\Pat_{i,n})\seq \mathrm{S}_i$ in the definition of $\Fun$ (where $\Pat_{i,j}$ are patterns), construct a set of pairs $\langle \sigma_{i,k}, \xi_{i,k}\rangle$ \st $\forall j\,(\Pat_{i,j}\sigma_{i,j}\seq \aarg_j\xi_{i,j})$, and the set $\{\xi_{i,k}\}$ of the parameter narrowings is exhaustive \wrt $R_i$. For every such a par-narrowing $\xi_{i,k}$ generate an open child node $\Node_{i,k}$. Construct $C\xi_{i,k}$, and replace the active call $\Fun(\aarg_1,\dots,\aarg_n)\xi_{i,k}$ in it with $\mathrm{S}_\mathrm{i}\sigma_{i,k}$. The result is the configuration\footnote{In the case of the verification task considered, Property~\ref{Prop:unfold} implies that the par-narrowing $\xi_{i,k}$ is applied to the only active call, since the arguments of the other calls do not include parameters.} labelling the node $\Node_{i,k}$.
\item \textbf{Folding step}. Given a node $\Node$ labelled with a configuration $C$\hspace{-0.2ex}, if some its ancestor $\Node_0$ is labelled with $C$ (up to a \hspace{0.2ex}parameter renaming), then mark $\Node$ as closed with $\Node_0$ and remove all the paths originating from $\Node$.
\item \textbf{Close}. Mark an open node $\Node$ as closed with $\Node$ if either $\Node$ is labelled with a~ground expression, or all the successors of $\Node$ are closed.
\end{itemize}

\noindent In order to guarantee that the unification algorithm used in the unfolding step can always produce a finite set of the parameter narrowings, we use the~following syntactic property of the~function $\FunEq$ of the~interpreters considered. \figurename~\ref{fig:codemain} presents the source code of the function $\FunEq$ in the interpreter $\WeqIntBase$. The other interpreters use this function with some minor changes, such as applying the substitution function to the equation lists and storing information about the number of the equations in the list in the second argument of $\FunEq$. The patterns used for the first argument of $\FunEq$ in the left-hand sides of the definitions are the same in all the three interpreters considered.

\begin{Property}\label{Prop::SyntMain}
Given the interpreters $\WeqIntBase$, $\WeqIntSplit$, and $\WeqIntSym$, the program rewriting rules defining the~function $\FunEq$ in the interpreters are only of the following forms:
\begin{itemize}
\item $\FunEq(\Pat,S_1)\seq S_2$, where ${S}_2$ is an object expression (rules (1) and (5) in \figurename~\ref{fig:codemain});

\item $\FunEq(\Pat\longconc\varr,S_1)\seq \FunEq(\varr,S_2)$, where $\left(\VarSet\left({S_2}\right)\setminus\VarSet\left({S_1}\right)\right)\cap\VarSet(\Pat)\seq\varnothing$, the part $\Pat$ does not contain expression-type variables, and $\varr\notin\VarSet(\mathrm{S_2})$ (rules (2--4 a,b)). 
\end{itemize}
\end{Property} 

\noindent We recall that the verification task is $\scp\hspace{-0.5ex}\left(\Code{\WeqInt}, \Code{\FunGo(}\vpar, \Code{\wholeencode{\Eqs})}\right)$, and the rules of the function $\FunGo$ of all the~three interpreters are $\FunGo(\varr,\vareqlist)\seq \FunEq(\varr,\FunSim(\textrm{\it{Other\;args}}))$, where $\varr$ does not occur in the other arguments. This fact together with Property~\ref{Prop::SyntMain} imply the following feature.

\begin{Property}\label{Prop:unfold}
Let us consider the process tree $\ProcTree$ generated by the specialization task $\SpecTask(\WeqInt,\Eqs)$, where $\WeqInt\in\{\WeqIntBase,\WeqIntSplit,\WeqIntSym\}$.
\begin{itemize}
\item Given an arbitrary configuration $C$ labelling a node in $\ProcTree$\hspace{-0.5ex}, the only parameterized call in $C$ (if any) is of the form $\FunEq(\pv_i,\mathrm{Other\;args})$, where no parameter occurs\footnote{A rewriting rule $\FunEq(\Pat,{S}_1)\seq {S}_2$ may include letter-type pattern variables shared by $\Pat$ and ${S_1}$ that occur in ${S_2}$, but the~value matched against ${S_1}$ is always an object expression, hence these variables are assigned with letters in any pattern matching.} in the \emph{other arguments}.
\item The patterns to be unified with the parameterized data never have more than one occurrence of an~expression-type variable. 
\end{itemize}
\end{Property} 

\begin{figure}[t]
\small\centering
$$
\begin{array}{l}
/\hspace{-0.3ex}* \mathrm{(1)\quad\;\,\,There\;are\;no\;more\;}\LogLang\mathrm{-narrowings,\;and\;the\; equation\;is\;trivial.}\;*\hspace{-0.3ex}/\\
\FunEq(\empt,(\empt,\empt))\seq \OutTrue;\\
/\hspace{-0.3ex}* \mathrm{(2\,a,b)\,The\;}\LogLang\mathrm{-narrowing\;}\kavych\svar_x\rar\empt\kavych\mathrm{ \;is\;compatible\;with\;an\;equation\;whose\;side}\\
\qquad\qquad\mathrm{\;starts\;with\;a\;variable\;named\;}\svar_x.\;*\hspace{-0.3ex}/ \\
\FunEq(((\codex\svar_x)\rar\empt)\longconc\varr, (\vareq_{\rm{LHS}},(\codex\svar_x)\longconc\vareq_{\rm{RHS}})) \\\quad\seq\FunEq(\varr,\FunSim(((\codex\svar_x)\rar\empt),\FunSubst((\codex\svar_x)\rar\empt,\empt,\vareq_{\rm{LHS}}),\FunSubst((\codex\svar_x)\rar\empt,\empt,\vareq_{\rm{RHS}})));\\
\FunEq(((\codex\svar_x)\rar\empt)\longconc\varr, ((\codex\svar_x)\longconc\vareq_{\rm{LHS}},\vareq_{\rm{RHS}})) \\\quad\seq \FunEq(\varr,\FunSim(((\codex\svar_x)\rar\empt),\FunSubst((\codex\svar_x)\rar\empt,\empt,\vareq_{\rm{LHS}}),\FunSubst((\codex\svar_x)\rar\empt,\empt,\vareq_{\rm{RHS}})));\\
/\hspace{-0.3ex}* \mathrm{(3\,a,b)\,The\;}\LogLang\mathrm{-narrowing\;}\kavych\svar_x\rar\vars\svar_x\kavych\mathrm{ \;is\;compatible\;with\;an\;equation\;having\;one\;side\;starting\;with}\\
\qquad\mathrm{\qquad\;the\;equation\;variable\;named\;}\svar_x\mathrm{\;and\;the\;other\;side\;starting\;with\;a\;symbol}.\;*\hspace{-0.3ex}/ \\
\FunEq(((\codex\svar_x)\rar\vars\conc(\codex\svar_x))\longconc\varr, ((\codex\svar_x)\longconc\vareq_{\rm{LHS}},\vars\longconc\vareq_{\rm{RHS}})) \\\quad\seq \FunEq(\varr,\FunSim(((\codex\svar_x)\rar\vars\conc(\codex\svar_x)),\FunSubst((\codex\svar_x)\rar\vars\conc(\codex\svar_x),(\codex\svar_x),\vareq_{\rm{LHS}}),
\\\qquad\qquad\qquad\qquad\qquad\qquad\qquad\qquad\qquad\qquad\;\,\,\hspace{0.2ex}\FunSubst((\codex\svar_x)\rar\vars\conc(\codex\svar_x),\empt,\vareq_{\rm{RHS}})));\\
\FunEq(((\codex\svar_x)\rar\vars\conc(\codex\svar_x))\longconc\varr, (\vars\longconc\vareq_{\rm{LHS}},(\codex\svar_x)\longconc\vareq_{\rm{RHS}})) \\\quad\seq \FunEq(\varr,\FunSim(((\codex\svar_x)\rar\vars\longconc(\codex\svar_x)),\FunSubst((\codex\svar_x)\rar\vars\conc(\codex\svar_x),\empt,\vareq_{\rm{LHS}}),
\\\qquad\qquad\qquad\qquad\qquad\qquad\qquad\qquad\qquad\qquad\;\,\,\hspace{0.2ex}\FunSubst((\codex\svar_x)\rar\vars\conc(\codex\svar_x),(\codex\svar_x),\vareq_{\rm{RHS}})));\\
/\hspace{-0.3ex}* \mathrm{(4\,a,b)\,The\;}\LogLang\mathrm{-narrowing\;}\kavych\svar_x\rar\svar_y\svar_x\kavych\mathrm{ \;is\;compatible\;with\;an\;equation\;having\;sides\;starting\;with}\\\qquad\mathrm{\qquad\;different\;equation\;variables\;named\;}\svar_x\mathrm{\;and\;}\svar_y.\;*\hspace{-0.3ex}/ \\
\FunEq(((\codex\svar_x)\rar(\codex\svar_y)\conc(\codex\svar_x))\longconc\varr,((\codex\svar_y)\longconc\vareq_{\rm{LHS}},(\codex\svar_x)\longconc\vareq_{\rm{RHS}})) \\\quad\seq\FunEq(\varr,\FunSim(((\codex\svar_x)\rar(\codex\svar_y)\conc(\codex\svar_x)),\FunSubst((\codex\svar_x)\rar(\codex\svar_y)\conc(\codex\svar_x),\empt,\vareq_{\rm{LHS}}),
\\\qquad\qquad\qquad\qquad\qquad\qquad\qquad\qquad\qquad\qquad\;\,\,\hspace{1.8ex}\FunSubst((\codex\svar_x)\rar(\codex\svar_y)\conc(\codex\svar_x),(\codex\svar_x),\vareq_{\rm{RHS}})));\\
\FunEq(((\codex\svar_x)\rar(\codex\svar_y)\conc(\codex\svar_x))\longconc\varr,((\codex\svar_x)\longconc\vareq_{\rm{LHS}},(\codex\svar_y)\longconc\vareq_{\rm{RHS}})) \\\quad\seq\FunEq(\varr,\FunSim(((\codex\svar_x)\rar(\codex\svar_y)\conc(\codex\svar_x)),\FunSubst((\codex\svar_x)\rar(\codex\svar_y)\conc(\codex\svar_x),(\codex\svar_x),\vareq_{\rm{LHS}}),
\\\qquad\qquad\qquad\qquad\qquad\qquad\qquad\qquad\qquad\qquad\;\,\,\hspace{1.8ex}\FunSubst((\codex\svar_x)\rar(\codex\svar_y)\conc(\codex\svar_x),\empt,\vareq_{\rm{RHS}})));\\
/\hspace{-0.3ex}* \mathrm{(5)\quad\;\,\,Stop\;the\;computation\;in\;the\;default\;case.}\;*\hspace{-0.3ex}/\\
\FunEq(\varr,(\vareq_{\rm{LHS}},\vareq_{\rm{RHS}})) \seq \OutFalse; 
\end{array}
$$
\caption{The function $\FunEq$ accepts two arguments: the first is an encoded substitutions list, and the second is an encoded pair representing an equation. Here $\svar_x$ takes a name of an equation variable, if $x\in\VAlph$, while $\vars$ takes a character. The second argument of $\FunSubst$ function is an accumulator.}
\label{fig:codemain}
\end{figure}
\normalsize

\noindent Henceforth we say that a configuration $C$ is primary if its unfolding results in parameter narrowings, and we call a node primary if configuration labelling it is primary. Property~\ref{Prop:unfold} implies that the par-narrowings constructed by the unfolding step are always substituted only to the function call $\FunEq$ being the active call. Hence, for the verification task considered, $C$ is primary iff $C$ is of the form $\FunEq(\pv_i,\mathrm{Other\;args})$, where the other arguments do not contain function calls. 

Properties~\ref{Prop:unfold} and~\ref{Prop::SyntMain} together imply that in the case of the verification task considered the exhaustive narrowing set always consists of the only par-narrowing, thus, a unification with one rewriting rule results in a single par-narrowing. Depending on the~form of the rule defining the function $\FunEq$, the par-narrowing would be either $\pv_i\rar\empt$, $\pv_i\rar L'\longconc \pv'_i$, where $L'$ is an object expression, or trivial $\pv_i\rar\pv_i$, if parameter $\pv_i$ is unified with the only variable $\varr$, in rule (5). Hence, every non-trivial par-narrowing corresponds to an $\LogLang$-narrowing of the variables of the equation being transformed. Provided this feature\footnote{Here the unfolding step has another important property: for all function rules excluding the last one (rule (5)), the narrowings imposed on the parameter $\vpar_i$ are always disjoint, provided that the equations in the list given to the function $\FunEq$ are of the reduced form. The rule (5) accepts an arbitrary input, serving as the \texttt{otherwise} branch. The~right-hand side of this rule is an object expression, hence there is no need to propagate negative constraints imposed on the parameter value to the~successor configurations.}, the unification process is always finite~\cite{NemytykhUnfold}. Note that no trivial $\texttt{otherwise}$ branch corresponds to a branch in the equation solution tree.

We call a node transient~\cite{TurGener}, if the one-step unfolding of the configuration labelling it produces no narrowing on the parameters; in particular, if all the $\aarg_i$ in $\Fun(\aarg_1,\dots,\aarg_n)$ are object expressions. A transient node has the~only child in the process tree.

\section{Results of Specialization}\label{Sect:VerRes}

This section discusses some conditions under which the verification succeeds, and presents several sets of word equation systems, which have been solved by means of the specialization task $\SpecTask(\WeqInt,\Eqs)$, where $\WeqInt$ is either $\WeqIntBase$, $\WeqIntSplit$ or $\WeqIntSym$. 

Given an equation list $\Eqs$ and an interpreter $\WeqInt$, the final result of the stepwise unfolding of $\SpecTask(\WeqInt,\Eqs)$ can be considered as a possibly infinite process tree modelling the solution tree of $\Eqs$. The folding occurs if a node in the process tree has an ancestor labelled with the same configuration modulo the parameter renaming. If the unfolding can produce infinite paths in the tree then the specialization process does not terminate, unless two equal configurations exist along every infinite path. Thus, the specialization terminates iff the relation of the~textual equality is a~well-quasi order over the configuration sequence along every infinite path in the process tree.

\subsection{Optimality of Specialization}\label{Subsect::Optimality}

In this section, we show that given the structure of the interpreters considered, the residual graphs produced by the specialization may be reduced to the solution graphs of the equations considered. For this purpose we have to consider the unfolding and the folding operations, which generate both the process graph and the solution graph in very similar ways. In \sectionname~\ref{Sect:SpecScheme} we have shown that every node in the process tree, whose one-step unfolding results in the set $\{\xi_i\}$ of the disjoint par-narrowings marking the outgoing arcs, corresponds to a node in the solution tree, and there is a bijection from the arc set marked by $\xi_i$ to the arc set marked by the corresponding $\LogLang$-narrowings, thus it remains to show that the folding does work exactly on the same nodes where the par-narrowings and $\LogLang$-narrowings are generated.  

In general, a partial process tree of the specialization task may require to construct a folding arc connecting transient nodes. That would cause problems with the reasoning on the process graphs in the terms of the solution graphs, because the transient nodes do not correspond to any nodes in the solution graph. Informally, we can say that the specialization result is optimal if no folding arcs connect transient nodes. The structure of the interpreters $\WeqIntBase$, $\WeqIntSplit$, $\WeqIntSym$ guarantees that all the non-transient nodes are also primary (see Property~\ref{Prop:unfold}). Thus, we define now a notion of the optimal specialization for the verification task given in \sectionname~\ref{Sect:VerTask}.  

\begin{Definition}\label{Definition:optim}
Given a task $\SpecTask(\WeqInt,\Eqs)$, its specialization result is said to be optimal iff all the arcs folding computation paths in the process graph connect the primary nodes.
\end{Definition}

The interpreters considered above satisfy the property that every parameter narrowing occurring in a process tree either generates an $\LogLang$-narrowing or results in an $\OutFalse$-node never unfolded. If the optimality holds, then all the intermediate steps of the specialization of the interpreters, including specialization of substitution and simplification, correspond to the nodes in which the folding never occurs, hence every transient path segment in the process graph may be represented with a single arc. We can therefore reason on the process graphs using the solution graphs of the equations \wrt which the interpreters are specialized. Moreover, the~optimality guarantees that the residual programs generated by a specialization tool contain no part of the~interpreters' source code, except the encoding $\wholeencode{P}$ of the $\LogLang$-programs $P$. Thus, the introduced optimality can be considered as an analogue of the Jones-optimality~\cite{BenAbram,JonesBook2} for the given verification task. The Jones-optimality demands that the interpretation overheads should be completely removed from the residual programs. The notion of the optimality given in Definition~\ref{Definition:optim} implies also that all the interpretation overheads are removed from the specialization result, although some pieces of the encoded $\LogLang$-programs will be present in it, since the parameters are narrowed according to their values.

Since we consider the input sequence of $\LogLang$-narrowings given to an interpreter $\WeqInt$ as a straight-line program, we can also use the following reasoning~\cite{NemytykhCSR}. In the classical first Futamura projection~\cite{Futamura}, which corresponds to the specialization task $\scp\hspace{-0.5ex}\left(\Code{\WeqInt},\Code{\FunGo(\langle\wholeencode{\hspace{0.4ex}\sigma_i\hspace{0.4ex}}\rangle},\vpar\Code{)}\right)$, the input data (the equation list) is dynamic, while the program given to the interpreter is static. Here we parameterize the program. 

\begin{Lemma}\label{lemma:optimality}
For every word equation $\Phi\seq\Psi$, the result of specializing $\SpecTask (\WeqInt, \Phi\seq\Psi)$, where $\WeqInt$ is either $\WeqIntBase$, $\WeqIntSplit$ or $\WeqIntSym$, is optimal.
\end{Lemma}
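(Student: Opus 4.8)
The plan is to show that every folding arc in the process tree of $\SpecTask(\WeqInt,\Phi\seq\Psi)$ connects primary nodes, \ie nodes labelled with configurations of the form $\FunEq(\pv_i,\mathrm{Other\;args})$ in which the other arguments contain no function calls. By Property~\ref{Prop:unfold}, every configuration occurring in the process tree has at most one parameterized call, and it is always an $\FunEq$-call whose remaining arguments are parameter-free. A folding arc is created (Folding step) when a node $\Node$ and an ancestor $\Node_0$ carry the same configuration up to parameter renaming; since $\Node$ has a strict descendant-free subtree after folding, the obstruction to optimality is exactly a folding arc where $\Node$ (hence also $\Node_0$) is \emph{transient}, \ie its one-step unfolding produces no parameter narrowing. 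So the goal reduces to: \emph{no transient configuration can repeat along a path}, equivalently, \emph{every transient path segment between two consecutive primary nodes is finite and strictly progresses}.

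First I would classify the transient nodes. After a primary node $\FunEq(\pv_i,E)$ is unfolded by one of the rules (2a,b)--(4a,b) of \figurename~\ref{fig:codemain}, the child is $\FunEq(\varr,\FunSim(\dots,\FunSubst(\dots),\FunSubst(\dots)))$ where $\varr$ is a \emph{fresh parameter-free} pattern variable bound to the tail of the old parameter; the parameter $\varr$ is buried inside a stack of pending $\FunSubst$/$\FunSim$ calls and is object-valued, hence every subsequent unfolding step until $\FunSim$ and both $\FunSubst$ calls are fully evaluated is transient (all arguments of the active call are object expressions, no narrowing). So a transient segment is precisely the evaluation of one $\FunSim(\FunSubst(\cdot),\FunSubst(\cdot))$ block, and it terminates because $\FunSubst$ and $\FunSim$ are ordinary terminating functions on finite encoded equations (this is where I would invoke that the source interpreters are total on their intended inputs, \cf \sectionname~\ref{Sect:WeqInt}, together with Property~\ref{Prop::SyntMain}, which forces $\FunEq$ to consume at least one encoded narrowing per primary step). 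Hence between two primary nodes there are only finitely many transient nodes, and the next primary node is again of the canonical form $\FunEq(\pv',E')$ with $E'$ parameter-free: the call-by-value semantics has reduced the inner block to an object expression before the $\FunEq$-head is exposed.

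Next I would argue that a repeated configuration along an infinite path must in fact be a \emph{primary} one. Suppose $\Node_0$ and $\Node$ carry the same configuration $C$. If $C$ were transient, it would sit strictly inside one of these finite transient segments; but two occurrences of the \emph{same} configuration on one path, with only finitely many nodes in each transient segment, force the path between them to pass through at least one primary node $\FunEq(\pv_j,E_j)$ and then to "return" to the identical transient configuration $C$ — and this is impossible, because once a primary node consumes a prefix of its parameter (rules (2)--(4) strictly shorten the encoded $\LogLang$-program carried in $\varr$, by the condition $\varr\notin\VarSet(\mathrm{S}_2)$ in Property~\ref{Prop::SyntMain}) the buried parameter in every later transient configuration is a strictly fresher/shorter parameter, so the intermediate configurations of distinct transient segments are syntactically distinguishable and cannot coincide. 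Therefore the only configurations that can repeat are the primary ones $\FunEq(\pv_i,E_i)$, whose equality modulo parameter renaming is equality of the parameter-free part $E_i$, \ie equality of the encoded current equation list — exactly the folding condition of the underlying solution graphs of \figurename~\ref{tab:intscheme}. Consequently every folding arc connects primary nodes, which is the definition of optimality (Definition~\ref{Definition:optim}).

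I expect the main obstacle to be the second step: making precise that a transient configuration cannot recur. One has to rule out a scenario in which a long transient computation (inside $\FunSim$, which for $\WeqIntSplit$ and $\WeqIntSym$ performs repeated splitting and reduction) happens to pass through a configuration identical to an earlier transient configuration on the \emph{same} evaluation stack. The clean way to close this is to track the buried parameter: along any path the sequence of parameters $\pv_0\supsetneq\pv_1\supsetneq\dots$ (in the sense "is a proper suffix-parameter of") is strictly decreasing at every primary step and never increases, so configurations separated by at least one primary step differ in their parameter component; and within a single transient segment the active-call evaluation of a terminating function cannot revisit a state (standard termination of $\FunSim,\FunSubst$). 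Formally packaging these two observations — "primary steps strictly consume the parameter" (from Property~\ref{Prop::SyntMain}) and "transient segments are finite and loop-free" (totality of the auxiliary functions) — yields that any cycle in the process graph is a cycle through primary nodes only, and the lemma follows.
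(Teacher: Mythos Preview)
Your argument has a genuine gap in the second step. You claim that transient configurations from distinct transient segments ``are syntactically distinguishable and cannot coincide'' because ``the buried parameter in every later transient configuration is a strictly fresher/shorter parameter''. But the Folding step compares configurations \emph{up to parameter renaming}. In the process tree the parameter occurring in a transient configuration is an abstract symbol $\pv_i$; the fact that $\pv_i$ was introduced by narrowing $\pv_{i-1}\rar L'\longconc\pv_i$ does not leave any syntactic trace in the configuration other than the name $\pv_i$ itself, and that name is quotiented out by folding. So two transient configurations
\[
\FunEq\bigl(\pv_i,\FunSim(\wholeencodespace{\sigma},\mathrm{args})\bigr)
\quad\text{and}\quad
\FunEq\bigl(\pv_j,\FunSim(\wholeencodespace{\sigma},\mathrm{args})\bigr)
\]
from different segments are foldable whenever their non-parameter parts agree, regardless of how many primary steps lie between them. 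Your ``strictly decreasing suffix-parameter'' order lives on the concrete values of the parameter, not on the configurations, and therefore cannot separate them. The within-segment part of your argument (termination of $\FunSim$, $\FunSubst$) is fine, but it is the across-segment case that carries the whole weight, and there your invariant is vacuous modulo renaming.

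The paper does not try to show that transient configurations never repeat; it shows instead that \emph{if} they repeat then their closest primary ancestors already repeat, so the fold would have happened earlier on primary nodes. The mechanism is this: the interpreters store the last applied $\LogLang$-narrowing $\sigma$ as the first argument of $\FunSim$ (and the equation-list length $n$ in the $\WeqIntSplit/\WeqIntSym$ case). Hence if two transient configurations coincide, they share the same $\sigma$ (and $n$). Their closest primary \emph{successors} then coincide because the transient segment is deterministic. Now one goes \emph{backwards}: along the segment between the two primary successors no $\varx\rar\empt$ can occur (the stored $\sigma$ is $\varx\rar t\,\varx$), and in the split/count case Proposition~\ref{Prop:consimprop} forces the equation-list length to stay equal to $n$, so no splits occur either; hence each equation is transformed exactly as in $\SolBase$. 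Proposition~\ref{Prop:reversibility} (injectivity of $\varx\rar t\,\varx$ on reduced equations) then forces the closest primary \emph{ancestors} to coincide. This injectivity argument is the missing idea in your proposal; without it you cannot rule out a fold between transient nodes whose non-parameter parts genuinely agree.
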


\noindent The idea of the proof is as follows. If the nodes $\Node_1$ and $\Node_2$ labelled with the equal non-primary configurations exist along the same path, then the closest primary ancestor of $\Node_1$ and the closest primary ancestor of $\Node_2$ are also labelled with the equal configurations. The property holds because the structure of the interpreters implies the following two statements. First, given any primary configuration $\FunEq(\vpar_i,\Eqs)$ along the path segment $[\Node_1;\Node_2]$ and the first primary configuration $\FunEq(\vpar_j,\Eqs')$ labelling a successor node of $\Node_2$, the length of the equation lists $\Eqs$ and $\Eqs'$ are equal. Second, all the equation transformations done along the segment $[\Node_1;\Node_2]$ are injective. The detailed proof is given in Appendix, \sectionname~\ref{proof:optlemma}.

Provided that the optimality holds, given a class of equations $\mathsf{K}$, we say the verification by specialization of the interpreter $\WeqInt$ succeeds over $\mathsf{K}$ iff for every equation $\eq\in\mathsf{K}$ the solution graph constructed by the corresponding algorithm solving the equation is finite.  

Let us show how the process graph generated by the specialization of an interpreter \wrt an equation corresponds to a solution graph of the equation, using the following example. Here we consider an equation such that the choice of the simplification algorithm has no impact on its solution.

\begin{Example}\label{Example::mappingoptimal}

Let us consider the task $\SpecTask (\WeqInt, \vx\conc\vy\conc\cA\seq\cA\conc\vx\conc\vy)$. Using the operations given above, we construct its process graph. Some structures of this process graph are relevant only to the interpreter, namely to the structures of the parameter narrowings and the function calls. If we delete them, as well as the \emph{otherwise} branches, we will get a solution graph of the equation, as is shown in~\figurename~\ref{fig:mappinggraphs}. 

\begin{figure}[t]
\begin{tabular}{cc}\small
$$
\hspace*{-10pt}\xymatrix@+3mm {&&&&&\\
&& *+[F-:<44pt>]{\begin{array}{l}\FunEq(\pu_0,\\\;\;\wholeencode{\kavych\vx\conc\vy\conc\cA\seq\cA\conc\vx\conc\vy\kavych})\end{array}}\ar[]!<-10pt,18pt>;[dl]|<(.4){{\begin{array}{l}\skipper\pu_0\rar \\(\wholeencode{\smkavych\vx\rar\empt\smkavych})\longconc\pu_1\end{array}}}\ar[d]|<(.4){\begin{array}{l}\upskipper\pu_0\rar\\(\wholeencode{\smkavych\vx\rar\cA\conc\vx\smkavych})\longconc\pu_1\upskipper\end{array}}\ar[]!<-50pt,50pt>;[dr]|<(.5){\skipper\skipper\skipper\txt{otherwise}}&\\
&*++{\dots}\ar[d]|<(.3){\begin{array}{l}\txt{\it no parameter}\\ \txt{\it narrowings}\end{array}}&*++{\dots}\ar[d]|{\begin{array}{l}\txt{\it no parameter}\\ \txt{\it narrowings}\end{array}}&*+{\OutFalse}&\\
&*+[F-:<44pt>]{\begin{array}{l}\FunEq(\pu_1,\\\;\;\wholeencode{\kavych\vy\conc\cA\seq\cA\conc\vy\kavych})\end{array}}\ar[d]|<(.35){{\begin{array}{r}\upskipper\pu_1\rar\qquad\skipper\\(\wholeencode{\smkavych\vy\rar\cA\conc\vy\smkavych})\longconc\pu_2\upskipper\end{array}}}\ar[]!<-10pt,20pt>;[dl]!<-15pt,0pt>|<(.25){\begin{array}{l}\skipper\pu_1\rar \\(\wholeencode{\smkavych\vy\rar\empt\smkavych})\longconc\pu_2\end{array}}&
\save[]+<0pt,-15pt>*+[F-:<44pt>]{\begin{array}{l}\FunEq(\pu_1,\\\;\;\wholeencode{\kavych\vx\conc\vy\conc\cA\seq\cA\conc\vx\conc\vy\kavych})\end{array}}\ar@{-->}@/_3.8pc/[uu]!<-3pt,-3pt>
\restore\\
*++{\dots}&*++{\dots}&\save[]+<-20pt,-10pt>*+{\OutFalse}\ar@{<-}[ul]!<20pt,13pt>|<(.25){\skipper\skipper\skipper\txt{otherwise}}\restore\\
\save[]+<-10pt,-20pt>*+[F-:<44pt>]{\begin{array}{l}\FunEq\\\;(\pu_2,\empt)\end{array}}\ar@{<-}[u]|<(.5){\begin{array}{l}\txt{\it no parameter}\\ \txt{\it narrowings}\end{array}}\ar[]!<9pt,-29.5pt>;[ddr]|<(.5){\skipper\skipper\skipper\txt{otherwise}}\ar[dd]|<(.33){\begin{array}{l}\skipper\pu_2\rar \empt\end{array}}\restore&
\save[]+<0pt,-33pt>*+[F-:<44pt>]{\begin{array}{l}\FunEq(\pu_2,\\\;\;\wholeencode{\kavych\vy\conc\cA\seq\cA\conc\vy\kavych})\end{array}}\ar@{<-}[u]|<(.5){\begin{array}{l}\txt{\it no parameter}\\ \txt{\it narrowings}\end{array}}
\ar@{-->}@/_5pc/[uu]!<-5pt,-5pt>\restore
\\
&\\
*+{\OutTrue}&*+{\OutFalse}&
}
$$
&\small
$$
\hspace*{-200pt}\xymatrix@-2.5mm {\\\\\\\\\\\\\\
&& *+++[F-:<44pt>]{\begin{array}{l}\vx\conc\vy\conc\cA\seq\cA\conc\vx\conc\vy\end{array}}\ar[dl]|<(.25){\begin{array}{l}\vx\rar\empt\end{array}}\ar[d]|<(.55){\begin{array}{l}\vx\rar\cA\conc\vx\end{array}}&\\
&*+++[F-:<44pt>]{\begin{array}{l}\vy\conc\cA\seq\cA\conc\vy\end{array}}&
\save[]+<0pt,-15pt>*+++[F-:<44pt>]{\begin{array}{l}\vx\conc\vy\conc\cA\seq\cA\conc\vx\conc\vy\end{array}}\ar@{-->}@/_3pc/[u]
\restore
&\\
\save[]+<-10pt,-20pt>*++{\begin{array}{l}\OutTrue\end{array}}\ar@{<-}[ur]|<(.4){\begin{array}{l}\skipper\vy\rar\empt\end{array}}\restore&
\save[]+<0pt,-33pt>*+++[F-:<44pt>]{\begin{array}{l}\vy\conc\cA\seq\cA\conc\vy\end{array}}\ar@{<-}[u]|<(.25){\begin{array}{l}\upskipper\vy\rar\cA\conc\vy\upskipper\end{array}}\ar@{-->}@/_4pc/[u]\restore
}
$$
\end{tabular}
\caption{A process graph and the corresponding solution graph. The quoted data (encoded in the process graph) are preserved in the solution graph.}
\label{fig:mappinggraphs}
\end{figure}
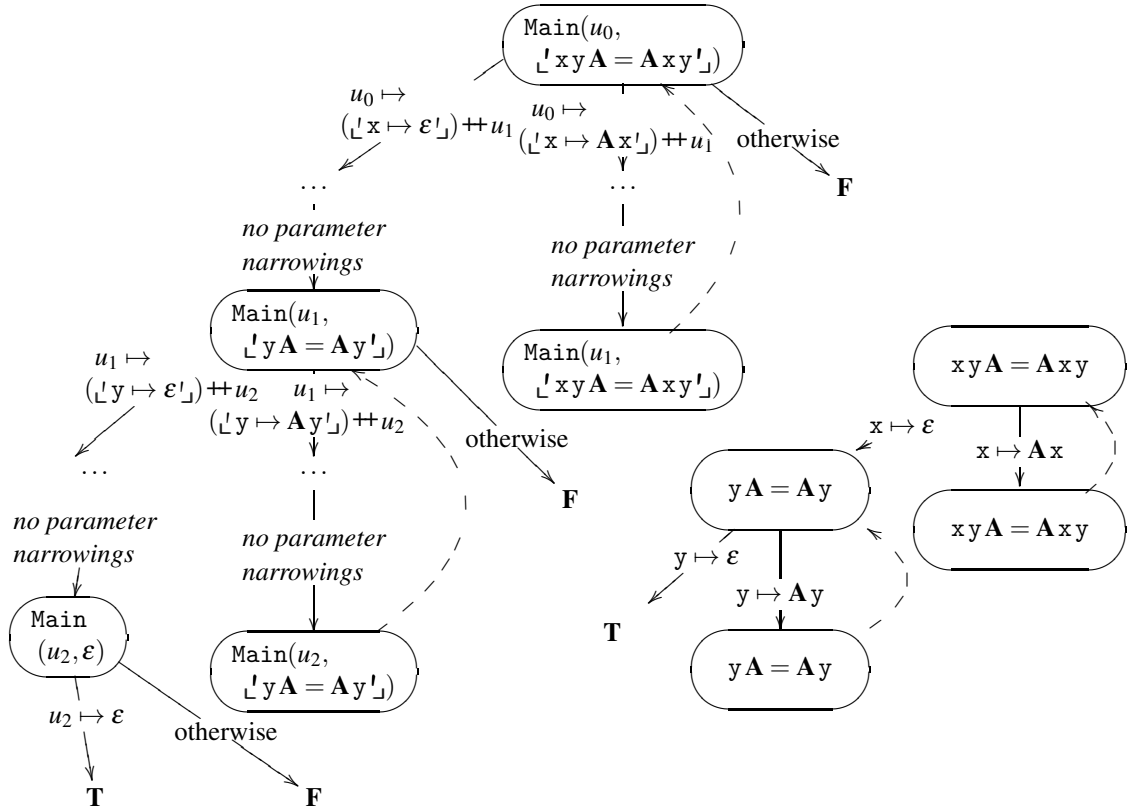
\end{Example}

\subsection{Specialization of Basic Interpreter}\label{subsection:weqintbase}

\begin{Definition}
A word equation $\Phi\seq\Psi$ is said to be \emph{quadratic} iff for all $\varx\in\VAlph$, $|\Phi|_{\varx}+|\Psi|_{\varx}\leq 2$.
\end{Definition}

\noindent For every quadratic equation, the solution graph constructed with the use of the algorithm $\SolBase$ is finite. This fact is well-known due to the works by Matiyasevich~\cite{DiekertJewels,Le2018,Lin2018,Matiyas}. Thus, specialization of $\WeqIntBase$ \wrt quadratic equations provides a basic test on the optimality of the program model. Namely, the optimality lemma implies the following proposition.

\begin{Prop}\label{Prop:quad}
For any quadratic equation $\Phi\seq\Psi$, specialization of $\SpecTask(\WeqIntBase,\Phi\seq\Psi)$ succeeds.
\end{Prop}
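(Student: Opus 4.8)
The plan is to reduce Proposition~\ref{Prop:quad} to two facts that are already available: the classical finiteness of the $\SolBase$ solution graph for quadratic equations, and Lemma~\ref{lemma:optimality}, which transfers that finiteness to the process graph of $\SpecTask(\WeqIntBase,\Phi\seq\Psi)$. First I would argue that the solution graph built by $\SolBase$ from a quadratic equation is finite; then I would invoke the optimality lemma and the process-graph/solution-graph correspondence of \sectionname~\ref{Subsect::Optimality} to conclude that the unfold/fold process terminates, and hence (by the Jones-optimality-like property of the interpreter) that the specialization succeeds in the sense of \sectionname~\ref{Sect:VerTask}.

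For the finiteness of the $\SolBase$ graph, the two invariants to check are: (i) being quadratic is preserved by each of the three elementary $\LogLang$-narrowings of \figurename~\ref{tab:classicsubstrules}~(a) followed by reduction of equal prefixes and suffixes, and (ii) the equation length $|\Phi|+|\Psi|$ does not increase along any edge. Both follow from a short case analysis on a reduced quadratic equation $\varx\conc\Phi\seq t\conc\Psi$ resp.\ $\varx\conc\Phi\seq\vary\conc\Psi$. The narrowing $\varx\rar\empt$ only deletes the (at most two) occurrences of $\varx$, so it strictly shortens the equation and keeps it quadratic. For $\varx\rar t\conc\varx$ with $t\in\CAlph$, quadraticity of the original equation gives $1+|\Phi|_{\varx}+|\Psi|_{\varx}\leq 2$, i.e.\ $|\Phi|_{\varx}+|\Psi|_{\varx}\leq 1$, so after cancelling the leading $t$ at most one new letter is created while no variable count grows, and a direct count shows the length is unchanged before, hence non-increasing after, further reduction. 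For $\varx\rar\vary\conc\varx$ one similarly gets $|\Phi|_{\varx}+|\Psi|_{\varx}\leq 1$ and $|\Phi|_{\vary}+|\Psi|_{\vary}\leq 1$, so after cancelling the leading $\vary$ the counts of $\varx$ and of $\vary$ both stay $\leq 2$, quadraticity of the other variables is untouched, and the length does not grow. Since the $\LogLang$-narrowing rules introduce neither fresh variables nor fresh letters, every equation reachable from $\Phi\seq\Psi$ is a reduced equation over the fixed finite letter set $\CAlph$ and the fixed finite set of variables occurring in $\Phi\seq\Psi$, of length at most $|\Phi|+|\Psi|$; there are finitely many such equations. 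As the solution graph has finite branching (at most three children per node, by \figurename~\ref{tab:classicsubstrules}~(a)) and its folding rule closes every repeated label along a path, it is finite. This is exactly the classical decidability argument for quadratic equations~\cite{DiekertJewels,Matiyas}.

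To finish, I would apply Lemma~\ref{lemma:optimality}: the specialization of $\SpecTask(\WeqIntBase,\Phi\seq\Psi)$ is optimal, so every folding arc of the process graph joins primary nodes. As explained in \sectionname~\ref{Subsect::Optimality}, under optimality each maximal transient segment of the process graph collapses to a single arc, and the collapsed graph coincides, modulo the interpreter-specific encoding of configurations and the \texttt{otherwise} branches, with the $\SolBase$ solution graph of $\Phi\seq\Psi$. Since the latter is finite, the process graph has finitely many primary nodes, each followed by a bounded transient segment, hence is itself finite, and the unfold/fold algorithm terminates. Because every non-trivial par-narrowing corresponds to an $\LogLang$-narrowing of the equation variables and the residual process tree carries no interpretation overhead, the tree contains an $\OutTrue$-leaf iff some composition of $\LogLang$-narrowings solves $\Phi\seq\Psi$, i.e.\ iff $\Phi\seq\Psi$ is satisfiable, which is precisely the success criterion of \sectionname~\ref{Sect:VerTask}. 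The only real work is the bookkeeping of the second paragraph — the case-by-case check that $\SolBase$ preserves quadraticity and never lengthens the equation; I do not expect a genuine obstacle, since the finiteness of the quadratic solution graph is classical and Lemma~\ref{lemma:optimality} performs all the transfer from the solution graph to the specialization process.
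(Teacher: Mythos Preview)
Your approach is essentially the same as the paper's: the paper states that finiteness of the $\SolBase$ solution graph for quadratic equations is classical (citing Matiyasevich and others) and then observes that the optimality lemma (Lemma~\ref{lemma:optimality}) immediately yields Proposition~\ref{Prop:quad}. You have simply supplied the standard case analysis behind the cited classical fact --- that each elementary $\LogLang$-narrowing followed by reduction preserves quadraticity and does not increase equation length --- and spelled out the transfer from solution graph to process graph that the paper handles via its definition of ``verification succeeds'' at the end of \sectionname~\ref{Subsect::Optimality}.
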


\subsection{Specialization of Splitting Interpreter}\label{subsect::regrep}

The interpreter $\WeqIntSplit$ was introduced as an optimized version of $\WeqIntBase$, but the experiments have shown that the specialization of $\WeqIntSplit$ succeeds in significantly more cases. One interesting class of the word equations solvable with the help of $\WeqIntSplit$ consists of a special kind of equations whose~solution sets are regular languages. 

\begin{Definition}
Given $\Phi\in\{\CAlph\cup\VAlph\}^*$, let $\xi(\Phi)$ map any $\cA\in\CAlph$ explicitly occurring in $\Phi$ to $\empt$, preserving the other part of $\Phi$. We say an equation $\Phi\seq\Psi$ is \emph{strictly regular-ordered with repetitions} iff $\xi(\Phi)$ is textually equal to $\xi(\Psi)$.
\end{Definition}

\noindent Thus, if the equation $\Phi\seq\Psi$ is strictly regular-ordered with repetitions, then $\forall \varx\in\VAlph\left(|\Phi|_{\varx}\seq |\Psi|_{\varx}\right)$ and the variable occurrences are ordered in $\Phi$ and $\Psi$ in the same way. The set of the strictly regular-ordered equations with repetitions generalizes the set of the regular ordered equations in which every variable occurs twice~\cite{Day2017RegOrd}. 

\begin{Example}
The solution sets of the three equations $\cA\conc\varx\seq\varx\conc\cA$, $\cA\conc\cA\conc\varx\seq\varx\conc\cA\conc\cA$, $\cA\conc\varx\conc\varx\seq\varx\conc\varx\conc\cA$ are equal, namely the sets are $\cA^*$. The first two equations are quadratic; the third is strictly regular-ordered with repetitions, but is not quadratic. Its solution graph constructed with the use of the algorithm $\SolBase$ is infinite. 
\end{Example}

The termination of the specialization is provided by the following lemma. 

\begin{Lemma}\label{lemma:regcommeq}
Given any strictly regular-ordered equation with repetitions $\Phi\seq\Psi$, every infinite path in its solution tree generated with the use of the algorithm $\SolSplit$ includes at least two nodes with equal labels.
\end{Lemma}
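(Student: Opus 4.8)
## Proof Plan for Lemma~\ref{lemma:regcommeq}

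The plan is to show that along any infinite path in the solution tree generated by $\SolSplit$, the equations labelling the \emph{primary} nodes are drawn from a finite set, which (by König's lemma applied to an infinite path, or simply the pigeonhole principle) forces two of them to coincide. The key structural claim to establish is an invariant: if $\Phi\seq\Psi$ is strictly regular-ordered with repetitions, then every equation appearing at a primary node of its $\SolSplit$-solution tree is again strictly regular-ordered with repetitions, \emph{and} its length is bounded by a constant depending only on the original equation. Once both boundedness of length and preservation of the property are in hand, finiteness of the set of reachable equations (up to renaming of the finitely many variables of the original equation, since no fresh variables are introduced) is immediate, and the lemma follows.

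First I would analyze what the elementary $\LogLang$-narrowings and the $\FunSim$ simplification of $\WeqIntSplit$ do to the ``erased'' word $\xi(\Phi)$ (the projection deleting all explicit letters). The narrowings $\varx\rar\empt$, $\varx\rar t\conc\varx$ with $t\in\CAlph$, and $\varx\rar\vary\conc\varx$ each act on $\xi(\Phi)$ and $\xi(\Psi)$ in lockstep: deleting a letter does nothing to the erased word, prepending a letter does nothing to it, and $\varx\rar\vary\conc\varx$ replaces each leading $\varx$ by $\vary\conc\varx$ on whichever side it heads — but since in a reduced strictly regular-ordered equation both sides begin with the \emph{same} variable occurrence pattern, reduction removes the common prefix, so the case $\varx\rar\vary\conc\varx$ with $\varx\neq\vary$ cannot actually be triggered by a reduced such equation. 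Thus on the erased level only $\varx\rar\empt$ (erasing one variable occurrence from both sides symmetrically, after reduction cancels the shared leading term) and $\varx\rar t\conc\varx$ (erased word unchanged) can fire, and in both cases $\xi$ of the two sides stays equal. The left-splits performed by $\FunSim$ respect this too: Proposition~\ref{Prop:split1} splits at a var-permutated boundary, and for a strictly regular-ordered equation the variable occurrences line up in the same order, so each split factor is itself strictly regular-ordered with repetitions. Hence the invariant ``$\xi(\text{lhs})=\xi(\text{rhs})$'' is preserved on every equation in the list at every primary node.

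Next I would bound the length. The crucial observation is that $\varx\rar t\conc\varx$ increases the number of explicit letter occurrences, which is the only source of growth, so I must argue it cannot be applied unboundedly often along a path without returning to a previously seen configuration. Here the left-split of $\WeqIntSplit$ is doing the real work: after a prefix-letter is introduced and propagated, the reduce-and-left-split loop chops the equation at the first var-permutated prefix, isolating short ``letter versus variable'' commutation equations (of the shape $\pref_1\seq\pref_2$ with equal erasures, e.g. $\varx\conc\cA\seq\cA\conc\varx$-like pieces) whose lengths are controlled by the multiset of variables, which never grows. I would make this precise by a potential/measure argument: assign to an equation list the pair (total number of distinct variable occurrences across all equations, bounded above by $\sum_{\varx}|\Phi|_{\varx}+|\Psi|_{\varx}$ initially, and non-increasing) together with the number of letters that can still be ``absorbed'' before a split must occur; show the first component never increases and strictly decreases when $\varx\rar\empt$ is applied, and that between two consecutive decreases only boundedly many $\varx\rar t\conc\varx$ steps (hence boundedly much letter growth) can occur before $\FunSim$ forces a split that resets the suffix to bounded length. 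Combined with the preserved strict-regular-ordered structure, this yields a uniform bound $L$ on the length of every equation in every list at a primary node.

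The main obstacle I anticipate is the second step — proving the uniform length bound, specifically ruling out an infinite path that keeps applying $\varx\rar t\conc\varx$ to the non-var-permutated \emph{suffix} $\suff_{1,i,j}\seq\suff_{2,i,j}$ without ever creating a fresh var-permutated prefix to split off. One must show that when the erased suffix words are equal and nonempty, prepending letters eventually produces a var-permutated prefix (since the variables eventually realign), forcing a split and bounding how long the suffix can grow; this is where the precise definition of $\FunSim$ in $\WeqIntSplit$ (reduce, then repeatedly left-split the suffix until no var-permutated prefix remains) and the strict-regular-ordered hypothesis must be combined carefully, possibly via a lexicographic measure on (number of variable occurrences, position of the first point where the two erased sides would diverge under letter insertion). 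Modulo that delicate measure argument, the rest is bookkeeping: finitely many variables, finitely many equations per list (bounded by the number of variable occurrences), each of length $\leq L$, gives finitely many configurations up to renaming, so any infinite path repeats a label, which is exactly the statement of the lemma.
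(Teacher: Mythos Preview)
Your high-level plan is sound and you correctly establish the key invariant: strictly regular-ordered-with-repetitions is preserved under the narrowings and under left-splitting, and in particular the narrowing $\varx\rar\vary\conc\varx$ never fires. You also correctly pinpoint the hard step: controlling the length growth caused by repeated applications of $\varx\rar t\conc\varx$ with $t\in\CAlph$. However, your proposed resolution---a global length bound via a lexicographic potential---is left as a sketch, and this is precisely the part that carries all the weight. As stated, the proposal has a genuine gap: you have not shown that the ``letters absorbed before a split'' quantity is bounded by anything depending only on the \emph{original} equation rather than on the current one, so the argument as written could be circular.

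The paper sidesteps this difficulty by a different decomposition. Rather than bounding lengths globally, it first observes that along any infinite path both the $\varx\rar\empt$ narrowings and the split operations can each occur only finitely often (each strictly decreases the total number of variable occurrences, which is bounded by that of the original equation). Hence every infinite path has an infinite \emph{tail} along which only narrowings of the form $\varx\rar t\conc\varx$, $t\in\CAlph$, occur and no splits happen. On that tail the paper does a case analysis on the shape of the equations in the list: either (i) every equation containing $\varx$ is a simple commutation $\Theta_j\conc\varx\seq\varx\conc\Theta'_j$ with $\Theta_j,\Theta'_j\in\CAlph^+$, in which case each step merely cyclically permutes the constant block and all lengths are preserved, so only finitely many labels occur and two must repeat; or (ii) some equation has a second occurrence of $\varx$ (or a non-leading first occurrence), and then one computes explicitly a bound $m$ on the number of steps after which a var-permutated prefix must appear, forcing a split---contradicting the choice of the tail. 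This ``pass to a stable tail, then derive either finiteness or a contradiction with stability'' structure replaces your global measure and is what you are missing. If you want to repair your approach, the cleanest fix is exactly this: argue finiteness of splits and of $\varx\rar\empt$ first, then analyze the tail.
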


\noindent The idea of the proof is as follows. Every such an equation $\Phi\seq\Psi$ is split into the several quadratic equations after a number of substitutions applied to it. The detailed proof is given in Appendix (\sectionname~\ref{proof::regcommeq}).  

\begin{Coroll}
Given any strictly regular-ordered equation with repetitions $\Phi\seq\Psi$, specialization of the verification task $\SpecTask\hspace{-0.25ex}\left(\WeqIntSplit,\langle\Phi\seq\Psi\rangle\right)$ succeeds.
\end{Coroll}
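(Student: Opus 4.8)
The plan is to derive the corollary directly from Lemma~\ref{lemma:regcommeq} together with Lemma~\ref{lemma:optimality}, so that essentially no new work is needed beyond assembling the pieces already established. First I would recall the characterization of termination of the specialization task given in \sectionname~\ref{Sect:VerRes}: the stepwise unfolding of $\SpecTask(\WeqIntSplit, \langle\Phi\seq\Psi\rangle)$ produces a possibly infinite process tree, and the specialization terminates precisely when along every infinite path in that process tree there are two nodes carrying the same configuration up to parameter renaming (so that the folding step applies and cuts the path). Hence it suffices to show that every infinite path in the process tree of $\SpecTask(\WeqIntSplit,\langle\Phi\seq\Psi\rangle)$ contains two nodes with equal configurations.

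The key step is to transfer this from the process tree to the solution tree generated by the algorithm $\SolSplit$. By Lemma~\ref{lemma:optimality}, the specialization of $\SpecTask(\WeqIntSplit,\Phi\seq\Psi)$ is optimal, so every folding arc connects primary nodes, and by Property~\ref{Prop:unfold} every primary node is labelled by a configuration of the form $\FunEq(\pv_i,\wholeencode{\Eqs'})$ where $\Eqs'$ is (the encoding of) an equation list; the transient path segments between consecutive primary nodes collapse to single arcs. Thus, as explained in \sectionname~\ref{Subsect::Optimality}, the process graph reduces to the solution graph of $\Phi\seq\Psi$ built by $\SolSplit$: primary configurations correspond bijectively to nodes of the solution tree (equation lists), and two primary configurations are equal up to parameter renaming iff the corresponding equation lists coincide textually. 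Consequently, an infinite path in the process tree maps to an infinite path in the $\SolSplit$ solution tree of $\Phi\seq\Psi$, and a repetition of equation lists on the latter yields a repetition of configurations on the former.

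Now I would invoke Lemma~\ref{lemma:regcommeq}: since $\Phi\seq\Psi$ is strictly regular-ordered with repetitions, every infinite path in its solution tree generated by $\SolSplit$ includes at least two nodes with equal labels. Combining this with the previous paragraph, every infinite path in the process tree of $\SpecTask(\WeqIntSplit,\langle\Phi\seq\Psi\rangle)$ contains two nodes with configurations equal up to parameter renaming, so the folding step eventually fires on each such path, and the unfolding process terminates; moreover the resulting residual graph has an $\OutTrue$-leaf iff the solution set of $\Phi\seq\Psi$ is nonempty, so the verification succeeds. The only point demanding care — and the natural candidate for the ``main obstacle'' — is the faithful matching between infinite paths of the process tree and infinite paths of the $\SolSplit$ solution tree: one must be sure that the collapsing of transient segments guaranteed by optimality does not merge or lose path structure, and that the simplification function $\FunSim$ of $\WeqIntSplit$ produces exactly the equation list that labels the corresponding solution-tree node. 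Both facts are already contained in the optimality lemma and in the description of $\WeqIntSplit$, so the corollary follows without further combinatorial input. $\qed$
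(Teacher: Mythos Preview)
Your proposal is correct and follows essentially the same route as the paper: the corollary is stated immediately after Lemma~\ref{lemma:regcommeq} with no separate proof, since the paper has already set up (in \sectionname~\ref{Subsect::Optimality}) that, under the optimality provided by Lemma~\ref{lemma:optimality}, the verification succeeds precisely when the solution graph built by the corresponding algorithm is finite. You have simply spelled out explicitly the reduction from the process tree to the $\SolSplit$ solution tree via optimality and then invoked Lemma~\ref{lemma:regcommeq}, which is exactly the intended inference.
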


\subsection{Specialization of Counting Interpreter}\label{Subsect::SpecCount}

The interpreter $\WeqIntSym$ uses more simplifying operations as compared to the interpreter $\WeqIntSplit$. The specialization of this interpreter succeeds additionally in solving one-variable equations. The success of the verification is guaranteed by the following lemma.

\begin{Lemma}\label{lemma:onevareq}
Given an equation $\Phi\seq\Psi$, where $\Phi,\Psi\in \{\CAlph\cup \{\varx\}\}^*$, and $|\Phi\conc\Psi|_{\varx}>2$, every infinite path of its solution graph constructed with the use of the algorithm $\SolCount$ contains a split. 
\end{Lemma}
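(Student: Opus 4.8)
The plan is to assume that $\Phi\seq\Psi$ has an infinite path $\pi$ in its solution graph built by the algorithm $\SolCount$ along which no split is ever performed, and to derive a contradiction from $|\Phi\conc\Psi|_{\varx}>2$. Since only one variable $\varx$ occurs, the narrowing $\varx\rar\vary\conc\varx$ is never compatible, and $\varx\rar\empt$ turns the equation into a variable-free one, which closes the branch (with $\OutTrue$ or a contradiction). Hence every step of $\pi$ applies some $\varx\rar\cA\conc\varx$ to a reduced equation one side of which begins with $\varx$ and the other with the letter $\cA$; write $\Phi_k$ for the $\varx$-starting side and $\Psi_k$ for the other side at the $k$-th node. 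A direct computation shows that substituting $\varx\rar\cA\conc\varx$ and reducing cancels exactly the one leading letter $\cA$ and no suffix, keeps $\Phi_{k+1}$ starting with $\varx$, and — since $\Psi_k$ can never be a single letter on an infinite path — keeps $\Psi_{k+1}$ starting with a letter. Therefore $a:=|\Phi_k|_{\varx}$ and $b:=|\Psi_k|_{\varx}$ are constant along $\pi$ (with $a\ge1$, and $b\ge1$ since $b=0$ makes $\Psi_k$ lose one leading letter per step and closes $\pi$), while the lengths grow affinely: $|\Phi_k|=|\Phi_0|+k(a{-}1)$ and $|\Psi_k|=|\Psi_0|+k(b{-}1)$. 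From $a+b=|\Phi\conc\Psi|_{\varx}>2$ and $a,b\ge1$ we get $\max(a,b)\ge2$.

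Three cases arise. If $a\ne b$, the two sides of $\Phi_k\seq\Psi_k$ always have distinct $\varx$-counts (so are not var-permutated), and since their letter-occurrence counts grow with the different slopes $a{-}1$ and $b{-}1$, for large $k$ the side having more occurrences of $\varx$ has also strictly more letter-occurrences; hence Proposition~\ref{Prop:count1} applies to $\Phi_k\seq\Psi_k$ and turns the node into a contradiction-leaf, against the infinitude of $\pi$. If $a=b$ but $|\Phi_0|\ne|\Psi_0|$, then $|\Phi_k|-|\Psi_k|$ is a nonzero constant, the sides are again not var-permutated, and Proposition~\ref{Prop:count1} likewise closes $\pi$. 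This leaves the hard case $a=b=:m\ge2$ with $|\Phi_k|=|\Psi_k|$ for all $k$.

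In the hard case I would exploit the explicit ``gap structure'' of the equations along $\pi$. A lattice-path argument on the cumulative $\varx$-count profiles of the two equal-length sides shows that split-freeness at node $k$ is equivalent to these profiles never coinciding at an interior position, and that this forces $\Phi_k$ to end in a letter, $\Psi_k$ to end in $\varx$, and, for every $c<m$, the $(c{+}1)$-st occurrence of $\varx$ in $\Phi_k$ to appear no later than the $c$-th occurrence of $\varx$ in $\Psi_k$; in particular the second $\varx$ of $\Phi_k$ appears no later than the first $\varx$ of $\Psi_k$. Writing $\Phi_k=\varx\,u_1^{(k)}\varx\cdots\varx\,u_m^{(k)}$ and $\Psi_k=v_0^{(k)}\varx v_1^{(k)}\varx\cdots\varx v_{m-1}^{(k)}\varx$ with $u_i^{(k)},v_i^{(k)}\in\CAlph^*$, $u_m^{(k)}\ne\empt$, $v_0^{(k)}\ne\empt$, a single $\varx\rar\cA\conc\varx$ step (with $\cA$ the first letter of $v_0^{(k)}$), followed by reduction, yields the recurrence $u_i^{(k+1)}=u_i^{(k)}\cA$ for $i<m$, $u_m^{(k+1)}=u_m^{(k)}$, $v_i^{(k+1)}=v_i^{(k)}\cA$ for $1\le i\le m{-}1$, and $v_0^{(k+1)}$ the one-letter cyclic rotation of $v_0^{(k)}$. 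Hence $|v_0^{(k)}|$ is constant, so the position $1+|v_0^{(k)}|$ of the first $\varx$ of $\Psi_k$ stays bounded along $\pi$, whereas the position $2+|u_1^{(k)}|$ of the second $\varx$ of $\Phi_k$ tends to infinity. For large $k$ this contradicts the inequality extracted from split-freeness; concretely, at the interior position $i=1+|v_0^{(k)}|$ the two profiles both equal $1$, so the length-$i$ prefixes of $\Phi_k$ and $\Psi_k$ are non-empty, var-permutated, and strictly shorter than the sides — a genuine left-split at node $k$, contradicting the choice of $\pi$.

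Hence no split-free infinite path exists, which is the assertion. The main obstacle is the last case: one must derive and verify the gap recurrence — especially that the leading gap $v_0^{(k)}$ merely rotates, so that the first $\varx$ of $\Psi_k$ stays at a bounded position while all other gaps grow — and translate ``no split'' into the elementary interleaving condition on the positions of the $\varx$'s. One should also fix the precise notion of a ``split'' in $\SolCount$ (it must be non-vacuous, i.e.\ produced by a \emph{proper} non-empty var-permutated prefix) so that the three cases above are exhaustive.
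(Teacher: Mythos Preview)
Your argument is correct and takes a genuinely different route from the paper. The paper proceeds by a direct structural case split on the reduced equation with explicit step bounds: if the $\varx$-starting side contains at least two occurrences of $\varx$, Proposition~\ref{prop:firstsym} exhibits a left-split after exactly $\max(0,n-m)$ further narrowings (where $n$ is the length of the leading constant block on the other side and $m$ the first gap on the $\varx$-side); if that side is $\varx\,u_1\cdots u_m$ with a single $\varx$, a mirror computation yields a right-split after at most $m$ steps; and if that side is just $\varx$, Proposition~\ref{Prop:count1} fires at once. You instead track the invariants $a=|\Phi_k|_\varx$, $b=|\Psi_k|_\varx$ and the two side-lengths asymptotically, dispatching every unbalanced situation ($a\ne b$, or $a=b$ with unequal lengths) by the Count step, and reserving the split argument for the balanced case $a=b\ge 2$ with equal lengths, which you handle via the gap recurrence and the profile inequality. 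That hard case is essentially a reformulation of Proposition~\ref{prop:firstsym}: the leading gap $v_0$ merely rotates (so its length is constant) while $u_1$ strictly grows, hence the required inequality $j_2\le i_1$ eventually fails. The paper's version is more constructive --- it names the exact step at which the split appears --- whereas yours is more uniform and makes transparent why the Count step alone already disposes of every length-unbalanced branch. One small slip: the equality $a+b=|\Phi\conc\Psi|_{\varx}$ need not hold, since the initial reduction may cancel matching leading or trailing $\varx$'s; but if $a+b\le 2$ the reduced equation is quadratic and its $\SolCount$-graph is finite (no infinite path exists), so along an assumed infinite path you still get $a+b>2$ and your case split goes through unchanged.
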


\noindent The idea of the proof is similar to the one of Lemma~\ref{lemma:regcommeq}: if $\Phi\seq\Psi$ has a solution, then after an application of a number of the substitutions, the resulted equation will have var-permutated prefixes or suffixes. See the Appendix for the details (\sectionname~\ref{proof:onevareq}).

\begin{Coroll}
For any one-variable equation $\Phi\seq\Psi$, specialization of the task $\SpecTask(\WeqIntSym, \langle\Phi\seq\Psi\rangle)$ succeeds.
\end{Coroll}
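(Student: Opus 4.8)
The plan is to combine the optimality Lemma~\ref{lemma:optimality} with Lemma~\ref{lemma:onevareq} and the classical theory of quadratic equations. By Lemma~\ref{lemma:optimality} the task $\SpecTask(\WeqIntSym,\langle\Phi\seq\Psi\rangle)$ is optimal, so, as discussed in \sectionname~\ref{Subsect::Optimality}, its specialization succeeds iff the solution graph built by the corresponding algorithm --- here $\SolCount$ --- is finite. That graph is finitely branching, since at each step only the narrowings of \figurename~\ref{tab:classicsubstrules}(a) apply and which of them apply is fixed by the leading terms of the first equation; hence, by K\"onig's lemma, the graph is finite iff the solution tree has no infinite path. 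So it suffices to show that the $\SolCount$-solution tree of $\Phi\seq\Psi$ has no infinite path, and I would split on $k=|\Phi\conc\Psi|_\varx$.

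If $k\leq 2$ the equation is quadratic, and I would appeal to the classical fact (already used for Proposition~\ref{Prop:quad}) that the Nielsen graph of a quadratic equation is finite, after checking that the extra simplifications of $\WeqIntSym$ over $\WeqIntBase$ --- reductions, left/right splits, and the contradiction test of Proposition~\ref{Prop:count1} --- never enlarge a configuration and split a quadratic equation only into quadratic pieces, so finiteness is preserved; the same remark covers the quadratic equation lists that arise as node labels.

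If $k>2$ I would argue by a well-founded descent. To each node, labelled with a list $\langle\eq_1,\dots,\eq_m\rangle$, assign the finite multiset $\mu=\{\,|\Phi_j|_\varx+|\Psi_j|_\varx : 1\leq j\leq m\,\}$, ordered by the multiset extension of $<$ on $\mathbb{N}$. The two facts to establish are: (i) a substitution (necessarily $\varx\rar\empt$ or $\varx\rar t\conc\varx$ with $t\in\CAlph$ in the one-variable case), a reduction, and the Proposition~\ref{Prop:count1} test are all non-increasing for $\mu$, because a substitution either preserves or zeroes the $\varx$-count of each side, reduction only removes symbols (possibly collapsing a list to a detected contradiction), and the counting test only prunes; (ii) whenever a genuine split fires during simplification, $\mu$ strictly decreases. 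For (ii): the equation being split is in reduced form, so its sides start with different terms, one of them the variable $\varx$; hence the shortest var-permutated prefixes each contain an occurrence of $\varx$, and the surviving suffix piece is non-trivial --- it is either a detected contradiction, collapsing the whole list (so $\mu$ drops to a singleton), or a non-trivial equation with at least one $\varx$; in the latter case the $\varx$-count of the split equation is partitioned among the prefix pieces and the suffix piece with each part strictly smaller than that count, so in $\mu$ one element is replaced by strictly smaller ones (or by none). Thus $\mu$ is non-increasing along every path and strictly decreasing at every split. Now by Lemma~\ref{lemma:onevareq}, applied to $\Phi\seq\Psi$ and to the sub-lists labelling its descendants (the proof of that lemma in fact exhibits a split of the currently active equation), every infinite path contains a split, hence infinitely many; along such a path $\mu$ would strictly decrease infinitely often, contradicting well-foundedness of the multiset order. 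Therefore the solution tree has no infinite path, the solution graph is finite, and by optimality the specialization of $\SpecTask(\WeqIntSym,\langle\Phi\seq\Psi\rangle)$ succeeds.

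I expect the main obstacle to be the accounting in step (ii): one must verify that the reduced form is genuinely maintained throughout, so that a surviving suffix piece carrying no variable is always turned into a contradiction rather than into a spurious tautology, and that each of the iterated left- and right-splits inside a single simplification step obeys the same bookkeeping. A secondary technical point is transferring Lemma~\ref{lemma:onevareq} from single equations to the equation lists occurring as node labels and confirming that the quadratic base case extends to quadratic lists; both should follow from the arguments already given, but need to be stated carefully.
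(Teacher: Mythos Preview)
Your approach is essentially the one the paper has in mind, only spelled out. The paper states the corollary immediately after Lemma~\ref{lemma:onevareq} with no further argument; the implicit reasoning is exactly the descent you describe: each split strictly lowers the $\varx$-multiplicity of the affected equation, so repeated application of Lemma~\ref{lemma:onevareq} drives every equation down to the quadratic range, where Matiyasevich's bound (Proposition~\ref{Prop:quad}) finishes the job. Your multiset measure $\mu$ is a clean way to make this precise, and your observation that a split piece with $\varx$-count~$0$ must be a detected contradiction rather than a tautology is correct, because the parent equation is reduced and hence its sides end with distinct terms.

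One point deserves more care than you give it, and it is precisely the one you flag as a ``secondary technical point''. Lemma~\ref{lemma:onevareq}, as proved in the appendix, concerns the equation currently driving the narrowings --- the first one in the list --- and shows that the specific sequence $\varx\rar t_1\varx,\dots,\varx\rar t_k\varx$ dictated by \emph{its own} leading letters forces a split. After a split the high-multiplicity equation may sit in a non-first position, with the letters $c$ in $\varx\rar c\varx$ now chosen by a different (possibly quadratic) first equation. The fix is to note that, in the one-variable setting, \emph{any} narrowing $\varx\rar c\varx$ applied to a reduced equation of the form $t_1\dots t_n\varx\,T=\varx\dots$ either has $c=t_1$ (and the appendix argument for Proposition~\ref{prop:firstsym} or case~(2) of Lemma~\ref{lemma:onevareq} advances one step) or has $c\neq t_1$ (and the equation becomes a letter--letter clash, collapsing the list). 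So along any infinite path every equation with $\varx$-count~$>2$, first or not, still splits within boundedly many steps, and your descent goes through. Once this is made explicit, the quadratic-lists base case is routine: the total length of a quadratic system is non-increasing under the narrowings of \figurename~\ref{tab:classicsubstrules}(a), exactly as in the classical argument behind Proposition~\ref{Prop:quad}.
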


\noindent In order to experimentally test the verification technique presented in this paper, we have generated a~benchmark consisting of 50 equation systems\footnote{The archive containing the equations is given on the~web-page~\url{https://github.com/TonitaN/TestEquations}.}: the tests 1--10 are the regular-ordered equations with repetitions; the tests 11--20 are similar to the regular-ordered equations with repetitions, but the variable occurrences order may be different on the equation sides where the variables occur; the tests 21--30 contain equations of the form $\varx\conc\Phi\seq \Psi\conc\varx$, where $\Phi\seq\Psi$ is a regular-ordered equation with repetitions and neither $\Phi$ nor $\Psi$ contain $\varx$; the tests 31--40 present systems of the regular-ordered equations with repetitions mixed with equations of the form $\Phi\conc\Psi\seq\Psi\conc\Phi$; the tests 41--50 are equations of no special form sharing several variables in right- and left-hand sides.  

The supercompiler SCP4~\cite{NemytykhBook} was mainly used as $\SpecTask$ in the tests. The experimental supercompiler~\mscp\;was also used and has shown the same solvability results\footnote{See the web-page \url{http://refal.botik.ru/mscp/weq\_int\_readme.html} for details.} on the tests above, but it spends much more time for producing the results as compared with \texttt{SCP4}. The comparative verification results between the approach presented in this paper and the external SMT-solvers \texttt{CVC4}, \texttt{Z3str3} are presented in \figurename~\ref{fig:bench}, the last row. The results show that the scheme $\SpecTask(\WeqIntSym,\Eqs)$ is quite stable modulo small changes in the equations which are guaranteed to be solved by it.

Finally, we have tested the scheme $\SpecTask(\WeqIntSym,\Eqs)$ on the equation set provided by the paper~\cite{Day2019Woorjpe} as a benchmark for the string constraint solver \texttt{Woorpje}, namely Track 1 consisting of 200  equations guaranteed to have a solution; and Track 5 consisting of 200 equation systems\footnote{The reader can find the residual programs encoding paths in the solution graphs for the equations generated in the experiments at the web-page~\url{https://github.com/TonitaN/TestEquations}.}. We have removed the~length constraints from the Track 5 benchmark before the specialization starts. The results are quite successful, provided that we use the general-purpose specialization tool for the verification. First of all, the residual programs constructed by $\SpecTask(\WeqIntSym, \Eqs)$ never contain functions returning $\OutTrue$ if the system $\Eqs$ has been found unsatisfiable by the other solvers. Moreover, if the system $\Eqs$ has solutions, then $\SpecTask(\WeqIntSym, \Eqs)$ always generates programs containing functions with the output $\OutTrue$, if terminates.  That is a practical evidence that the specialization produces sound and complete solution graphs, which is a corollary of the optimality lemma. Second, the equations are successfully solved in 179 out of 200 cases in Track 1 and in 181 out of 200 cases in Track 5. This result is comparable with the verification results done by \texttt{Z3str3}~\cite{Z3}; for 17 equations in Track 1 the specialization process does not terminate. In the remaining cases, the specialization process is theoretically terminating but takes too much time. The equations for which the specialization is the most time-consuming all are linear, \ie every variable occurs in at most once per such an equation. 

The average runtime of $\SpecTask(\WeqIntSym, \Eqs)$ on the tests considered is about 3.5 minutes per task. While the runtime of all the tests solved successfully by \texttt{CVC4} or \texttt{Z3str3} is less than 2 minutes. Although we have used the 3-hour timeout, the long-running tests resulting in the verification success occurred only for the scheme $\SpecTask(\WeqIntSym, \Eqs)$. There are two main reasons of such a difference. First, the scheme uses the general-purpose specialization tool, employing time-consumable transformation operations, such as the residualization. Second, the scheme considered searches for all the solutions of the equations rather than for at least one. That is the main reason for the long runtimes, and is immanent to the problem solved. Many equations have the solution set exponentially-sized \wrt their length.  
   
\begin{figure}[h]
\center
\begin{tabular}{|l|c|c|c|c|}\hline
\multirow{2}{*}{\qquad Benchmark}& \multirow{2}{*}{Tests (total)}&\multicolumn{3}{c|}{Timeout / not terminated}\\\cline{3-5}
 &  &$\upskipper^{\upskipper}$\texttt{CVC4}\,\, &$\upskipper^{\upskipper}$\texttt{Z3str3}\,\, &$\upskipper^{\upskipper}\WeqIntSym$\,\, \\\hline
Track 1 (Woorpje)$\upskipper^{\upskipper}$ & 200 & 8 & 13 & 21 \\
Track 5 (Woorpje)$\upskipper^{\upskipper}$ & 200 & 4 & 14 & 19 \\
Our benchmark$\upskipper^{\upskipper}$ & 50 & 21 & 28 & 10 \\\hline 
\end{tabular}
\caption{The verification results.}
\label{fig:bench}
\end{figure}

\section{Discussion}\label{Sect:Discuss}
 
The discussed specialization tasks above have been solved by using the following two program specializers developed for the string manipulating functional language Refal~\cite{TurRefal5}, namely the model supercompiler\footnote{The supercompiler is presented on the web-page \url{http://refal.botik.ru/mscp/mscp-a\_eng.html}} \mscp~and the experimental supercompiler \texttt{SCP4}~\cite{NemytykhBook}. Albeit we used the supercompilers, the properties of the residual programs, which we are interested in, do not depend on specific features of the supercompilation method~\cite{Tur86} and can be reproduced with other specializers based on partial evaluation, partial deduction, and so on \cite{JonesBook,Pettorossi96}. We used the language $\IntLang$ with built-in concatenation constructor, however the method can be used also over the lisp-data with some minor changes in the interpreters' source code. First, all the parameter narrowings are constructed only by specializing the head of the list of the $\LogLang$-narrowings. Second, the optimality lemma guarantees that the additional loops in the intermediate steps of the interpretation would not cause any folding operations, since the data given to the intermediate functions is not parameterized. Thus, the structure of the residual programs is preserved.

Our approach is able to solve the regular-ordered equations with repetitions~(see \ref{subsect::regrep}), which are hard to solve for the known existing solvers, especially in the case when the solution set is empty. For example, neither \texttt{Z3str3} nor \texttt{CVC4} terminates on the equation $\cA\conc\cB\conc\vx\conc\vx\conc\vy\conc\vy\seq\vx\conc\vx\conc\vy\conc\vy\conc\cB\conc\cA$, which is proved to have no solution by the $\WeqIntSplit$ or $\WeqIntSym$ specialization. The feature to solve equations with the empty solution set is especially interesting provided the fact that this case is the hardest for the most theoretical algorithms, and is also the bottleneck for the SMT-solvers used in our tests. 

The domain of the described verification method is not exhausted by the sets of the equations considered above. One more interesting class of the equations with the variables shared by left- and right-hand sides consists of the equations of the form $\Phi\conc\vary\seq\vary\conc\Psi$, where $\Phi\seq\varx\conc\Phi_1\conc\varx\dots\varx\conc\Phi_n$, $\Psi\seq\Psi_1\conc\varx\dots\varx\conc\Psi_n\conc\varx$, where $\Phi_i,\Psi_i\in\CAlph^*$ and $\exists k\in\mathbb{N}\,\forall i,j\,|\Phi_i|\seq k\logand |\Psi_i|\seq k$. Examining their solution graphs, we can prove that the specialization task $\SpecTask(\WeqIntSym,\Eqs)$ successfully solves such equations. The experiments with the randomly chosen equations mentioned above promise to find other interesting classes of the word equations that can be solved by automated specialization tools.

\subsection{Related Works}\label{Subsect::relwork}

A number of efficient string constraint solving tools were designed, which look for the word equation solutions bounded by a given length, \eg \cite{Bjorner,Day2019Woorjpe}\footnote{Actually, if the upper bound is assigned dynamically, such a tool can decide solvability of every word equation, because a minimal solution length is at most doubly exponential in the equation length~\cite{Jez}.}. Reasonings on the unbounded case can provide efficient methods for solution search if the equations considered satisfy some special properties. For example, a number of efficient solving algorithms have been designed for the set of straight-line word equations, \eg\cite{Abdulla,Ostrich}, whereas our specialization \wrt such equations is too time-consuming. The difference is again rooted in the tasks considered: our approach looks for a description of the whole set of equation solutions, while SMT-solvers aim to find at least one solution. 

Several recent works exploit the unfold/fold technique with Nielsen's transformation for solving quadratic word equations, in the way originated by Matiyasevich in 1968~\cite{Matiyas}. In the paper~\cite{Lin2018}, the algorithm using non-deterministic counter systems for searching solutions of the unbounded-length word equations with regular constraints via Nielsen's transformation is introduced, and the completeness of the given algorithm has been shown for the set of the regular ordered equations. Maybe the regular ordered equations with repetitions, being split in the way shown in $\SolSplit$ algorithm~(\sectionname~\ref{Sect:WeqInt}), can be solved by this method as well. In the paper~\cite{Le2018}, Nielsen's transformation is used for solving unbounded-length quadratic word equations. As in the original Matiyasevich work, the algorithm does not terminate if the input equation contains more than two occurrences of some variable. Thus, these algorithms are not able to solve non-quadratic regular-ordered equations with repetitions, which are solvable by our specialization scheme. Advanced SMT-solvers such as \texttt{CVC4}~\cite{CVC4}, \texttt{Z3str3}~\cite{Z3}, or \texttt{S3P} presented in~\cite{Trinh} demonstrate a very good efficiency in many practical cases, however the paper~\cite{Le2018} shows that their algorithms are not complete even \wrt the~set of the quadratic equations, \eg do not solve on the equation $\varx\conc\vvarv\conc\vary\seq\vary\conc\varw\conc\varx$, whose solution set is quite complex; see Hmelevskij's work~\cite{Hmelevsky} for the very first proof of this fact. The tests of our benchmark have shown that the current version of \texttt{CVC4} solves equations of the given form, but fails to solve more complex quadratic equations, \eg $\varx_1\conc\varx_2\conc\varx_3\conc\cA\cB\cA\cB\cA\cB\seq \cA\cA\cA\cB\cB\cB\conc\varx_2\conc\varx_3\conc\varx_1$ (which is solved by specialization even of the basic interpreter $\WeqIntBase$). Based on the results of the verification presented in this paper we may conclude that the most troublesome cases for the SMT-solvers are the ones when the equation system has no solution, and this fact cannot be shown by reasoning on solution lengths. In that cases, the verification scheme $\SpecTask(\WeqIntSym, \Eqs)$ has the best success rate as compared to \texttt{CVC4} and \texttt{Z3str3}. 

\section{Conclusion}
\label{Sect::Conclusion}

We have shown that general-purpose specializers can be useful for solving some classes of the word equations. Instead of modifying the specialization tools, we modify the word equation interpreters specialized according with the verification scheme. This technique uses a modification of the classical first Futamura projection~\cite{Futamura} and simplifies the work of interest. Starting from the simplest interpreter $\WeqIntBase$, every new refinement extends significantly the set of the equations solvable via the specialization method. The specialization-time overheads are high as compared with the direct work of the existing string constraint solvers, but the specialization method presented in this paper aims at supporting development of the solver prototypes with a minimal effort. Experiments with the prototypes provide a fruitful research material on the sets of the word equations over which the verification algorithm terminates. Moreover, we have shown that theoretical methods for solving the word equations can be useful in the automatic approach, hence these methods are able to prove unsatisfiability of word equation systems, which, as our experiments show, is hard for some well-known state-of-art SMT-solvers.

Another interesting aspect of the presented verification experiments is the optimality. The non-deterministic algorithms for solving the word equations are well-designed in order to be used in the~intermediate interpretation. This paper considers the optimality property in the case of the basic folding, however our experiments show that the constructed interpreters provide possibility for the optimal verification, if one uses a more complex path termination criterion based on the homeomorphic embedding relation~\cite{Sorensen95}. Thereby advanced specialization tools can also be used for solving the word equations, and the additional strategies developed for program transformation may support more efficient algorithms as compared with the basic unfold/fold method.

\section*{Acknowledgements}
We would like to thank A.~P.~Nemytykh, who contributed greatly to the improvement of the paper, and the anonymous referees for the thoughtful comments which helped a much to clarify the presentation. 

\label{sect:bib}

\nocite{*}
\bibliographystyle{eptcs}
\bibliography{weqs_nepan2020}

\section*{Appendix}
\label{section:Appendix}

\subsection{Proofs and Auxiliary Propositions on Optimality}

Given a word $\Phi$, a prefix (suffix) of $\Phi$ is said to be proper if it does not coincide with $\Phi$. Henceforth we consider only the non-empty prefixes, \eg the var-permutated prefixes are non-empty by their definition.

\begin{Prop}\label{Prop:consimprop}
Let $\Eq$ be a reduced word equation $\Phi\seq\Psi$ with the var-permutated sides and without var-permutated suffixes and prefixes; $\sigma$ be an arbitrary substitution given in  \figurename~\ref{tab:classicsubstrules}. Let $\Eq'$ be $\Phi\sigma\seq\Psi\sigma$ after the reduction. Then the two following properties hold.
\begin{enumerate}
\item If $\Eq'$ is split into $\pref_1\conc\suff_1\seq\pref_2\conc\suff_2$ where $\pref_1$ and $\pref_2$ are var-permutated, then either $\sigma$ is $\varx\rar\empt$ or the equations $\pref_1\seq\pref_2$ and $\suff_1\seq\suff_2$ cannot be reduced.
\item If the length of $\Eq'$ is lesser than the length of $\Eq$, then $\sigma$ is $\varx\rar\empt$.
\end{enumerate} 
\end{Prop}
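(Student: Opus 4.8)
The plan is to case-split on the shape of $\sigma$. If $\sigma$ is $\varx\rar\empt$ both clauses hold trivially, this being exactly the exception allowed in the statement, so assume $\sigma$ is one of the two remaining forms of \figurename~\ref{tab:classicsubstrules} and write it uniformly as $\varx\rar z\conc\varx$ with $z$ a term, $z\neq\varx$ ($z\in\CAlph$ in the case $\varx\rar t\conc\varx$, $z$ a variable in the case $\varx\rar\vary\conc\varx$). Since these narrowings are applied only where they are compatible (\figurename~\ref{tab:classicsubstrules}), one side of $\Eq$ starts with $\varx$ and the other with $z$; say $\Phi\seq\varx\conc\Phi_1$, $\Psi\seq z\conc\Psi_1$, the mirror situation being symmetric. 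As $\Eq$ has var-permutated sides and $z\neq\varx$, the variable $\varx$ occurs in $\Psi$ as well, whence $\Psi_1\neq\empt$, and likewise $\Phi_1\neq\empt$.

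First I would prove the structural heart of the argument: $\Eq'$ equals $\varx\conc(\Phi_1\sigma)\seq\Psi_1\sigma$, and this equation is already reduced; that is, passing from $\Phi\sigma\seq\Psi\sigma$ to $\Eq'$ cancels only the single leading occurrence of $z$. Indeed, $\sigma$ turns every $\varx$ into $z\conc\varx$, so in $\Phi_1\sigma$ and $\Psi_1\sigma$ each occurrence of $\varx$ is immediately preceded by $z$; in particular $\Psi_1\sigma$ does not start with $\varx$ whereas $\varx\conc(\Phi_1\sigma)$ does, which blocks any prefix cancellation beyond the leading $z$. For suffixes, $\sigma$ leaves the last term of a non-empty word unchanged, so the two sides of $\Eq'$ end with the last terms of $\Phi$ and of $\Psi$; these differ because $\Eq$ is reduced, so no suffix cancellation takes place. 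Clause~2 then follows by a direct length count on top of this: from $|\Phi_1\sigma|\seq|\Phi_1|+|\Phi_1|_{\varx}$, $|\Psi_1\sigma|\seq|\Psi_1|+|\Psi_1|_{\varx}$ and the identity $|\Psi_1|_{\varx}\seq 1+|\Phi_1|_{\varx}$ (a rewriting of $|\Phi|_{\varx}\seq|\Psi|_{\varx}$), one obtains $|\Eq'|\seq|\Eq|+2|\Phi_1|_{\varx}\geq|\Eq|$, so a strict length decrease is impossible unless $\sigma$ is $\varx\rar\empt$.

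For clause~1 I read ``$\Eq'$ is split into $\pref_1\conc\suff_1\seq\pref_2\conc\suff_2$'' as the split performed by the simplification function of $\WeqIntSplit$ (\sectionname~\ref{subsect::splitint}), \ie $\pref_1,\pref_2$ are the \emph{shortest} non-empty var-permutated prefixes of the sides $A\seq\varx\conc(\Phi_1\sigma)$ and $B\seq\Psi_1\sigma$ of $\Eq'$; set $\ell\seq|\pref_1|\seq|\pref_2|$, and write $w[i]$ for the $i$-th term of a word $w$. Half of the claim is free from the structural heart: $\pref_1$ starts with $\varx$ while $\pref_2$ does not, so $\pref_1\seq\pref_2$ is irreducible at the front; and $\suff_1,\suff_2$ end with the last terms of $\Phi,\Psi$, which differ, so $\suff_1\seq\suff_2$ is irreducible at the back. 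It remains to exclude a coincidence of the $\ell$-th and of the $(\ell{+}1)$-th terms of $A$ and $B$. For position $\ell$ this is a minimality argument: letting $\delta(j)$ be the difference of the variable multisets of the length-$j$ prefixes of $A$ and $B$, we have $\delta(1)\neq 0$ (hence $\ell\geq 2$) and $\ell$ is the least $j$ with $\delta(j)\seq 0$; a coincidence $A[\ell]\seq B[\ell]$ would make the $\ell$-th terms contribute equally, giving $\delta(\ell-1)\seq\delta(\ell)\seq 0$ with $1\leq\ell-1<\ell$, a contradiction.

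The surviving case --- irreducibility of $\suff_1\seq\suff_2$ at the front, \ie $A[\ell+1]\neq B[\ell+1]$ --- is the part I expect to be the real work: minimality alone no longer suffices. I would argue from the ``$z$ guards every $\varx$'' structure of $A$ and $B$ together with the hypothesis that $\Eq$ has neither a var-permutated prefix nor a var-permutated suffix, turning a hypothetical common value of $A[\ell+1]$ and $B[\ell+1]$ into either a strictly shorter var-permutated prefix of $\Eq'$ (contradicting minimality of $\ell$) or a var-permutated proper prefix of $\Eq$ itself (contradicting the hypothesis), through a short case analysis according to whether that common term is $z$, is $\varx$, or is a third term, tracking in each case how the shared term must be matched on the opposite side by the complementary symbol of a $z\conc\varx$ block. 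Everything outside this last boundary case is routine bookkeeping.
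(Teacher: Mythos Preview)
Your treatment of clause~2 and of three of the four boundary conditions in clause~1 is correct and in places tidier than the paper's: the length bookkeeping is made fully explicit, and your minimality argument for $A[\ell]\neq B[\ell]$ (via the multiset difference $\delta$) is a genuinely different device from what the paper uses --- it does not even invoke the ``no var-permutated prefixes of $\Eq$'' hypothesis at that point. The paper instead handles both the $\ell$-th and the $(\ell{+}1)$-th positions by a single block-decomposition argument, so your route trades one uniform mechanism for a sharper local one.

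The gap is exactly where you flag it: showing $A[\ell+1]\neq B[\ell+1]$. Your proposed case split on the common value is not the right decomposition. The sub-case ``common value is $\varx$'' is indeed excluded by your $z$-guards observation combined with $A[\ell]\neq B[\ell]$, but in the remaining sub-cases (common value is $z$, or a third term) minimality is silent: a common $(\ell{+}1)$-th term makes the length-$(\ell{+}1)$ prefixes var-permutated as well, which is \emph{longer}, not shorter, so there is no contradiction with the choice of $\ell$. Symbol-tracking on a $z\varx$ block alone will not close these cases either.

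The paper's route is to write $\Phi=\Phi_1\varx\Phi_2\varx\cdots\varx\Phi_n$ and $\Psi$ likewise, so that $\sigma$ inserts exactly one $z$ immediately before each $\varx$. Then ``$A[\ell+1]\neq\varx$'' means the cut at position $\ell$ does not land on an inserted $z$, so deleting the inserted $z$'s from $\pref_1$ yields a genuine prefix of $\Phi$; similarly for $B$ and $\Psi$. If \emph{neither} of $A[\ell+1],B[\ell+1]$ were $\varx$, both pull-backs would exist, have equal length and equal variable multisets, and hence give non-empty var-permutated proper prefixes of $\Phi$ and $\Psi$ --- contradicting the standing hypothesis on $\Eq$. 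Therefore one of $A[\ell+1],B[\ell+1]$ equals $\varx$ and (by the $z$-guard argument) the other does not, which is precisely the front-irreducibility of $\suff_1\seq\suff_2$. This pull-back from the $\sigma$-image to $\Eq$ is the missing bridge; your sketch correctly names the hypothesis as the tool but does not supply the mechanism that turns the split of $\Eq'$ into a forbidden split of $\Eq$.
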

\begin{proof}
\begin{enumerate}
\item Let a substitution $\sigma$ be of the form $\sigma\colon\varx\rar\term\conc\varx$, where $\term\in\CAlph\cup\VAlph$. We may present the equation sides as $\Phi\seq\Phi_1\varx\dots\Phi_{n-1}\varx\conc\Phi_n$ and $\Psi\seq\Psi_1\varx\dots\Psi_{n-1}\varx\conc\Psi_n$, where $\forall i,j\,(|\Phi_i|_\varx\seq 0\logand |\Psi_j|_\varx\seq 0)$. The substitution $\sigma$ results in equation $\Phi_1\term\conc\varx\dots\Phi_{n-1}\conc\term\conc\varx\conc\Phi_n\seq\Psi_1\term\conc\varx\dots\Psi_{n-1}\conc\term\conc\varx\conc\Psi_n$. 

We consider the left-split operation finding var-permutated prefixes with the minimal length. The case of the right-split operation uses the analogous reasoning. If the var-permutated prefixes found by the operation are of the form $\Phi_1\conc\term\conc\varx\dots\conc\term\conc\varx\conc\Phi'_i$ and $\Psi_1\conc\term\conc\varx\dots\conc\term\conc\varx\conc\Psi'_i$, where $\Phi'_i$ and $\Psi'_i$ are prefixes of $\Phi_i$ and $\Psi_i$ respectively, then the words $\Phi_1\conc\varx\dots\conc\varx\conc\Phi'_i$ and $\Psi_1\conc\varx\dots\conc\varx\conc\Psi'_i$ are also var-permutated and the equation should be split until the substitution $\sigma$ is applied. Thereby the proper var-permutated prefixes $\pref_1$ and $\pref_2$ can be only of the forms $\Phi_1\conc\term\conc\varx\dots\conc\term\conc\varx\conc\Phi'_i$ (where $\Phi'_i$ is a prefix of $\Phi_i$) and $\Psi_1\conc\term\conc\varx\dots\conc\term\conc\varx\conc\Psi_i\conc\term$, or vice versa. We consider only the first case, because they are symmetric. 

Thus, $\suff_1$ starts with a term other than $\varx$, while $\suff_2$ starts with $\varx$. Moreover, if the last term of $\Phi'_i$ can be reduced with $\term$, then the word $\Phi_1\conc\varx\dots\conc\varx\conc\Phi'_i$ and $\Psi_1\conc\varx\dots\conc\varx\conc\Psi_i$ are var-permutated\footnote{This reasoning still holds if $i\seq 1$. In the case of $\Psi_1\seq\empt$ or $\Phi_1\seq\empt$, one reduction is to be done until the splitting, but the overall reasoning is the same.}.  
\item Let a substitution $\sigma$ be of the form $\varx\rar\term\conc\varx$, where $\term\in\CAlph\cup\VAlph$ and $\varx$ occurs either in $\Phi$ or in $\Psi$. Otherwise, the substitution has no impact on the equation length. Following the first case proven above, we do not consider reduction operations after splitting the equation. Consider the possible reductions of the equation $\Eq': \Phi\sigma\seq\Psi\sigma$ before it is split. A reduction may occur only if $\Phi$ starts with $\varx$, and $\Psi$ starts with $\term$ (or vice versa).  Let $k \seq |\Phi|+|\Psi|$, then $|\Phi\sigma|+|\Psi\sigma|\geq k+2$, and the considered reduction decreases the length of $\Eq'$ by 2. Thus, after an application of such a substitution $\sigma$ the overall length of the equation cannot decrease.  
\end{enumerate}  
\end{proof}

The following proposition does not hold when the algorithm of interest is $\SolSplit$ or $\SolCount$.

\begin{Prop}\label{Prop:reversibility}
Given a substitution $\sigma\colon\varx\rar\term\conc\varx$, where $\term\in\VAlph\cup\CAlph$ ($\term\neq\varx$), $\sigma$ is an injection on the set of reduced equations when they are simplified by the algorithm $\SolBase$ unless the equations are trivial contradictions.
\end{Prop}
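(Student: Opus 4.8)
\noindent\emph{Plan.} Let $\Eq\colon\Phi\seq\Psi$ be a reduced equation and write $\Eq^{\sharp}$ for the $\SolBase$-simplification of $\Phi\sigma\seq\Psi\sigma$; recall that this simplification only deletes common prefixes and suffixes, signalling the trivial contradiction $\term_{1}\seq\term_{2}$ when the two reduced sides begin with distinct letters $\term_{1},\term_{2}$. The plan is to pin down exactly which affixes are deleted in passing from $\Phi\sigma\seq\Psi\sigma$ to $\Eq^{\sharp}$, and then to invert that passage. I would first record two elementary facts about $\sigma$, viewed as a morphism on $\{\CAlph\cup\VAlph\}^{*}$: (i) for no word $w$ does $\sigma(w)$ begin with $\varx$ --- every term $a\neq\varx$ is fixed by $\sigma$, and $\sigma(\varx)=\term\conc\varx$ begins with $\term\neq\varx$; and (ii) $\sigma$ preserves the last term of every non-empty word. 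By (i) the images of single terms under $\sigma$ form a code (immediate by left-to-right decoding, since the only such image longer than one term is $\term\conc\varx$), so $\sigma$ is injective on $\{\CAlph\cup\VAlph\}^{*}$, hence on unordered equations.

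Next I would analyse the simplification of $\Phi\sigma\seq\Psi\sigma$. Since $\Eq$ is reduced, $\Phi$ and $\Psi$ end with distinct terms, so by (ii) $\Phi\sigma$ and $\Psi\sigma$ end with distinct terms too, and \emph{no} common suffix is deleted. For the prefix, let $a,b$ be the first terms of $\Phi,\Psi$; thus $a\neq b$. If neither equals $\varx$, then $\sigma(a)=a\neq b=\sigma(b)$ and no prefix is deleted. If, say, $a=\varx$, then $\sigma(a)=\term\conc\varx$, and a prefix is deleted only when $b=\term$, that is, exactly when $\sigma$ is compatible with $\Eq$; and then, writing $\Phi=\varx\conc\Phi_{1}$, $\Psi=\term\conc\Psi_{1}$, the deleted prefix is \emph{precisely} $\term$, because after its removal the left side begins with $\varx$ while, by (i), $\sigma(\Psi_{1})$ does not. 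Hence $\Eq^{\sharp}$ is either $\Phi\sigma\seq\Psi\sigma$ itself (the ``no-cancellation'' regime) or $\varx\conc\sigma(\Phi_{1})\seq\sigma(\Psi_{1})$ (the ``$\term$-stripped'' regime), the latter arising exactly when $\sigma$ is compatible with $\Eq$.

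Finally I would invert, assuming $\Eq^{\sharp}$ is not a trivial contradiction. The two regimes are told apart by $\Eq^{\sharp}$ alone: by (i) no side of a no-cancellation $\Eq^{\sharp}$ begins with $\varx$, whereas a $\term$-stripped $\Eq^{\sharp}$ has exactly one side that does. In the no-cancellation regime both sides of $\Eq^{\sharp}$ lie in the image of $\sigma$, and applying the inverse of $\sigma$ to each restores $\Eq$. In the $\term$-stripped regime, deleting the leading $\varx$ from the side that carries it and then inverting $\sigma$ yields $\Phi_{1}$, hence $\Phi=\varx\conc\Phi_{1}$, while inverting $\sigma$ on the other side yields $\Psi_{1}$, hence $\Psi=\term\conc\Psi_{1}$. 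In either case $\Eq$ is recovered uniquely, which is the asserted injectivity. The ``trivial contradiction'' proviso is unavoidable: whenever $\Phi\sigma$ and $\Psi\sigma$ begin with two distinct letters, $\SolBase$ collapses $\Eq^{\sharp}$ to that two-letter contradiction, and many reduced equations share such an image.

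I expect the main obstacle to be the prefix analysis --- showing that the deleted prefix is \emph{exactly} $\term$, never longer, and that nothing is deleted when $\sigma$ is incompatible with $\Eq$; this is where facts (i), (ii) and the hypothesis ``$\Eq$ is reduced'' are all needed together. It is exactly this step that breaks for $\SolSplit$ and $\SolCount$, whose simplification may split $\Phi\sigma\seq\Psi\sigma$ and re-reduce the fragments (cf.\ Proposition~\ref{Prop:consimprop}(1)), so that the passage from $\Eq$ to $\Eq^{\sharp}$ is no longer ``strip one copy of $\term$'' and is no longer invertible.
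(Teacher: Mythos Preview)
Your argument is correct and follows essentially the same route as the paper: both proofs perform the case split on whether a side of the simplified equation begins with $\varx$, and both recover the preimage via the formal inverse $\sigma^{-1}\colon\term\conc\varx\rar\varx$. Your treatment is in fact more careful than the paper's on one point: you explicitly rule out suffix cancellation via fact~(ii), whereas the paper's proof speaks only of ``at most one elementary reduction'' without separately addressing the suffix side of $\SolBase$.
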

\begin{proof}
Let $\eq$ be a reduced equation $\Phi\seq\Psi$ \st $\Phi$ and $\Psi$ start with different terms, and let us assume that there exists $\eq'$ \st $\eq\seq\eq'\sigma$, where $\sigma$ is compatible with $\eq'$. Let $\eq'$ be $\Phi'\seq\Psi'$. If $\eq'$ is of the reduced form then at most one elementary reduction can be done in $\eq'\sigma$, namely we can reduce the first terms in the left- and right-hand sides of the equation. Moreover, the reduction occurs iff $\Phi'$ starts with $\varx$ and $\Psi'$ with $\term$ (or vice versa). If none of $\Phi$ and $\Psi$ starts with $\varx$, then no reduction is possible in $\eq'\sigma$ and $\eq'$ can be computed as a result of the formal inverse substitution $\sigma^{-1}\colon \term\conc\varx\rar\varx$, namely $\Phi'\seq\Phi\sigma^{-1}$, $\Psi'\seq\Psi\sigma^{-1}$. Let $\Phi\seq\varx\conc\Phi_1$. Then $\Phi'$ is $\varx\conc(\Phi_1\sigma^{-1})$, $\Psi'$ is $\term\conc(\Psi\sigma^{-1})$. 
\end{proof}

Proposition~\ref{Prop:reversibility} states that given an equation $\eq$ and a substitution $\sigma$ of a special kind \st a node $\Node$ in a solution tree generated with the use of the algorithm $\SolBase$ is labelled by $\eq$ and has the ingoing arc labelled with $\sigma$, then the label of the parent of $\Node$ can be restored. But this proposition does not require $\sigma$ to be generated according to the substitutions given in  \figurename~\ref{tab:classicsubstrules}: both $\Phi$ and $\Psi$ may start with terms differing from $\varx$. 

The previous propositions refer to the equation solution trees. The following proof of Lemma~\ref{lemma:optimality} refers to the process trees (graphs). We recall that the notion of a configuration is given in~Definition~\ref{Definition::Config}.

\begin{proof}[Proof of Lemma~\ref{lemma:optimality} --- Optimality Lemma]\label{proof:optlemma}

Let $[\Node_1;\Node_2]$ denote a path segment starting at node $\Node_1$ and ending at node $\Node_2$ .

First, consider the interpreter $\WeqIntBase$. See \sectionname~\ref{subsection::source} for its source code. The node configurations in the process tree of $\WeqIntBase$ are of the following forms:
\begin{enumerate}
\item $\MainStep(\vpar_i, \wholeencodespace{\eq})$ (primary configurations);
\item $\MainStep(\vpar_i,\FunSim(\wholeencodespace{\sigma},\mathrm{\it Other\;args}))$, where $\wholeencodespace{\sigma}$ is static data encoding the substitution that is last applied\footnote{This substitution is stored as an argument of the function $\FunSim$ as an additional annotation of the calls.} to the equation list, and the other arguments may include a call of the function $\FunSubst$.
\end{enumerate}
The function $\MainStep$ applies the first substitution in the list given in its first argument to the equation list given in the second argument; the function $\FunSim$ simplifies, \ie reduces, the equations. If only the primary configurations are folded, the specialization is already optimal. Let the nodes $\Node_1$ and $\Node_2$ be labelled with the configurations of the form $\FunEq(\vpar_i,\FunSim({\wholeencodespace{\sigma},\mathrm{\it Other\;args}}))$ \st the configurations coincide up to the parameter renaming. We will prove now that their ancestors labelled with the primary configurations also should have equal labels, thus, the folding operation should be applied before the whole path part $[\Node_1;\Node_2]$ is unfolded\footnote{The scheme of the proof given in~\figurename~\ref{Fig:OptLemma} refers to the interpreters $\WeqIntSplit$ and $\WeqIntSym$, although the only significant difference between the reasonings is that the simplification function and $\FunEq$ function in $\WeqIntBase$ do not use an additional information about the number of equations in the list, namely, $n$ on the scheme.}. 

If $\Node_1$ and $\Node_2$ are non-primary and are labelled with the equal configurations, then the primary ancestors of the nodes $\Node_1$ and $\Node_2$ always exist. Note that the first call of $\FunSim$ initialized by $\FunGo$, which is not preceded with a primary configuration, is of the form $\FunSim(\empt,\dots)$, while all other calls of $\FunSim$ have a non-empty first argument. This implies that two configurations of the form $\MainStep(\vpar_i,\FunSim(\empt,\mathrm{\it Other\;args}))$ cannot be folded --- that would imply that no par-narrowing is generated along the path segment $[\Node_1;\Node_2]$, thus, the function $\FunSim$ would be non-terminating. Thus, the configurations that label the nodes $\Node_1$ and $\Node_2$, assumed to be folded, can be only of the form $\FunEq(\vpar_i,\FunSim({\wholeencodespace{\sigma},\mathrm{\it Other\;args}}))$ with non-empty $\sigma$. Note that there may be several primary nodes along the segment $[\Node_1;\Node_2]$.

Given the nodes $\Node_1$ and $\Node_2$, let us take their closest primary ancestors, named $\Node'_1$ and $\Node'_2$ respectively. Let the configurations labelling them be of the forms $\MainStep(\vpar, \wholeencodespace{\eq})$ and $\MainStep(\vpar', \wholeencodespace{\eq'})$. Let the closest primary successors of $\Node_1'$ and $\Node_2'$, named $\Node_{\mathrm{loop\_1}}$ and $\Node_{\mathrm{loop\_2}}$ respectively, be labelled with the configurations $\FunEq(\vpar_i,\wholeencodespace{\hspace{0.2ex}\eq_i\hspace{0.2ex}})$. The path segments $[\Node_1;\Node_{\mathrm{loop\_1}}]$ and $[\Node_2;\Node_{loop\_2}]$ contain only transient nodes. Thus, the configurations labelling $\Node_{\mathrm{loop\_1}}$ and $\Node_{\mathrm{loop\_2}}$ coincide up to the name of the parameter $\vpar_i$. Let $\eq_i\seq\eq_0$. No $\LogLang$-narrowings generating a substitution $\varx\rar\empt$ can occur along the segment $[\Node_{\rm{loop\_1}};\Node_{\rm{loop\_2}}]$, otherwise the equation $\eq_0$ would not be preserved in $\Node_{\mathrm{loop\_2}}$. Thus $\sigma$ is $\varx\rar\term\conc\varx$, $t\in\CAlph\cup\VAlph$. Here the substitution $\sigma$ is the same one used in the configurations labelling $\Node_1$ and $\Node_2$, because it is the first argument of the function call $\FunSim$, and, when applied to the equations $\eq$ and $\eq'$, $\sigma$ generates the same equation $\eq_0$. Hence by Proposition~\ref{Prop:reversibility} $\eq$ and $\eq'$ coincide. This proves the lemma in the case of the interpreter $\WeqIntBase$.

Let us consider now specialization of interpreters $\WeqIntSplit$ and $\WeqIntSym$. In these cases the possible configurations are exhausted with the following ones:
\begin{enumerate}
\item $\MainStep(\vpar, (n)\longconc\wholeencodespace{\langle\eq_i\rangle^{n'}_{i=1}})$, where $n$ coincides with $n'$ unless $\langle\eq_i\rangle^{n'}_{i=1}$ consists of a single unsatisfiable equation;
\item $\FunEq(\vpar,\FunSim(n,\wholeencodespace{\sigma},\mathrm{\it Other\;args}))$, where $n$ is the length of the equation list in the configuration labelling the closest primary ancestor of the node considered, and $\sigma$ is the last substitution that was applied to the equations; the other arguments may contain function calls;
\item $\FunEq(\vpar,\FunSort(n,\wholeencodespace{\sigma},\mathrm{\it Other\;args}))$.
\end{enumerate}
The argumentation for the case 3 does not differ from the one that will be given in the case 2, thus we assume the fold operation works only with the nodes $\Node_1$ and $\Node_2$ labelled with the configurations of the form $\FunEq(\vpar_i,\FunSim(n,\wholeencodespace{\sigma},\mathrm{\it Other\;args}))$. Once again we consider their closest ancestor nodes $\Node'_1$ and $\Node'_2$ labelled with the primary configurations $\MainStep(\vpar_0, (n)\longconc\wholeencodespace{\langle\eq_i\rangle^n_{i=1}})$ and $\FunEq(\vpar_{k-1}, (n)\longconc\wholeencodespace{\langle\eq'_i\rangle^n_{i=1}})$, as that was done for $\WeqIntBase$ case. 

\begin{figure}[ht]
\begin{minipage}{0.25\textwidth}
\begin{flushright}${\begin{array}{l}\rm{No~splits}\\\rm{No~}\varx_i\rar\empt\\\rm{No~contradictions}\end{array}}\Biggggl\{$

\vspace*{50pt}\end{flushright}
\end{minipage}
\begin{minipage}{0.73\textwidth}
$$
\xyoption{frame}
\xymatrix@-5mm {
& *+[F]{\Node'_1}\ar[d]\ar@/_2pc/@{=>}[dd]_{\begin{array}{l}\rm{Injection}
\\ \varx\rar\term\conc\varx\end{array}}&*+{\FunEq(\vpar_0,(n)\longconc\wholeencodespace{\langle\eq_i\rangle_{i=1}^n})}\ar@{~>}[l]
 \\
& \ov{\;\Node_1\;}\ar[d] & *{\begin{array}{l}\\ \FunEq(\vpar_1,\FunSim(n,\wholeencode{\hspace{0.4ex}\varx\rar\term\conc\varx\hspace{0.4ex}},\\\quad\qquad\quad\,\;\;\;\;\;\qquad\quad{\rm{\it Other~args}}))\end{array}}\ar@{~>}[l] \\
& *+[F]{\Node_{\rm{loop\_1}}}\ar[d] \\
& *{\skipper\dots\skipper}\ar[d] \\
& *+[F]{\Node'_2}\ar[d]\ar@/_2pc/@{=>}[dd]_{\begin{array}{l}\rm{Injection}\\ \varx\rar\term\conc\varx\end{array}}&*+{\FunEq(\vpar_{k-1},(n)\longconc\wholeencodespace{\langle\eq'_i\rangle_{i=1}^n}\,)}\ar@{~>}[l] \\
& \ov{\;\Node_2\;}\ar@/_2.5pc/@{--}[uuuu]\ar[d] & *{\begin{array}{l}\\\FunEq(\vpar_k,\FunSim(n,\wholeencode{\hspace{0.4ex}\varx\rar\term\conc\varx\hspace{0.4ex}},\\\quad\qquad\quad\;\,\;\;\;\;\;\qquad\quad{\rm{\it Other~args}}))\end{array}}\ar@{~>}[l] \\
& *+[F]{\Node_{\rm{loop\_2}}}&&&\\
&&&&\\
}
$$
\vspace*{-15pt}
$$
\xymatrix@-7mm{
*++[F]{\;}&*{\hspace{-21pt}\--~\rm{primary~configuration}}&&&&&&\\
\ov{\;\upskipper\;}&*{\--~\rm{non\hspace{-2.5pt}-\hspace{-2.5pt}primary~configuration}}&&&&&
}
$$
\end{minipage}
\caption{Scheme of the Optimality Lemma proof.}
\label{Fig:OptLemma}
\end{figure}
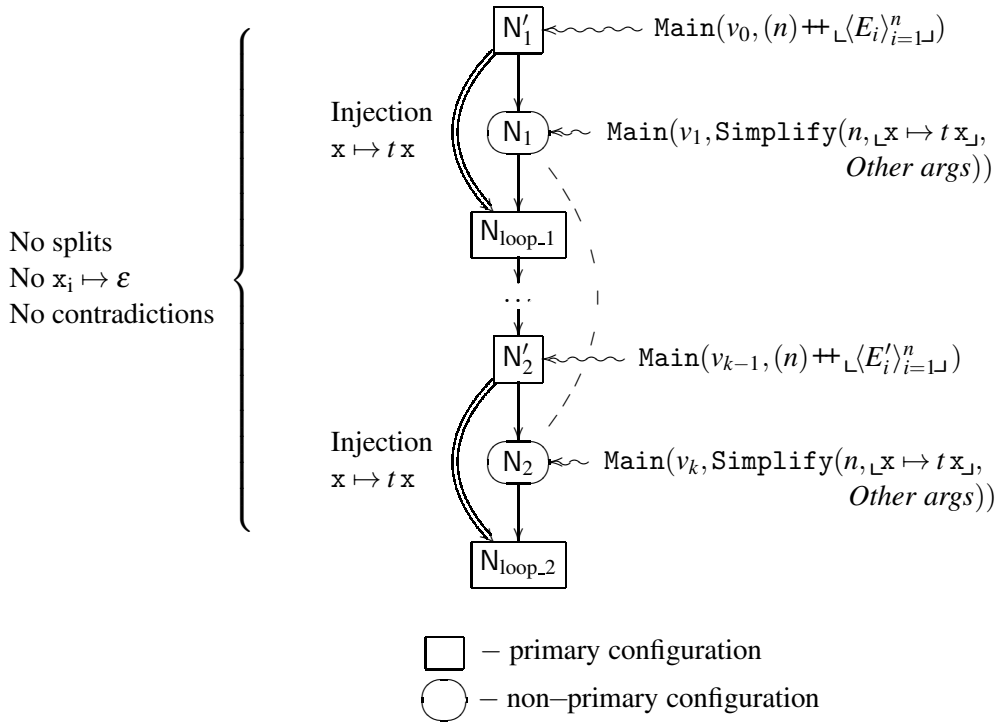

Here the number $n$ is the same in the both configurations, because it is the same in the configurations of the nodes $\Node_1$, $\Node_2$. Function $\FunSim$ takes its first argument from the last $\FunEq$ call, hence they are the same in all the four configurations. The substitution $\varx\rar\empt$ cannot be applied along the path segment $[\Node_1;\Node_2]$, by the reasoning above. Let $\sigma$ be $\varx\rar\term\conc\varx$, $t\in\CAlph\cup\VAlph$. By Proposition~\ref{Prop:consimprop}, the number of the equations in $[\Node'_1;\Node'_2]$ cannot decrease. Thus, the number of the equations is a constant along the path segment, namely $n$, and no equation in a configuration in $[\Node'_1;\Node'_2]$ can be split. That means every equation in a configuration along the path starting at $\Node'_1$ is transformed as it would be transformed by the algorithms implemented in $\WeqIntBase$. Given the first primary successors $\Node_{\rm{loop\_1}}$ and $\Node_{\rm{loop\_2}}$ of $\Node_1$ and $\Node_2$, by their choice, the segments $[\Node_1;\Node_{\rm{loop\_1}}]$ and $[\Node_2;\Node_{\rm{loop\_2}}]$ consist only of the transient nodes, Thus, $\Node_{\rm{loop\_1}}$ and $\Node_{\rm{loop\_2}}$ are labelled with the equal lists of equations, thus by Proposition~\ref{Prop:reversibility} the lists of equations $\langle\eq'_i\rangle^n_{i=1}$ and $\langle\eq_i\rangle^n_{i=1}$ also coincide. 
\end{proof}

\subsection{Proofs and Auxiliary Propositions on Regularly-Ordered Equations with Repetitions}

The next propositions consider the equation solution algorithms presented in \sectionname~\ref{Sect:WeqInt} and do not refer to the interpreters' structure. The equation solution graphs are considered here, like ones shown in \figurename~\ref{fig:solscheme}. 

\begin{Prop}
Let $Q_i,Q'_i\in\CAlph^+$. Given a list of equations $\Eqs\seq \langle Q_1\conc\varx_1\seq\varx_1\conc Q'_1,\dots, Q_n\conc\varx_n\seq\varx_n\conc Q'_n\rangle$, where for some $i,j$, $\varx_i\seq\varx_j$ may hold, the label of any node in the solution tree constructed with the use of the algorithm $\SolSplit(\Eqs)$ is a list consisting of at most $n$ equations. If it consists exactly of $n$ equations $\langle \Phi'_1\seq\Psi'_1,\dots, \Phi'_n\seq\Psi'_n\rangle$, then $|\Phi'_i|\leq |Q_i|+1$, $|\Psi'_i|\leq |Q'_i|+1$.
\end{Prop}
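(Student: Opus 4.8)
The plan is to prove, by induction on the number of simplification/narrowing steps performed from the root, a strengthened invariant $\mathcal I$ about node labels, of which the proposition is an immediate consequence. Namely: the root is labelled by $\Eqs$ itself and satisfies the proposition by inspection (there $|\Phi'_i|\seq|Q_i|+1$, $|\Psi'_i|\seq|Q'_i|+1$); for every other node, the label is either a contradiction leaf ($\bottom$, or a single trivial false letter equation) or a list $\langle E_1,\dots,E_m\rangle$ with $m\le n$ together with an injection $\iota\colon\{1,\dots,m\}\hookrightarrow\{1,\dots,n\}$ which is order-preserving whenever $m\seq n$, such that each $E_k$ has the form $R\conc\varx\seq\varx\conc R'$ with $R,R'\in\CAlph^+$, $\varx\seq\varx_{\iota(k)}$, $|R'|\le|R|\le|Q_{\iota(k)}|$ and $|R'|\le|Q'_{\iota(k)}|$. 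When $m\seq n$ we have $\iota(k)\seq k$, so $\mathcal I$ gives exactly $|\Phi'_k|\seq|R|+1\le|Q_k|+1$ and $|\Psi'_k|\seq|R'|+1\le|Q'_k|+1$, which is the proposition. The extra bound $|R'|\le|R|$ is carried along because it makes the $\SolSplit$ simplification idempotent on equations it has already produced.

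For the base of the induction I would analyse $\FunSim(\Eqs)$ directly under the scheme $\SolSplit$ (Reduce, Left-split, Reduce). Each equation $Q_i\conc\varx_i\seq\varx_i\conc Q'_i$ is already reduced (the sides start with a letter and with $\varx_i$ and end with $\varx_i$ and a letter, so there is neither a common prefix/suffix nor a leading-letter clash); the only possible left-split matches the sole variable-carrying prefix $Q_i\conc\varx_i$ of the left side against $\varx_i\conc Q'_i[1..|Q_i|]$, which requires $|Q'_i|\ge|Q_i|$ and then produces $Q_i\conc\varx_i\seq\varx_i\conc Q'_i[1..|Q_i|]$ together with a suffix equation $\empt\seq Q'_i[|Q_i|+1..]$ that is $\empt\seq\empt$ (removed) if $|Q'_i|\seq|Q_i|$ and a nonempty constant equation, i.e.\ a contradiction collapsing the whole list to $\bottom$, if $|Q'_i|>|Q_i|$. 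Hence $\FunSim(\Eqs)$ is either $\bottom$ or a list of $n$ equations $Q_i\conc\varx_i\seq\varx_i\conc R'_i$ with $|R'_i|\seq\min(|Q_i|,|Q'_i|)\le|Q_i|\seq|R_i|$ and $|R'_i|\le|Q'_i|$, so $\mathcal I$ holds with $\iota$ the identity.

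For the inductive step, take a non-leaf node whose label $\langle E_1,\dots,E_m\rangle$ satisfies $\mathcal I$, and let $E_1\seq R_1\conc\varx\seq\varx\conc R'_1$. Since the left side begins with a letter and the right side with $\varx$, the only compatible $\LogLang$-narrowings are $\varx\rar\empt$ and $\varx\rar a\conc\varx$, where $a$ is the first letter of $R_1$ — in particular no variable-for-variable narrowing is ever applied. If $\varx\rar\empt$ is used, every $E_k$ with $\varx_{\iota(k)}\seq\varx$ becomes the constant equation $R_k\seq R'_k$, which reduces to $\empt\seq\empt$ (removed) if $R_k$ and $R'_k$ are the same word and to a contradiction (collapsing the list) otherwise; the equations on the other variables are untouched by the narrowing and, having $|R'|\le|R|$, are stable under re-simplification; so the surviving label is a strictly shorter sublist and $\mathcal I$ is inherited. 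If $\varx\rar a\conc\varx$ is used, then for each $E_k$ with $\varx_{\iota(k)}\seq\varx$ the first Reduce either turns $E_k$ into $\widehat R_k\conc\varx\seq\varx\conc R'_k$ with $\widehat R_k$ the one-step cyclic rotation of $R_k$ (so $|\widehat R_k|\seq|R_k|$), or — if $R_k$ does not begin with $a$ — exposes a leading-letter clash that collapses the list; because $|R'_k|\le|R_k|\seq|\widehat R_k|$, the subsequent left-split can only be the trivial one (it occurs exactly when $|R'_k|\seq|\widehat R_k|$, splitting off $\empt\seq\empt$ and leaving $\widehat R_k\conc\varx\seq\varx\conc R'_k$ unchanged) and the final Reduce does nothing. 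Thus every $E_k$ is replaced, in its original list position, by a single equation of the form $R\conc\varx\seq\varx\conc R'$ with $|R'|\le|R|\seq|R_k|\le|Q_{\iota(k)}|$ and $|R'|\le|R'_k|\le|Q'_{\iota(k)}|$; no equation is added and none is reordered, so $\mathcal I$ — including the length bounds and, when $m\seq n$, the order-preservation — is preserved.

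The main obstacle is the case analysis of the $\SolSplit$ simplification applied to an equation of the shape $R\conc\varx\seq\varx\conc R'$: one must verify that the composite operation Reduce–LeftSplit–Reduce can only rotate $R$, leave $R'$ unchanged (or, at the root, shrink it to one of its prefixes), or produce a trivial contradiction — it never lengthens a side, never splits the equation into two genuine equations, and never alters its variable. The accompanying subtlety is the forced propagation of the narrowing chosen because of $E_1$ onto the other equations on the same variable, which is precisely why a leading-letter mismatch among such equations yields $\bottom$ rather than a larger list, so that the equation count stays non-increasing.
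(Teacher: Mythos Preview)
Your argument is correct and reaches the same conclusion, but it is considerably more elaborate than the paper's. The paper dispatches the proposition in three sentences: it observes that the only $\LogLang$-narrowings compatible with a first equation of the shape $Q\conc\varx\seq\varx\conc Q'$ are $\varx\rar\empt$ (which turns every equation on $\varx$ into a tautology or a contradiction) and $\varx\rar a\conc\varx$ (which either yields a contradiction or replaces $Q$ by a cyclic permutation of the same length, leaving $Q'$ untouched). It does not set up an explicit invariant, does not introduce the injection $\iota$, and, notably, does not discuss the left-split step at all.

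Your extra work does buy something real. By carrying the strengthened bound $|R'|\le|R|$ through the induction, you actually justify why the left-split in $\SolSplit$ is idempotent on these equations once the initial simplification has been performed --- a point the paper's proof leaves implicit. Your base-case analysis (the initial $\FunSim$ can shrink $Q'_i$ to a prefix of length $|Q_i|$ or collapse the list when $|Q'_i|>|Q_i|$) is likewise more careful than the paper, which tacitly treats the split as inert throughout. The one place where you are slightly informal is the ``collapsing to $\bot$'' clause: the concrete interpreter does not flag $\empt\seq\Psi$ with $\Psi\in\CAlph^+$ as a single-letter contradiction, so the resulting node is a dead leaf rather than a literal $\bot$ label; for the abstract solution-tree statement this distinction is harmless, and your conclusion stands.
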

\begin{proof}
 \figurename~\ref{tab:classicsubstrules} shows that the possible substitutions that can be compatible with the initial equation list $\langle Q_1\conc\varx_1\seq\varx_1\conc Q'_1,\dots, Q_n\conc\varx_n\seq\varx_n\conc Q'_n\rangle$ are either $\varx_i\rar\empt$ or $\varx_i\rar\cA_i\conc\varx_i$. The first one transforms equations including $\varx_i$ to either tautologies or contradictions. The second one transforms an equation $Q_i\conc\varx_i\seq\varx_i\conc Q'_i$ either to a contradiction or to an equation of the form $R_i\conc\varx_i\seq\varx_i\conc Q'_i$, where $R_i$ is a cyclic permutation of the word $Q_i$, $|R_i|\seq|Q_i|$. 
\end{proof}

\begin{Coroll}\label{Coroll:commut}
If $Q_i,Q'_i\in\CAlph^+$ then every infinite path of a solution graph constructed with the use of the algorithm $\SolSplit$ applied to the list $\langle Q_1\conc\varx_1\seq\varx_1\conc Q'_1,\dots, Q_n\conc\varx_n\seq\varx_n\conc Q'_n\rangle$ contains two nodes with equal labels. 
\end{Coroll}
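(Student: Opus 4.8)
The plan is to derive the statement from the preceding proposition by bounding the number of distinct equation lists that can label a node of the solution tree of $\Eqs\seq\langle Q_1\conc\varx_1\seq\varx_1\conc Q'_1,\dots,Q_n\conc\varx_n\seq\varx_n\conc Q'_n\rangle$ built by $\SolSplit$, and then applying the pigeonhole principle along an infinite path.

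First I would pin down the reachable node labels and show that they are all ``small''. By the preceding proposition, a non-leaf node is labelled with a list of at most $n$ equations. I would strengthen this to the following invariant, maintained along every path: each equation in such a label is either of the form $R\conc\varx\seq\varx\conc Q'$ with $\varx\in\{\varx_1,\dots,\varx_n\}$, $Q'$ one of the fixed words $Q'_1,\dots,Q'_n$, and $R$ a cyclic permutation of the associated $Q_i$ (so $|R|\seq|Q_i|$), or a constant equation that is a reduced form of some $R\seq Q'_k$ with $R$ a cyclic permutation of $Q_k$, or a trivial contradiction $\term_1\seq\term_2$. To see the invariant is preserved, note that the left side of an equation $R\conc\varx\seq\varx\conc Q'$ begins with a letter, so the only $\LogLang$-narrowings of \figurename~\ref{tab:classicsubstrules} compatible with it are $\varx\rar\empt$ and $\varx\rar t\conc\varx$ with $t$ the first letter of $R$; the former replaces every equation mentioning $\varx$ by a constant equation (which is removed if it reduces to a tautology), while the latter replaces it either by a cyclic rotation of the same shape or, once reduction uncovers two distinct leading letters, by a trivial contradiction. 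The only leaf labels are the empty list (the $\OutTrue$-leaf) and one-element lists $\term_1\seq\term_2$.

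Next I would use the fact, recalled right after \figurename~\ref{tab:classicsubstrules}, that the $\LogLang$-narrowing rules introduce no fresh variables or constants; hence every equation occurring anywhere in the tree is a word over the finite term set consisting of $\varx_1,\dots,\varx_n$ together with the finitely many letters appearing in the $Q_i,Q'_i$. Combined with the invariant above, this bounds the length of every equation in every node label, so only finitely many equations can occur, and therefore only finitely many lists of at most $n$ equations can label a node.

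It then only remains to observe that an infinite path of the tree visits infinitely many nodes but, by the previous step, only finitely many distinct labels, so two of its nodes carry the same label; being on a common path they are comparable, so one is an ancestor of the other --- exactly the pair claimed. I do not expect a real obstacle: the one point needing care is the invariant of the second paragraph when variables are identified ($\varx_i\seq\varx_j$ for $i\neq j$ is allowed), and this is handled by the observation that a step which would otherwise lengthen an equation instead exposes conflicting leading constants and turns it into a trivial contradiction.
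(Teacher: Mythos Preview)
Your approach---bounding the set of possible node labels via the length bound from the preceding proposition together with the finiteness of the term alphabet, and then applying the pigeonhole principle along an infinite path---is exactly the argument the paper intends; the corollary is stated there without proof as an immediate consequence of that proposition. One minor imprecision in your invariant: because $\SolSplit$ performs a left-split on $Q_i\varx_i\seq\varx_i Q'_i$ whenever $|Q_i|<|Q'_i|$, the surviving balanced equation has right-hand constant $Q'_i[1..|Q_i|]$ rather than $Q'_i$ itself (and a bounded constant suffix equation is introduced), so $Q'$ need not be one of the original $Q'_1,\dots,Q'_n$ but only a prefix of one; this does not affect your finiteness conclusion, and the paper's own proposition glosses over the same point.
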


\begin{Prop}
Let $\Phi\seq\Psi$ be a strictly regular-ordered equation with repetitions. Then the following statements hold.
\begin{enumerate}
\item Given the shortest non-empty var-permutated prefixes $\Phi_1$ and $\Psi_1$ \st $\Phi\seq\Phi_1\conc\Phi_2$, $\Psi\seq\Psi_1\conc\Psi_2$, the equations $\Phi_1\seq\Psi_1$ and $\Phi_2\seq\Psi_2$ are strictly regular-ordered with repetitions.
\item The solution tree constructed with the use of the algorithm $\SolSplit(\Phi\seq\Psi)$ never includes an application of the $\LogLang$-narrowing $\varx\rar\vary\conc\varx$ given in  \figurename~\ref{tab:classicsubstrules}.
\item Every infinite path in the solution tree constructed with the use of $\SolSplit(\Phi\seq\Psi)$ contains a finite number of the split operations.
\end{enumerate}
\end{Prop}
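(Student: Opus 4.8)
The plan is to prove the three items in this order, since item~(2) uses item~(1) and item~(3) uses both. Throughout I will work with the reformulation of strict regular-orderedness with repetitions already implicit in the definition: an equation $\Phi\seq\Psi$ has this property iff deleting every explicitly occurring letter from $\Phi$ and from $\Psi$ produces one and the same word over $\VAlph$, which I will call the \emph{variable sequence} of the equation. For item~(1), note that a prefix of $\Phi$ has a variable sequence that is a prefix of the variable sequence of $\Phi$ (likewise for $\Psi$), and that among the prefixes of a fixed word over $\VAlph$ the multiset of symbols strictly grows with the length, so two such prefixes with equal multisets coincide. Hence, if $\Phi_1,\Psi_1$ are var-permutated prefixes of $\Phi,\Psi$, their variable sequences are prefixes of the common variable sequence of $\Phi,\Psi$ with equal variable multisets and are therefore equal; so $\Phi_1\seq\Psi_1$ has equal variable sequences, and the remainders $\Phi_2,\Psi_2$ carry the equal remaining suffix of the variable sequence on both sides. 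Both new equations are thus strictly regular-ordered with repetitions. (Minimality of the prefixes is not needed here; it enters only elsewhere, to see that the prefix equation produced by a split is reduced, via the usual ``delete a common last term'' argument.)

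For item~(2), I first prove by induction along a path that every equation occurring in any node of $\SolSplit(\Phi\seq\Psi)$ is strictly regular-ordered with repetitions (trivial contradictions $\term_1\seq\term_2$ and tautologies being vacuously such). The base case is the reduced initial equation; for the step, each operation of $\SolSplit$ preserves equality of the two variable sequences: reducing a common leading or trailing term does; a split does, by item~(1); and each admissible elementary narrowing, $\varx\rar\empt$, $\varx\rar\cA\conc\varx$ or $\varx\rar\vary\conc\varx$, is applied identically to both sides and commutes with letter deletion up to the same renaming, so the variable sequences stay equal. Now, in a reduced strictly regular-ordered equation with repetitions, if one side starts with a variable $\varx$, then its variable sequence starts with $\varx$, hence so does the other side's, hence the first variable occurring on the other side is $\varx$; were that side to start with a variable, it would be $\varx$, contradicting the reduced form, so it starts with a letter. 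Consequently the first equation of any list to which an elementary narrowing is applied never has both sides starting with distinct variables, so by \figurename~\ref{tab:classicsubstrules}(a) the narrowing $\varx\rar\vary\conc\varx$ is never compatible with it and is never used.

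Item~(3) is where I expect the main work. By item~(2), every narrowing along the path is $\varx\rar\empt$ or $\varx\rar\cA\conc\varx$. For an equation $e$ given by $\Phi'\seq\Psi'$ let $n(e)$ be the number of variable occurrences in $\Phi'$ (equal to that in $\Psi'$ by strict regular-orderedness with repetitions), and set $\mu=\sum_e n(e)^2$ over the equations currently in the list. The substitution $\varx\rar\cA\conc\varx$ leaves each $n(e)$ unchanged, $\varx\rar\empt$ only lowers them, reduction of a common leading or trailing term never raises any $n(e)$, and deleting an equation that has reduced to $\empt\seq\empt$ changes $\mu$ by $0$; finally a split replaces one value $m$ of $n(\cdot)$ by two values $m_1,m_2$ with $m_1+m_2=m$, so $m_1^2+m_2^2\le m^2$, strictly when $m_1,m_2\ge 1$. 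On an infinite path the two pieces produced by a proper split are reduced and not variable-free — a reduced variable-free equation being either the tautology $\empt\seq\empt$ (contributing $0$ to $\mu$) or a contradiction, which would make the path finite — so $m_1,m_2\ge 1$ and every split strictly decreases $\mu$. Since $\mu$ is a non-negative integer bounded by its initial value, only finitely many splits can occur along the path.

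The delicate point in item~(3) is exactly the bookkeeping behind the claim that, on an infinite path, every split produces two reduced, non-variable-free equations: that the shortest var-permutated prefix always yields a reduced prefix equation (else a common last term of $\pref_1$ and $\pref_2$ could be stripped, giving a shorter var-permutated prefix), and that the reduced suffix equation cannot collapse to $\empt\seq\empt$ because the original equation was reduced; hence no piece is variable-free without the path terminating. Threading this exactly — and reading ``split'' as a productive decomposition rather than the degenerate one of an equation into itself plus $\empt\seq\empt$ — is what will require the most care, and Proposition~\ref{Prop:consimprop} can be invoked to corroborate the analysis for the sub-case of equations with var-permutated sides.
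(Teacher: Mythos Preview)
Your proof is correct and, for items~(1) and~(2), essentially identical to the paper's: both argue that the variable sequences of the prefixes must be equal prefixes of the common variable sequence (item~1), and both observe that a reduced strictly regular-ordered equation with repetitions has one side starting with a variable and the other with a letter, so the narrowing $\varx\rar\vary\conc\varx$ is never compatible (item~2).

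For item~(3) you take a mildly different route. The paper argues directly: by item~(2) the total number of variable occurrences $\sum_j(|\Phi_j|_{\varx_i}+|\Psi_j|_{\varx_i})$ never exceeds the initial count, and every (proper) split produces two equations each containing at least one variable occurrence on each side, so the number of equations in the list is bounded and hence so is the number of splits. You instead introduce the potential $\mu=\sum_e n(e)^2$ and show it strictly decreases at each proper split. Both arguments rest on the same invariant (bounded total variable occurrences, preserved by item~(2)), but your quadratic potential makes the bookkeeping more self-contained: it handles the interaction with $\varx\rar\empt$ and with removal of trivial equations uniformly, whereas the paper's linear counting argument leaves those interactions implicit. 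Your explicit treatment of the degenerate ``split'' (prefix equals the whole equation, suffix empty) and of variable-free suffixes is more careful than the paper's two-sentence proof, and is exactly what is needed to justify $m_1,m_2\ge 1$ on an infinite path.
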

\begin{proof}
\begin{enumerate}
\item For every variable $\varx_i\in\VAlph$, the var-permutated property gives $|\Phi_1|_{\varx_i}\seq|\Psi_1|_{\varx_i}$, and hence $|\Phi_2|_{\varx_i}\seq|\Psi_2|_{\varx_i}$. The order of the variable occurrences in $\Phi\seq\Psi$ is preserved in the prefixes and suffixes as well.
\item Consider the variable $\varx$ leading in $\Phi$ (and occurring in $\Psi$ before any other variable). Then $\Phi\seq\varx\conc\Phi'$, $\Psi\seq\Psi_0\conc\varx\conc\Psi'$ (or vice versa), where $\Psi_0\in\CAlph^+$. Let $\Psi_0$ be $\cA\conc\Psi'_0$ then an $\LogLang$-narrowing unfolding the equation $\varx\conc\Phi'\seq\Psi_0\conc\varx\conc\Psi'$ is either $\varx\rar\empt$ or $\varx\rar\cA\conc\varx$. Both of the substitutions preserve the strictly-regular-ordered property, as well as the splitting operation does.
\item The statement (2) implies that given a list $\langle \Phi_1\seq\Psi_1,\dots,\Phi_n\seq\Psi_n\rangle$ of equations labelling a node in the solution tree of $\Phi\seq\Psi$, for every $\varx_i\in\VAlph$, $\displaystyle\sum_{j=1}^n|\Phi_j|_{\varx_i}+|\Psi_j|_{\varx_i}\leq|\Phi|_{\varx_i}+|\Psi|_{\varx_i}$. And every split operation generates two equations containing at least two variables.      
\end{enumerate}
\end{proof}



Here we repeat Lemma~\ref{lemma:regcommeq} before giving its detailed proof.

\noindent{\textbf{Lemma 2.}} \emph{Given a strictly regular-ordered equation with repetitions $\Phi\seq\Psi$, every infinite path in its solution tree constructed with the use of the algorithm $\SolSplit(\Phi\seq\Psi)$ includes at least two nodes with equal labels.}

\begin{proof}\label{proof::regcommeq}
Every infinite path in the tree generated with the use of $\SolSplit(\Phi\seq\Psi)$ has an infinite subpath satisfying the following two conditions:
\begin{enumerate}
\item equations are never split along the subpath;
\item variables are never mapped to $\empt$ along the subpath.
\end{enumerate}
Let the first node in such a subpath be $\Node$, and the label of $\Node$ be a list $\langle\Theta\conc\varx\conc\Phi_1\seq\varx\conc\Psi_1,\dots, \Phi_n\seq\Psi_n\rangle$, $\Theta\in\CAlph^+$. There may be only the following three options. 
\begin{enumerate}
\item For every $j$, if $|\Phi_j|_{\varx}>0$ then $\Phi_j\seq\varx\conc\Phi'_j$, $\Psi_j\seq\Theta_j\conc\varx\conc\Psi'_j$ (or vice versa), $\Theta_j\in\CAlph^+$, $|\Phi'_{j}|_\varx\seq|\Psi'_{j}|_\varx\seq 0$. Given such an equation, a~substitution $\sigma\colon\varx\rar \term\conc\varx$ ($\term\in\CAlph$), followed by the reduction, preserves its length. The number of the equations in the lists labelling the nodes along the path, as well as the lengths of the equations, cannot grow, hence the $\LogLang$-narrowings do not generate fresh variables, and thus the set of the labels of the nodes along the path is finite.
\item Some equation $\Eq_j\colon \Phi_j\seq\Psi_j$ is of the form $\Phi_{0,j}\conc\varx\conc\Phi_{1,j}\conc\varx\conc\Phi_{2,j}\seq\varx\conc\Psi_{1,j}\conc\varx\conc\Psi_{2,j}$, where $\Phi_{0,j}\in\CAlph^+$, $|\Phi_{1,j}|_{\varx}\seq 0$, $|\Psi_{1,j}|_{\varx}\seq 0$. Equation $\Eq_j$ is strictly regular-ordered, hence $\forall k\,(|\Phi_{1,j}|_{\varx_k}\seq|\Psi_{1,j}|_{\varx_k})$. Let $m\seq \left| |\Phi_{0,j}|+|\Phi_{1,j}|-|\Psi_{1,j}|\right|$; $m\neq 0$, otherwise $\Eq_j$ would be split. Consider the path segment starting at $\Node$ and having the length $m$. The arcs in this segment are labelled by the substitutions $\varx\rar c_1\conc\varx,\dots,\varx\rar c_m\conc\varx$, where $c_i\in\CAlph$ are letters of $\Theta$. Let $\Theta'\seq c_1\dots c_m$. The ending node of the $m$-length path segment is labelled with the list containing the following equation: $\langle\dots,\Phi'_{0,j}\conc\varx\conc\Phi_{1,j}\conc\Theta'\conc\varx\conc\Phi'_{2,j}\seq\varx\conc\Psi_{1,j}\conc\Theta'\conc\varx\conc\Psi'_{2,j},\dots\rangle$, $|\Phi'_{0,j}|\seq|\Phi_{0,j}|$. If $|\Phi_{0,j}|+|\Phi_{1,j}|-|\Psi_{1,j}|>0$, then the prefixes $\Phi'_{0,j}\conc\varx\conc\Phi_{1,j}$ and $\varx\conc\Psi_{1,j}\conc\Theta$ are var-permutated, otherwise the prefixes $\Phi'_{0,j}\conc\varx\conc\Phi_{1,j}\conc\Theta$ and $\varx\conc\Psi_{1,j}$ are var-permutated. In any case, there exists such a $k\leq m$ that\footnote{$k<m$, if $\Phi_{1,j}$ or $\Psi_{1,j}$ do not end with variables, and $k=m$ otherwise.} a split takes place in $k$-th node along the path segment.
\item Some equation $\Eq_j\colon \Phi_j\seq\Psi_j$ is of the form $\Phi_{1,j}\conc\varx\conc\Phi_{2,j}\seq\Psi_{1,j}\conc\varx\conc\Psi_{2,j}$, $|\Phi_{1,j}|_{\varx}\seq 0$, $|\Psi_{1,j}|_{\varx}\seq 0$, $\Phi_{1,j},\Psi_{1,j}\notin\CAlph^+$. Let $m\seq \left| |\Phi_{1,j}|-|\Psi_{1,j}|\right|$. The same arguments show that given such an $\Eq_j$ a~split would occur along the path at most after $m$ substitutions. If $\Psi_{1,j}$ or $\Phi_{1,j}$ do not end with a variable, then the split can be applied earlier. That case is given in  \figurename~\ref{fig:proofTh}, where $\Phi_{1,j}\seq\Phi'_0\conc\vary\conc\Phi'_1$, $\Psi_{1,j}\seq\Psi'_0\conc\vary\conc\Psi'_1$, $\Phi'_1,\Psi'_1\in\CAlph^+$.
\end{enumerate}
Thus, either the nodes with the equal labels exist along the given subpath or a split is constructed. That contradicts the choice of the subpath.
\end{proof}

\noindent
\begin{figure}[t]
\begin{tabular}{lll}
\footnotesize
\begin{tikzpicture}[scale=0.9]
\draw (0,0) rectangle (4.5,0.5);
\draw [pattern=north east lines,  pattern color=gray] (0,0) rectangle (1,0.5);
\node at (0.52,0.25) {$\Phi'_0$};
\draw [fill=gray!10](1.4,0) rectangle (1.85,0.5);
\node at (1.2,0.22) {$\vary$};
\node at (1.65,0.25) {$\Phi'_1$};
\node at (2.05,0.25) {$\varx$};
\draw [pattern=north west lines,  pattern color=gray](2.25,0) rectangle (4.5,0.5);
\node at (3.3,0.25) {$\Phi_2$};
\draw [thick, ->] (1.85,0.8) -- (1.85,0.5);

\draw (0,-1) rectangle (5.5,-0.5);
\draw [pattern=north east lines,  pattern color=gray] (0,-1) rectangle (3,-0.5);
\node at (1.5,-0.75) {$\Psi'_0$};
\draw [fill=gray!10](3.4,-1) rectangle (4,-0.5);
\node at (3.2,-0.77) {$\vary$};
\node at (3.7,-0.75) {$\Psi'_1$};
\node at (4.22,-0.75) {$\varx$};
\draw [pattern=north west lines,  pattern color=gray](4.45,-1) rectangle (5.5,-0.5);
\node at (5,-0.75) {$\Psi_2$};
\draw [thick, ->] (4,-0.2) -- (4,-0.5);
\end{tikzpicture}
&\begin{tikzpicture}[scale=0.7]
\path [fill=white](0,0) rectangle (0.7,0.7);
\draw [thick, ->] (0.1,1.05) -- (0.55,1.05);
\draw [thick, ->] (0.1,1.25) -- (0.55,1.25);
\draw [thick, ->] (0.1,0.85) -- (0.55,0.85);
\end{tikzpicture}&
\begin{tikzpicture}[scale=0.9]\footnotesize
\draw (0,0) rectangle (5.9,0.5);
\draw [pattern=north east lines,  pattern color=gray] (0,0) rectangle (1,0.5);
\node at (0.52,0.25) {$\Phi'_0$};
\draw [fill=gray!10](1.4,0) rectangle (1.85,0.5);
\draw [fill=gray!10](1.85,0) rectangle (3.4,0.5);
\node at (1.2,0.22) {$\vary$};
\node at (1.65,0.25) {$\Phi'_1$};
\node at (2.6,0.25) {$\Theta$};
\node at (3.6,0.25) {$\varx$};
\draw [pattern=north west lines,  pattern color=gray] (3.8,0) rectangle (5.9,0.5);
\node at (4.8,0.25) {$\Phi'_2$};
\draw (0,-1) rectangle (6.9,-0.5);
\draw [pattern=north east lines,  pattern color=gray] (0,-1) rectangle (3,-0.5);
\node at (1.5,-0.75) {$\Psi'_0$};
\draw [fill=gray!10](3.4,-1) rectangle (4,-0.5);
\draw [fill=gray!10](4,-1) rectangle (5.55,-0.5);
\node at (3.2,-0.77) {$\vary$};
\node at (3.7,-0.75) {$\Psi'_1$};
\node at (4.72,-0.75) {$\Theta$};
\node at (5.75,-0.75) {$\varx$};
\draw [pattern=north west lines,  pattern color=gray](5.95,-1) rectangle (6.9,-0.5);
\node at (6.45,-0.75) {$\Psi'_2$};
\draw [dashed,thick](-0.2,-1.2) -- (-0.2,0.7);
\draw [dashed,thick](-0.2,0.7) -- (3.4,0.7);
\draw [dashed,thick](-0.2,-1.2) -- (3.4,-1.2);
\draw [dashed,ultra thick](3.4,-1.2) -- (3.4,0.7);
\end{tikzpicture}
\end{tabular}

\begin{tabular}{ll}
\begin{tikzpicture}
\draw [fill=gray!10](0,1)rectangle (0.4,1.4);
\node at (1.4,1.2) { --- in $\CAlph^*$;};
\draw [pattern=north west lines,  pattern color=gray](0,0)rectangle (0.4,0.4);
\node at (3.18,0.2) { --- in $\{\CAlph\cup\VAlph\}^+$, may contain $\varx$.};
\draw [pattern=north east lines,  pattern color=gray](0,.5)rectangle (0.4,0.9);
\node at (3.5,0.7) { --- in $\{\CAlph\cup\VAlph\}^+$, does not contain $\varx$;};
\end{tikzpicture}
\end{tabular}
\caption{Splitting a strictly regular-ordered equation with repetitions resulting from substitution $\varx\sigma\seq\Theta\conc\varx$, where $|\Theta|\seq |\Psi'_0|-|\Phi'_0|-|\Phi'_1|$.}
\label{fig:proofTh}
\end{figure}
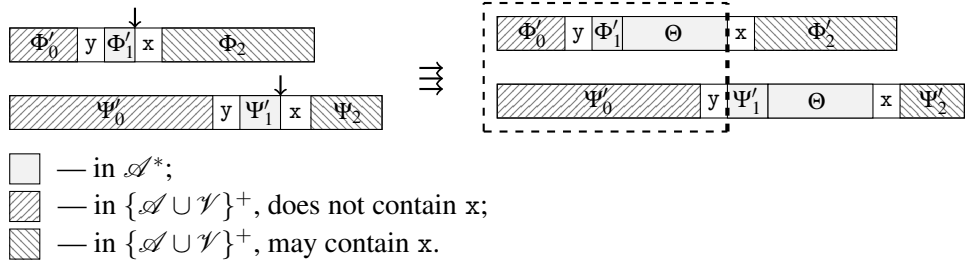

\subsection{Proofs and Auxiliary Propositions on One-Variable Equations}

The next two propositions use the following notations. The letters $T$ and $T'$ stand for words in $\CAlph\cup\{\varx\}$; $t_i$, $u_i$ are letters in $\CAlph$.

\begin{Prop}\label{prop:firstsym}
Given an equation $t_1\dots t_n\conc\varx\conc T \seq \varx\conc u_1\dots u_m \conc\varx\conc T'$ \st $n > 0$, $m\geq 0$, every infinite path of its solution graph contains a split.
\end{Prop}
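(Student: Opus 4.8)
The plan is a proof by contradiction that mirrors the proof of Lemma~\ref{lemma:regcommeq}. Suppose the solution graph of $\Eq_0\colon t_1\dots t_n\conc\varx\conc T\seq\varx\conc u_1\dots u_m\conc\varx\conc T'$ contains an infinite path $\pi$ on which no split occurs; I will produce a node of $\pi$ at which the simplification procedure of $\SolCount$ is forced to perform a left-split.

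First I would pin down the arcs of $\pi$. Since $\pi$ is infinite, none of its nodes is a leaf; in a one-variable equation the narrowing $\varx\rar\empt$ yields a ground equation (hence a leaf) and $\varx\rar\vary\conc\varx$ needs two distinct variables, so by \figurename~\ref{tab:classicsubstrules} every arc of $\pi$ carries a narrowing $\varx\rar c\conc\varx$ with $c\in\CAlph$. Because $\pi$ has no split, each of its nodes is labelled by an equation in the form that $\SolCount$ leaves after simplification: reduced, without non-empty var-permutated prefixes, and not discarded by the counting check (Proposition~\ref{Prop:count1}). We may also assume $\Eq_0$ itself is already reduced; reduction does not touch the leading constant block $t_1\dots t_n$ of the left side nor the leading $\varx$ of the right side, and in the remaining degenerate cases the solution graph is easily seen to be finite, so the claim is vacuous.

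The core of the argument is a structural invariant: the $i$-th node of $\pi$ is labelled by an equation $R_i\conc\varx\conc T_i\seq\varx\conc Q_i\conc\varx\conc T'_i$, where $R_i\in\CAlph^{\,n}$ is a cyclic rotation of $t_1\dots t_n$, $Q_i\in\CAlph^{\,m+i}$, and $T_i$, $T'_i$ are the images of $T$, $T'$ under the composition of the narrowings used up to node $i$. The base case $i=0$ is $\Eq_0$ (if $\Eq_0$ happens to be left-splittable we are done). For the step, the unique compatible non-erasing narrowing at node $i$ is $\varx\rar c\conc\varx$ with $c$ the leading letter of $R_i$ — the left side starts with the constant $c$, the right side with $\varx$; applying it and reducing cancels exactly the leading $c$ on both sides, producing $R_i''\conc c\conc\varx\conc\tilde T_i\seq\varx\conc Q_i\conc c\conc\varx\conc\tilde T'_i$, which is the claimed form with $R_{i+1}=R_i''\conc c$ (again a rotation of $t_1\dots t_n$) and $Q_{i+1}=Q_i\conc c$. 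The checkpoint here, which I expect to be the most delicate point, is that this narrowing creates no new common suffix, so that no further reduction shortens $Q_i$ or breaks the shape: this follows from a short case analysis on the last letters of the two sides of the reduced equation at node $i$, since $\varx\rar c\conc\varx$ leaves each such last letter — a constant, or a trailing occurrence of $\varx$ — unchanged, and hence preserves reducedness.

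Granting the invariant, the conclusion is immediate. Take $i=\max(0,n-m)$, so $|Q_i|=\max(m,n)\ge n$. In $R_i\conc\varx\conc T_i\seq\varx\conc Q_i\conc\varx\conc T'_i$ the first $n$ letters of the left side are the constants of $R_i$ and its first occurrence of $\varx$ is at position $n+1$, while the right side begins with $\varx$ followed by at least $n$ constants, namely a prefix of $Q_i$. Hence the length-$(n{+}1)$ prefixes $R_i\conc\varx$ and $\varx\conc(\text{the length-}n\text{ prefix of }Q_i)$ of the two sides are non-empty and var-permutated, and no shorter var-permutated prefix pair exists, because every non-empty prefix of the right side contains a variable whereas no left-side prefix of length at most $n$ does. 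By Proposition~\ref{Prop:split1} the simplification of $\SolCount$ left-splits this equation, so a split occurs at node $i$ of $\pi$, contradicting the choice of $\pi$. Therefore every infinite path of the solution graph of $\Eq_0$ contains a split.
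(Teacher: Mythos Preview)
Your argument is correct and follows essentially the same route as the paper's proof: both track the shape of the equation along an infinite path, observe that each non-erasing step cyclically rotates the length-$n$ constant prefix on the left while appending one letter to the constant block after the leading $\varx$ on the right, and conclude that after at most $n-m$ steps (or immediately, if $m\geq n$) the length-$(n{+}1)$ prefixes become var-permutated. Your version is more careful than the paper's in one respect: you explicitly verify that suffix reduction cannot disturb the invariant (since the substitution $\varx\rar c\conc\varx$ leaves the last term of each side unchanged), a detail the paper omits entirely.
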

\begin{proof}
If $m\geq n$, the initial equation generates a split. Let $n\seq m+k$, $k>0$, $\varx\sigma_i\seq t_i\conc\varx$. The $k$-th unfolding step observes that the equation is either already split or is of the following form:

$$t_{k+1}\dots t_n\conc t_1\dots t_k\conc\varx\conc T\sigma_k \seq \varx\conc u_1\dots u_m\conc t_1\dots t_k\conc \varx \conc T'\sigma_k.$$
This equation has var-permutated prefixes.
\end{proof}

Here we repeat Lemma~\ref{lemma:onevareq} before giving its whole proof.

\noindent{\textbf{Lemma 3.}} \emph{
Given an equation $\Phi\seq\Psi$, where $\Phi,\Psi\in \{\CAlph\cup \{\varx\}\}^*$, $|\Phi\conc\Psi|_{\varx}>2$, every infinite path of its solution graph contains a split.}
 
\begin{proof}\label{proof:onevareq}
If at least one side of the equation does not contain $\varx$, the equation always generates a finite solution tree. Thus, we consider only the following three cases.
\begin{enumerate}
\item $t_1\dots t_n\conc\varx\conc T \seq \varx\conc u_1\dots u_m \conc\varx\conc T'$, where $t_i\in\CAlph$, $u_j\in \CAlph$, $n > 0$. This case is considered in Proposition~\ref{prop:firstsym}.
\item $\varx\conc u_1\dots u_m\seq t_1\dots t_n\conc\varx\conc \Phi\conc\varx$, where $t_i\in\CAlph$, $u_i\in\CAlph$. Thus, the left-hand side of the equation contains the only occurrence of $\varx$.
\item $\varx\seq t_1\dots t_n\conc\varx\conc \Phi\conc\varx\conc u_1\dots u_k$, $k\geq 1$. The equation is contradictory according to Proposition~\ref{Prop:count1}.
\end{enumerate}
Only the case (2) is of our interest. We assume $m>n$, otherwise the equation is split. Let $m\seq n*i+j$, $j<n$, $\varx\sigma\seq t_1\dots t_n\varx$, $\varx\sigma_j\seq t_1\dots t_j\conc\varx$, $\xi\seq \sigma^i\conc\sigma_j$. Thus, $\sigma$ is the composition of $n$ elementary substitutions, and $\sigma_j$ is the composition of $j$ elementary substitutions. On the $m$-th unfolding step either the equation is split already or is of the following form: $\varx\conc u_1 \dots u_m \seq t_{j+1}\dots t_n\conc t_1\dots t_j\conc\varx \Phi\xi\conc (t_1\dots t_n)^i\conc t_1\dots t_j\conc\varx$. This equation has the var-permutated suffixes and is split.
\end{proof}
\newpage

\subsection{Source Pseudocode of Interpreters}\label{subsection::source}

We recall that the expression-type variables start with $\varx$, and may be subscripted (\eg $\varr$) or followed by other letters (\eg $\varms$), or both. The symbol-type variables are denoted with $\vars$, maybe subscripted. The constructor $\longconc$ is sometimes omitted, mainly in expressions enclosed in the parentheses.

The data encoding is given in~\figurename~\ref{fig:encoding}; thus, the equation variables are encoded with the two symbols enclosed in the parentheses. The parentheses are also used to form a structure of functions' arguments, \eg $\FunEq$ takes two arguments, where the second one is a pair; the function $\FunSim$ returns a pair. 

We use the following syntactic sugar in the source pseudocode of the interpreters. Variables $\nnumber$ and $\newnumber$ range over the natural numbers. The operations +1 and -1 may be applied to these variable values instead of the corresponding arithmetic operations taking numbers given in the unary Peano system. An equation $\Phi_{\rm{lhs}}\seq\Phi_{\rm{rhs}}$ is encoded as $(\Phi_{\rm{lhs}},\Phi_{\rm{rhs}})$ instead of $((\Phi_{\rm{lhs}})(\Phi_{\rm{rhs}}))$ (see also~\figurename~\ref{fig:encoding}). For example, the encoding of the equation $\cA\conc\varx\conc\vary\seq\varx\conc\vary\conc\cA$ is as follows: $(\cA\conc(\codex\conc\cX)\conc(\codex\conc\cY),(\codex\conc\cX)\conc(\codex\conc\cY)\conc\cA)$.

\subsubsection{Basic Interpreter}
\footnotesize
$$
\begin{array}{lll}
\FunGo(\varr,\vareqlist)\seq \FunEq(\varr, \FunSim(\empt,\vareqlist));\\
\\
\FunEq(\empt,(\empt,\empt))\seq \OutTrue; \\
\FunEq(((\codex\svar_x)\rar\empt)\longconc\varr, (\vareq_{\rm{LHS}},(\codex\svar_x)\longconc\vareq_{\rm{RHS}})) \\\quad\seq\FunEq(\varr,\FunSim(((\codex\svar_x)\rar\empt),\FunSubst((\codex\svar_x)\rar\empt,\empt,\vareq_{\rm{LHS}}),\FunSubst((\codex\svar_x)\rar\empt,\empt,\vareq_{\rm{RHS}})));\\
\FunEq(((\codex\svar_x)\rar\empt)\longconc\varr, ((\codex\svar_x)\longconc\vareq_{\rm{LHS}},\vareq_{\rm{RHS}})) \\\quad\seq \FunEq(\varr,\FunSim(((\codex\svar_x)\rar\empt),\FunSubst((\codex\svar_x)\rar\empt,\empt,\vareq_{\rm{LHS}}),\FunSubst((\codex\svar_x)\rar\empt,\empt,\vareq_{\rm{RHS}})));\\
\FunEq(((\codex\svar_x)\rar\vars\conc(\codex\svar_x))\longconc\varr, ((\codex\svar_x)\longconc\vareq_{\rm{LHS}},\vars\longconc\vareq_{\rm{RHS}})) \\\quad\seq \FunEq(\varr,\FunSim(((\codex\svar_x)\rar\vars\conc(\codex\svar_x)),
\\\qquad\qquad\qquad\qquad\quad\FunSubst((\codex\svar_x)\rar\vars\conc(\codex\svar_x),(\codex\varx),\vareq_{\rm{LHS}}),
\\\qquad\qquad\qquad\qquad\quad\FunSubst((\codex\svar_x)\rar\vars\conc(\codex\svar_x),\empt,\vareq_{\rm{RHS}})));\\
\FunEq(((\codex\svar_x)\rar\vars\conc(\codex\svar_x))\longconc\varr, (\vars\longconc\vareq_{\rm{LHS}},(\codex\svar_x)\longconc\vareq_{\rm{RHS}})) \\\quad\seq \FunEq(\varr,\FunSim(((\codex\svar_x)\rar\vars\conc(\codex\svar_x)),\\
\qquad\qquad\qquad\qquad\quad\FunSubst((\codex\svar_x)\rar\vars\conc(\codex\svar_x),\empt,\vareq_{\rm{LHS}}),
\\\qquad\qquad\qquad\qquad\quad\FunSubst((\codex\svar_x)\rar\vars\conc(\codex\svar_x),(\codex\svar_x),\vareq_{\rm{RHS}})));\\
\FunEq(((\codex\svar_x)\rar(\codex\svar_y)\conc(\codex\svar_x))\longconc\varr,((\codex\svar_y)\longconc\vareq_{\rm{LHS}},(\codex\svar_x)\longconc\vareq_{\rm{RHS}})) \\\quad\seq\FunEq(\varr,\FunSim(((\codex\svar_x)\rar(\codex\svar_y)\conc(\codex\svar_x)),\\
\qquad\qquad\qquad\qquad\quad\FunSubst((\codex\svar_x)\rar(\codex\svar_y)\conc(\codex\svar_x),\empt,\vareq_{\rm{LHS}}),
\\\qquad\qquad\qquad\qquad\quad\FunSubst((\codex\svar_x)\rar(\codex\svar_y)\conc(\codex\svar_x),(\codex\svar_x),\vareq_{\rm{RHS}})));\\
\FunEq(((\codex\svar_x)\rar(\codex\svar_y)\conc(\codex\svar_x))\longconc\varr,((\codex\svar_x)\longconc\vareq_{\rm{LHS}},(\codex\svar_y)\longconc\vareq_{\rm{RHS}})) \\\quad\seq\FunEq(\varr,\FunSim(((\codex\svar_x)\rar(\codex\svar_y)\conc(\codex\svar_x)),\\
\qquad\qquad\qquad\qquad\quad\FunSubst((\codex\svar_x)\rar(\codex\svar_y)\conc(\codex\svar_x),(\codex\svar_x),\vareq_{\rm{LHS}}),
\\\qquad\qquad\qquad\qquad\quad\FunSubst((\codex\svar_x)\rar(\codex\svar_y)\conc(\codex\svar_x),\empt,\vareq_{\rm{RHS}})));\\
\FunEq(\varr,(\vareq_{\rm{LHS}},\vareq_{\rm{RHS}})) \seq \OutFalse;
\\\\
/\hspace{-0.3ex}*\;\mathrm{The\;first\;argument\;of\;}\FunSubst\mathrm{\;is\;the\;substitution,\;the\;second\;one\;serves\;as\;an\;accumulator.}\;*\hspace{-0.3ex}/\\
\FunSubst((\codex\svar_x)\rar\varv,\varres,\empt)\seq\varres;\\
\FunSubst((\codex\svar_x)\rar\varv,\varres,(\codex\svar_x)\longconc\varexpr)\seq
\FunSubst((\codex\svar_x)\rar\varv,\varres\longconc\varv,\varexpr);\\
\FunSubst((\codex\svar_x)\rar\varv,\varres,\vars\longconc\varexpr)\seq
\FunSubst((\codex\svar_x)\rar\varv,\varres\longconc\vars,\varexpr);\\
\FunSubst((\codex\svar_x)\rar\varv,\varres,(\codex\svar_y)\longconc\varexpr)\seq
\FunSubst((\codex\svar_x)\rar\varv,\varres\longconc(\codex\svar_y),\varexpr);
\\ \\
\FunSim(\varx_{\rm{subst}},(\codex\svar_x)\longconc\varx_{\rm{LHS}},(\codex\svar_x)\longconc\varx_{\rm{LHS}})\seq
\FunSim(\varx_{\rm{subst}},\varx_{\rm{LHS}},\varx_{\rm{LHS}}); \\
\FunSim(\varx_{\rm{subst}},\vars\longconc\varx_{\rm{LHS}},\vars\longconc\varx_{\rm{LHS}})\seq
\FunSim(\varx_{\rm{subst}},\varx_{\rm{LHS}},\varx_{\rm{LHS}}); \\
\FunSim(\varx_{\rm{subst}},\vars_1\longconc\varx_{\rm{LHS}},\vars_2\longconc\varx_{\rm{LHS}})\seq
(\vars_1,\vars_2); \\
\FunSim(\varx_{\rm{subst}},\varx_{\rm{LHS}}\longconc(\codex\svar_x),\varx_{\rm{LHS}}\longconc(\codex\svar_x))\seq
\FunSim(\varx_{\rm{subst}},\varx_{\rm{LHS}},\varx_{\rm{LHS}}); \\
\FunSim(\varx_{\rm{subst}},\varx_{\rm{LHS}}\longconc\vars,\varx_{\rm{LHS}}\longconc\vars)\seq
\FunSim(\varx_{\rm{subst}},\varx_{\rm{LHS}},\varx_{\rm{LHS}}); \\
\FunSim(\varx_{\rm{subst}},\varx_{\rm{LHS}}\longconc\vars_1,\varx_{\rm{LHS}}\longconc\vars_2)\seq
(\vars_1,\vars_2); \\
\FunSim(\varx_{\rm{subst}},\varx_{\rm{LHS}},\varx_{\rm{LHS}})\seq
(\varx_{\rm{LHS}},\varx_{\rm{RHS}}); 
\end{array} $$
\newpage
\normalsize
\subsubsection{Splitting Interpreter}
This interpreter has the following refinements as compared to $\WeqIntBase$. In manipulates a list of equations rather than a single equations, and uses additional simplifying functions.
\begin{itemize}
\item The functions $\FunSim$ and $\FunSort$ use an additional argument $\nnumber$ --- a natural number which is 0 if the equations in the list are unchecked or contradictory and is the length of the list otherwise. This argument is used as the annotation that prevents unwanted fold operations in process trees. 
\item The second argument of the function $\FunEq$ is a list of equations concatenated with the natural number $\nnumber$, described above. 
\item The function $\FunSplit$ and the auxiliary multiset-handling function are added. In order to guarantee that all the equations resulting from a split are reduced, we introduce the $\FunSimPrim$ function reducing a given single equation. The new function $\FunSort$ transforms a list of equations to a single unsatisfiable equation if at least one contradiction is found in the list, otherwise the function $\FunSort$ counts how many equations are included in the list.
\item A number of rewriting rules marked with the corresponding comments are given in a sugared syntax. If a symbol and a variable are treated in the same way, then instead of the two (or four) rules we write the only one, where the term considered is replaced by the letter $t$, maybe subscripted.
\end{itemize}

\footnotesize
$$
\begin{array}{lll}
\FunGo(\varr,\vareqlist) \seq \FunEq(\varr, \FunSim(0,\empt,\empt,\vareqlist));\\
\\
\FunEq(\empt,(\nnumber)\longconc((\empt,\empt))) \seq \OutTrue; \\
\FunEq(\varr,(\nnumber)\longconc((\empt,\empt)\longconc\othereq)) \seq \FunEq(\varr,(\nnumber-1)\longconc\othereq); \\
\FunEq(((\codex\svar_x)\rar\empt)\longconc\varr,(\nnumber)\longconc((\vareq_{\rm{LHS}},(\codex\svar_x)\conc\vareq_{\rm{RHS}})\longconc\othereq)) \\\quad\seq \FunEq(\varr,\FunSim(\nnumber,((\codex\svar_x)\rar\empt),\empt,
\\\qquad\qquad\qquad\qquad\quad(\FunSubst((\codex\svar_x)\rar\empt,\empt,\vareq_{\rm{LHS}}),\FunSubst((\codex\svar_x)\rar\empt,\empt,\vareq_{\rm{RHS}}))
\\\qquad\qquad\qquad\qquad\quad\longconc\FunSubstAll((\codex\svar_x)\rar\empt,\othereq)));\\
\FunEq(((\codex\svar_x)\rar\empt)\longconc\varr, (\nnumber)\longconc(((\codex\svar_x)\conc\vareq_{\rm{LHS}},\vareq_{\rm{RHS}})\longconc\othereq)) \\\quad \seq \FunEq(\varr,\FunSim(\nnumber,((\codex\svar_x)\rar\empt),\empt,
\\\qquad\qquad\qquad\qquad\quad(\FunSubst((\codex\svar_x)\rar\empt,\empt,\vareq_{\rm{LHS}}),\FunSubst((\codex\svar_x)\rar\empt,\empt,\vareq_{\rm{RHS}}))
\\\qquad\qquad\qquad\qquad\quad\longconc\FunSubstAll((\codex\svar_x)\rar\empt,\othereq)));\\
\FunEq(((\codex\svar_x)\rar\vars\conc(\codex\svar_x))\longconc\varr, (\nnumber)\longconc(((\codex\svar_x)\conc\vareq_{\rm{LHS}},\vars\conc\vareq_{\rm{RHS}})\longconc\othereq)) \\\quad\seq \FunEq(\varr,\FunSim(\nnumber,((\codex\svar_x)\rar\vars\conc(\codex\svar_x)),\empt,
\\\qquad\qquad\qquad\qquad\quad(\FunSubst((\codex\svar_x)\rar\vars\conc(\codex\svar_x),(\codex\svar_x),\vareq_{\rm{LHS}}),\FunSubst((\codex\svar_x)\rar\vars\conc(\codex\svar_x),\empt,\vareq_{\rm{RHS}}))
\\\qquad\qquad\qquad\qquad\quad\longconc\FunSubstAll((\codex\svar_x)\rar\vars\conc(\codex\svar_x),\othereq)));\\
\FunEq(((\codex\svar_x)\rar\vars\conc(\codex\svar_x))\longconc\varr, (\nnumber)\longconc((\vars\conc\vareq_{\rm{LHS}},(\codex\svar_x)\conc\vareq_{\rm{RHS}})\longconc\othereq)) \\\quad\seq \FunEq(\varr,\FunSim(\nnumber,((\codex\svar_x)\rar\vars\conc(\codex\svar_x)),\empt,
\\\qquad\qquad\qquad\qquad\quad(\FunSubst((\codex\svar_x)\rar\vars\conc(\codex\svar_x),\empt,\vareq_{\rm{LHS}}),\FunSubst((\codex\svar_x)\rar\vars\conc(\codex\svar_x),(\codex\svar_x),\vareq_{\rm{RHS}}))
\\\qquad\qquad\qquad\qquad\quad\longconc\FunSubstAll((\codex\svar_x)\rar\vars\conc(\codex\svar_x),\othereq)));\\
\FunEq(((\codex\svar_x)\rar(\codex\svar_y)\conc(\codex\svar_x))\longconc\varr, (\nnumber)\longconc(((\codex\svar_y)\conc\vareq_{\rm{LHS}},(\codex\svar_x)\conc\vareq_{\rm{RHS}})\longconc\othereq)) \\\quad\seq \FunEq(\varr,\FunSim(\nnumber,((\codex\svar_x)\rar(\codex\svar_y)\conc(\codex\svar_x)),\empt,
\\\qquad\qquad\qquad\qquad\quad(\FunSubst((\codex\svar_x)\rar(\codex\svar_y)\conc(\codex\svar_x),\empt,\vareq_{\rm{LHS}}),\FunSubst((\codex\svar_x)\rar(\codex\svar_y)\conc(\codex\svar_x),(\codex\svar_x),\vareq_{\rm{RHS}}))
\\\qquad\qquad\qquad\qquad\quad\longconc\FunSubstAll((\codex\svar_x)\rar(\codex\svar_y)\conc(\codex\svar_x),\othereq)));\\
\FunEq(((\codex\svar_x)\rar(\codex\svar_y)\conc(\codex\svar_x))\longconc\varr,(\nnumber)\longconc(((\codex\svar_x)\conc\vareq_{\rm{LHS}},(\codex\svar_y)\conc\vareq_{\rm{RHS}})\longconc\othereq)) \\\quad\seq \FunEq(\varr,\FunSim(\nnumber,((\codex\svar_x)\rar(\codex\svar_y)\conc(\codex\svar_x)),\empt,
\\\qquad\qquad\qquad\qquad\quad(\FunSubst((\codex\svar_x)\rar(\codex\svar_y)\conc(\codex\svar_x),(\codex\svar_x),\vareq_{\rm{LHS}}),\FunSubst((\codex\svar_x)\rar(\codex\svar_y)\conc(\codex\svar_x),\empt,\vareq_{\rm{RHS}}))
\\\qquad\qquad\qquad\qquad\quad\longconc\FunSubstAll((\codex\svar_x)\rar(\codex\svar_y)\conc(\codex\svar_x),\othereq)));\\
\FunEq(\varr,(\nnumber)\longconc\othereq)\seq \OutFalse;\\\\
\FunSubst((\codex\svar_x)\rar\varv,\varres,\empt)\seq\varres;\\
\FunSubst((\codex\svar_x)\rar\varv,\varres,(\codex\svar_x)\longconc\varexpr)\seq
\FunSubst((\codex\svar_x)\rar\varv,\varres\longconc\varv,\varexpr);\\
\FunSubst((\codex\svar_x)\rar\varv,\varres,\vars\longconc\varexpr)\seq
\FunSubst((\codex\svar_x)\rar\varv,\varres\longconc\vars,\varexpr);\\
\FunSubst((\codex\svar_x)\rar\varv,\varres,(\codex\svar_y)\longconc\varexpr)\seq
\FunSubst((\codex\svar_x)\rar\varv,\varres\longconc(\codex\svar_y),\varexpr);
\end{array}
$$
$$
\begin{array}{lll}
\FunSubstAll((\codex\svar_x)\rar\varv,(\vareq_{\rm{LHS}},\vareq_{\rm{RHS}})
\longconc\othereq)\\\quad\seq(\FunSubst((\codex\svar_x)\rar\varv,\empt,\vareq_{\rm{LHS}}),\FunSubst((\codex\svar_x)\rar\varv,\empt,\vareq_{\rm{RHS}}))\longconc\FunSubstAll((\codex\svar_x)\rar\varv,\othereq);\\
\FunSubstAll((\codex\svar_x)\rar\varv,\empt)\seq\empt;
\\\\
\FunSim(\nnumber,\varx_{\rm{subst}},\varres,((\codex\svar_x)\conc\varx_{\rm{LHS}},(\codex\svar_x)\conc\varx_{\rm{RHS}})\longconc\othereq)\\\quad\seq
\FunSim(\nnumber,\varx_{\rm{subst}},\varres,(\varx_{\rm{LHS}},\varx_{\rm{RHS}})\longconc\othereq); \\
\FunSim(\nnumber,\varx_{\rm{subst}},\varres,(\vars\conc\varx_{\rm{LHS}},\vars\conc\varx_{\rm{RHS}})\longconc\othereq)\\\quad\seq
\FunSim(\nnumber,\varx_{\rm{subst}},\varres,(\varx_{\rm{LHS}},\varx_{\rm{RHS}})\longconc\othereq); \\
\FunSim(\nnumber,\varx_{\rm{subst}},\varres,(\varx_{\rm{LHS}}\conc(\codex\svar_x),\varx_{\rm{RHS}}\conc(\codex\svar_x))\longconc\othereq)\\\quad\seq\FunSim(\nnumber,\varx_{\rm{subst}},\varres,(\varx_{\rm{LHS}},\varx_{\rm{RHS}})\longconc\othereq); \\
\FunSim(\nnumber,\varx_{\rm{subst}},\varres,(\varx_{\rm{LHS}}\conc\vars,\varx_{\rm{RHS}}\conc\vars)\longconc\othereq)\\\quad\seq\FunSim(\nnumber,\varx_{\rm{subst}},\varres,(\varx_{\rm{LHS}},\varx_{\rm{RHS}})\longconc\othereq); \\
\FunSim(\nnumber,\varx_{\rm{subst}},\varres,(\vars_1\conc\varx_{\rm{LHS}},\vars_2\conc\varx_{\rm{RHS}})\longconc\othereq)\seq (0)\longconc(\vars_1,\vars_2); \\
\FunSim(\nnumber,\varx_{\rm{subst}},\varres,(\varx_{\rm{LHS}}\conc\vars_1,\varx_{\rm{RHS}}\conc\vars_2)\longconc\othereq)\seq (0)\longconc(\vars_1,\vars_2); \\
\FunSim(\nnumber,\varx_{\rm{subst}},\varres,(\empt,\empt)\longconc\othereq) \seq\FunSim(\nnumber,\varx_{\rm{subst}},\varres,\othereq);\\
\FunSim(\nnumber,\varx_{\rm{subst}},\varres,(\varx_{\rm{LHS}},\varx_{\rm{RHS}})\longconc\othereq) \\\quad\seq
\FunSim(\nnumber,\varx_{\rm{subst}},\varres\longconc\FunSplit(\empt,\OutNone\longconc((\Constdef 0))\longconc((\Constdef 0)),\empt,\empt,\varx_{\rm{LHS}},\varx_{\rm{RHS}}),\othereq); \\
\FunSim(\nnumber,\varx_{\rm{subst}},\varres,\empt)\seq \FunSort(\nnumber,0,\varx_{\rm{subst}},\empt,\varres);
\\ \\
\FunSort(\nnumber,\newnumber,\varx_{\rm{subst}},\varres,(\vars_1,\vars_2)\longconc\othereq)\seq (0)\longconc(\vars_1,\vars_2);\\
\FunSort(\nnumber,\newnumber,\varx_{\rm{subst}},\varres,(\varx_{\rm{eq}})\longconc\othereq)\\\quad\seq\FunSort(\nnumber,\newnumber+1,\varx_{\rm{subst}},\varres\longconc (\varx_{\rm{eq}}),\othereq);\\
\FunSort(\nnumber,\newnumber,\varx_{\rm{subst}},\varres,\empt)\seq(\newnumber)\longconc\varres;
\\\\
\FunSimPrim((\codex\conc\svar_x)\longconc\vareq_{\rm{LHS}},(\codex\conc\svar_x)\longconc\vareq_{\rm{RHS}})\seq\FunSimPrim(\vareq_{\rm{LHS}},\vareq_{\rm{RHS}});\\
\FunSimPrim(\vars\longconc\vareq_{\rm{LHS}},\vars\longconc\vareq_{\rm{RHS}})\seq\FunSimPrim(\vareq_{\rm{LHS}},\vareq_{\rm{RHS}});\\
\FunSimPrim(\vareq_{\rm{LHS}}\longconc(\codex\conc\svar_x),\vareq_{\rm{RHS}}\longconc(\codex\conc\svar_x))\seq\FunSimPrim(\vareq_{\rm{LHS}},\vareq_{\rm{RHS}});\\
\FunSimPrim(\vareq_{\rm{LHS}}\longconc\vars,\vareq_{\rm{RHS}}\longconc\vars)\seq\FunSimPrim(\vareq_{\rm{LHS}},\vareq_{\rm{RHS}});\\
\FunSimPrim(\svar_1\longconc\vareq_{\rm{LHS}},\svar_2\longconc\vareq_{\rm{RHS}})\seq(\svar_1,\svar_2);\\
\FunSimPrim(\vareq_{\rm{LHS}}\longconc\svar_1,\vareq_{\rm{RHS}}\longconc\svar_2)\seq(\svar_1,\svar_2);\\
\FunSimPrim(\vareq_{\rm{LHS}},\vareq_{\rm{RHS}})\seq(\vareq_{\rm{LHS}},\vareq_{\rm{RHS}});
\\ \\
/\hspace{-0.3ex}*\;\mathrm{Here\;we\;use\;a\;syntactic\;sugar:}\;t_i\mathrm{\;denotes\;either\;an\;encoded\;variable\;or\;a\;symbol\;(one\;rule\;corresponds\;to}\\
\mathrm{\quad\;\;the\;four\;desugared\;rules).\;The\;last\;two\;arguments\;recursively\;decrease.}\;*\hspace{-0.3ex}/\\
\FunSplit(\varres,\OutNone\longconc(\varms_1)\longconc(\varms_2),\varpref_{\rm{LHS}},\varpref_{\rm{RHS}},\varterm_1\longconc\vareq_{\rm{LHS}},\varterm_2\longconc\vareq_{\rm{RHS}}) \\\quad \seq \FunSplit(\varres,
\\\qquad\qquad\quad\FunCountMS(\FunInclude(\varterm_1,\empt,(\varms_1)),\FunInclude(\varterm_2,\empt,(\varms_2))),\varpref_{\rm{LHS}}\longconc\varterm_1,\varpref_{\rm{RHS}}\longconc\varterm_2,\vareq_{\rm{LHS}},\vareq_{\rm{RHS}});\\
\FunSplit(\varres,\OutFalse\longconc(\varms_1)\longconc(\varms_2),\varpref_{\rm{LHS}},\varpref_{\rm{RHS}},\varterm_1\longconc\vareq_{\rm{LHS}},\varterm_2\longconc\vareq_{\rm{RHS}}) \\\quad\seq \FunSplit(\varres,
\\\qquad\qquad\quad\FunCountMS(\FunInclude(\varterm_1,\empt,(\varms_1)),\FunInclude(\varterm_2,\empt,(\varms_2))),\varpref_{\rm{LHS}}\longconc\varterm_1,\varpref_{\rm{RHS}}\longconc\varterm_2,\vareq_{\rm{LHS}},\vareq_{\rm{RHS}});\\
\FunSplit(\varres,\OutTrue\longconc(\varms_1)\longconc(\varms_2),\varpref_{\rm{LHS}},\varpref_{\rm{RHS}},\vareq_{\rm{LHS}},\vareq_{\rm{RHS}})\\\quad\seq
\FunSplit(\varres\longconc(\varpref_{\rm{LHS}},\varpref_{\rm{RHS}}),
\OutNone\longconc((\Constdef 0))\longconc((\Constdef 0)),\empt,\empt,\vareq_{\rm{LHS}},\vareq_{\rm{RHS}});
\\
\FunSplit(\varres,\vars\longconc(\varms_1)\longconc(\varms_2),\empt,\empt,\empt,\empt)\seq \varres;\\
\FunSplit(\varres,\vars\longconc(\varms_1)\longconc(\varms_2),\varpref_{\rm{LHS}},\varpref_{\rm{RHS}},\vareq_{\rm{LHS}},\vareq_{\rm{RHS}})\\\quad\seq
\FunSimPrim(\varpref_{\rm{LHS}}\longconc\vareq_{\rm{LHS}},\varpref_{\rm{RHS}}\longconc\vareq_{\rm{RHS}})\longconc\varres;
\\\\\FunInclude(\vars,\varprev,(\varms\longconc(\Constdef\nnumber)))\seq
(\varprev\longconc\varms\longconc(\Constdef\nnumber+1));\\
\FunInclude((\codex\svar_x),\varprev,(((\codex\svar_x)\;\nnumber)\longconc\varms))\seq
(\varprev\longconc((\codex\svar_x)\;\nnumber+1)\longconc\varms);\\
\FunInclude((\codex\svar_x),\varprev,(((\codex\svar_y)\;\nnumber)\longconc\varms))\seq
\FunInclude((\codex\svar_x),\varprev\longconc((\codex\svar_y)\;\nnumber),\varms);\\
\FunInclude((\codex\svar_x),\varprev,((\Constdef\nnumber)))\seq
(((\codex\svar_x)\;1)\longconc\varprev\longconc(\Constdef\nnumber));
\\\\
/\hspace{-0.3ex}*\;\mathrm{Here\;we\;use\;a\;syntactic\;sugar:}\;t\mathrm{\;denotes\;an\;expression\;enclosed\;in\;parentheses\;having\;either}\\
\mathrm{\quad\;an\;encoded\;variable\;or\;}\Constdef
\mathrm{\;as\;the\;prefix\;(one\;rule\;corresponds\;to\;the\;two\;desugared\;rules).}\;*\hspace{-0.3ex}/\\
\FunCountMS((\varterm\longconc\varms_1),(\varms_2))\seq\FunAreEqual(\varms_1,\FunElMinus(\varterm,\empt,\varms_2))\longconc(\varterm\longconc\varms_1)\longconc(\varms_2);
\end{array}
$$
\newpage
$$
\begin{array}{lll}
\FunAreEqual(\varms_1,\varms_2\longconc\OutFalse)\seq\OutFalse; \\
\FunAreEqual(\empt,\empt)\seq\OutTrue; \\
\FunAreEqual(\varms_1,\empt)\seq\OutFalse; \\
\FunAreEqual(\empt,\varms_2)\seq\OutFalse; \\
\FunAreEqual(((\codex\vars)\conc\nnumber)\longconc\varms_1,\varms_2)\seq \FunAreEqual(\varms_1,\FunElMinus(((\codex\vars)\conc\nnumber),\empt,\varms_2));\\
\FunAreEqual((\Constdef\conc\nnumber)\longconc\varms_1,\varms_2)\seq \FunAreEqual(\varms_1,\FunElMinus((\Constdef\conc\nnumber),\empt,\varms_2));
\\ \\
\FunElMinus((\Constdef 0),\empt,\varms\longconc(\Constdef 0))\seq
\varms; \\
\FunElMinus((\Constdef\nnumber_1+1),\empt,\varms\longconc(\Constdef\nnumber_2+1))\\\qquad\seq
\FunElMinus((\Constdef\nnumber_1),\empt,\varms\longconc(\Constdef\nnumber_2)); \\
\FunElMinus(((\codex\svar_x)\;0),\varprev,((\codex\svar_x)\;0)\longconc\varms)\seq
\varprev\longconc\varms; \\
\FunElMinus(((\codex\svar_x)\;\nnumber_1+1),\varprev,((\codex\svar_x)\;\nnumber_2+1)\longconc\varms)\\\qquad\seq
\FunElMinus(((\codex\svar_x)\;\nnumber_1),\varprev,((\codex\svar_x)\;\nnumber_2)\longconc\varms); \\
\FunElMinus(((\codex\svar_x)\;\nnumber_1),\varprev,((\codex\svar_x)\;\nnumber_2)\longconc\varms)\seq\OutFalse; \\
\FunElMinus(((\codex\svar_x)\;\nnumber_1),\varprev,((\codex\svar_y)\;\nnumber_2)\longconc\varms)\\\qquad\seq\FunElMinus(((\codex\svar_x)\;\varcount_1),\varprev\longconc((\codex\svar_y)\;\nnumber_2),\varms); \\
\FunElMinus(((\codex\svar_x)\;\nnumber_1),\varprev,(\Constdef\nnumber_2))\seq\OutFalse;
\end{array}
$$

\normalsize
\subsubsection{Counting Interpreter}
This interpreter uses the function definitions given for the interpreter $\WeqIntSplit$ plus some additional functions, provided that the function $\FunElMinus$ is modified and the last rule of the function $\FunSplit$ is replaced with the following rewriting rule.

\footnotesize
$$
\noindent\begin{array}{ll}
/\hspace{-0.3ex}* \mathrm{This~rule~replaces~the~last~rule~of~\FunSplit~definition~given~in~}\WeqIntSplit.*\hspace{-0.3ex}/\\
\FunSplit(\varres,\vars\longconc(\varms_1)\longconc(\varms_2),\varpref_{\rm{LHS}},\varpref_{\rm{RHS}},\vareq_{\rm{LHS}},\vareq_{\rm{RHS}})\\\quad\seq
\FunSplitRight(\varres,
\OutNone\longconc((\Constdef 0))\longconc((\Constdef 0)),\empt,\empt,\FunSimPrim(\varpref_{\rm{LHS}}\longconc\vareq_{\rm{LHS}},\varpref_{\rm{RHS}}\longconc\vareq_{\rm{RHS}}));\\
\end{array}
$$

\normalsize
The additional function definitions are given below. The following definition replaces the version of the $\FunElMinus$ source code given in $\WeqIntSplit$ source code.

\footnotesize
$$
\begin{array}{ll}
\FunElMinus((\Constdef \nnumber_1), \empt, \varms\longconc(\Constdef \nnumber_2))\seq \varms\longconc\FunCountMinus(\nnumber_1,\nnumber_2);
\\
\FunElMinus(((\codex\vars)\;\nnumber_1),\varprev, ((\codex\vars)\;\nnumber_2)\longconc\varms)\seq
\varprev\longconc\varms\longconc\FunCountMinus(\nnumber_1,\nnumber_2);
\\
\FunElMinus(((\codex\vars)\;\nnumber_1),\varprev, (\Constdef \nnumber_2))\seq\OutGreater\longconc\OutFalse;\\
\FunElMinus(((\codex\vars_1)\;\nnumber_1), \varprev, ((\codex\vars_2)\;\nnumber_2)\longconc\varms)\\\quad\seq 	
\FunElMinus(((\codex\vars_1)\;\nnumber_1),\varprev\longconc((\codex\vars_2)\;\nnumber_2),\varms);
\\
\\
\FunCountMinus(0,0)\seq\empt;\\
\FunCountMinus(\nnumber_1+1,\nnumber_2+1)\seq\FunCountMinus(\nnumber_1,\nnumber_2);\\
\FunCountMinus(\empt,\nnumber)\seq\OutLesser\longconc\OutFalse;\\
\FunCountMinus(\nnumber,\empt)\seq\OutGreater\longconc\OutFalse;
\\\\
/\hspace{-0.3ex}*\;\mathrm{Here\;we\;use\;a\;syntactic\;sugar:}\;t_i\mathrm{\;denotes\;either\;an\;encoded\;variable\;or\;a\;symbol\;(one\;rule\;corresponds\;to}\\
\quad\;\;\mathrm{the\;four\;desugared\;rules).\;The\;last\;two\;arguments\;recursively\;decrease.}\;*\hspace{-0.3ex}/\\
\FunSplitRight(\varres,\vars\longconc(\varms_1)\longconc(\varms_2),\empt,\empt,\empt,\empt)\seq\varres;\\
\FunSplitRight(\varres,\OutNone\longconc(\varms_1)\longconc(\varms_2),\varsuff_{\rm{LHS}},\varsuff_{\rm{RHS}},\vareq_{\rm{LHS}}\longconc\varterm_1,\vareq_{\rm{RHS}}\longconc\varterm_2)\\\quad\seq\FunSplitRight(\varres,
\\\qquad\qquad\quad\FunCountMS(\FunInclude(\varterm_1,\empt,(\varms_1)),\FunInclude(\varterm_2,\empt,(\varms_2))), \varterm_1\longconc\varsuff_{\rm{LHS}},\varterm_2\longconc\varsuff_{\rm{RHS}},\vareq_{\rm{LHS}},\vareq_{\rm{RHS}});\\
\\
\FunSplitRight(\varres,\OutFalse\longconc(\varms_1)\longconc(\varms_2),\varsuff_{\rm{LHS}},\varsuff_{\rm{RHS}},\vareq_{\rm{LHS}}\longconc\varterm_1,\vareq_{\rm{RHS}}\longconc\varterm_2)\\\quad\seq\FunSplitRight(\varres,
\\\qquad\qquad\quad\FunCountMS(\FunInclude(\varterm_1,\empt,(\varms_1)),\FunInclude(\varterm_2,\empt,(\varms_2))), \varterm_1\longconc\varsuff_{\rm{LHS}},\varterm_2\longconc\varsuff_{\rm{RHS}},\vareq_{\rm{LHS}},\vareq_{\rm{RHS}});\\
\end{array}
$$$$\begin{array}{lll}
\FunSplitRight(\varres,\OutTrue\longconc(\varms_1)\longconc(\varms_2),\varsuff_{\rm{LHS}},\varsuff_{\rm{RHS}},\vareq_{\rm{LHS}},\vareq_{\rm{RHS}})\\\quad\seq\FunSplitRight(\varres\longconc(\varsuff_{\rm{LHS}},\,\varsuff_{\rm{RHS}}),\OutNone\longconc(\Constdef 0)\longconc(\Constdef 0),\empt,\empt,\vareq_{\rm{LHS}},\vareq_{\rm{RHS}});\\
\FunSplitRight(\varres,\vars\longconc(\varms_1)\longconc(\varms_2),\varsuff_{\rm{LHS}},\varsuff_{\rm{RHS}},\vareq_{\rm{LHS}},\vareq_{\rm{RHS}})\\\quad\seq\FunSubjEq(\FunYieldCheck(
\FunAddElstoMS (\vareq_{\rm{LHS}},(\varms_1)),\FunAddElstoMS (\vareq_{\rm{RHS}},(\varms_2))),
\\\qquad\qquad\qquad\qquad\quad\vareq_{\rm{LHS}}\longconc\varsuff_{\rm{LHS}},\vareq_{\rm{RHS}}\longconc\varsuff_{\rm{RHS}})\longconc\varres;
\\\\
\FunSubjEq(\OutFalse,\vareq_{\rm{LHS}},\vareq_{\rm{RHS}})\seq(\vareq_{\rm{LHS}},\vareq_{\rm{RHS}});\\
\FunSubjEq(\OutTrue,\vareq_{\rm{LHS}},\vareq_{\rm{RHS}})\seq(\cA,\cB);\\
\\
\FunSubtractEl(\OutGreater,\varms_1,\empt)\seq\OutTrue;\\
\FunSubtractEl(\OutLesser,\empt,\varms_2)\seq\OutTrue;\\
\FunSubtractEl(\OutGreater,((\codex\vars)\;\nnumber)\longconc\varms_1,\varms_2)\seq
\FunCheckInfo(\OutGreater,\FunElMinus(((\codex\vars)\;\nnumber),\empt,\varms_2),\varms_1);\\
\FunSubtractEl(\OutGreater,(\Constdef\nnumber)\longconc\varms_1,\varms_2)\seq
\FunCheckInfo(\OutGreater,\FunElMinus((\Constdef\nnumber),\empt,\varms_2),\varms_1);\\
\FunSubtractEl(\OutLesser,((\codex\vars)\;\nnumber)\longconc\varms_1,\varms_2)\seq
\FunCheckInfo(\OutLesser,\FunElMinus(((\codex\vars)\;\nnumber),\empt,\varms_2),\varms_1);\\
\FunSubtractEl(\OutLesser,(\Constdef\nnumber)\longconc\varms_1,\varms_2)\seq
\FunCheckInfo(\OutLesser,\FunElMinus((\Constdef\nnumber),\empt,\varms_2),\varms_1);\\
\FunSubtractEl(\vars,\varms_1,\varms_2)\seq\OutFalse;
\\ \\
\FunCheckInfo(\vars,\varms_2\longconc\vars\longconc\OutFalse,\varms_1)\seq\FunSubtractEl(\vars,\varms_1,\varms_2);\\
\FunCheckInfo(\OutGreater,\varms_2\longconc\OutLesser\longconc\OutFalse,\varms_1)\seq \OutFalse;\\
\FunCheckInfo(\OutLesser,\varms_2\longconc\OutGreater\longconc\OutFalse,\varms_1)\seq\OutFalse;\\
\FunCheckInfo(\vars,\varms_2,\varms_1)\seq\FunSubtractEl(\vars,\varms_1,\varms_2);
\\ \\
\FunYieldCheck(\varms_1\longconc(\Constdef\varcount_1),\varms_2\longconc(\Constdef\varcount_2))\\\quad\seq\FunYieldCheckAux(\FunElMinus ((\Constdef \varcount_1),\empt,(\Constdef \varcount_2)),\varms_1,\varms_2);
\\\\
\FunYieldCheckAux(\empt,\varms_1,\varms_2)\seq\OutFalse;\\
\FunYieldCheckAux(\varms\longconc\vars\longconc\OutFalse,\varms_1,\varms_2)\seq\FunSubtractEl(\vars,\varms_1,\varms_2);
\\\\
\FunAddElstoMS((\codex\vars)\longconc\varexpr,\varms)\seq\FunAddElstoMS(\varexpr,\FunInclude((\codex\vars),\empt,\varms));\\
\FunAddElstoMS(\vars\longconc\varexpr,\varms)\seq\FunAddElstoMS(\varexpr,\FunInclude(\vars,\empt,\varms));\\
\FunAddElstoMS(\empt,(\varms))\seq\varms;
\end{array}
$$

\end{document}